 \newtheorem{thm}{Theorem}[section]
 \newtheorem{cor}[thm]{Corollary}
 \newtheorem{lemma}[thm]{Lemma}
 \newtheorem{prop}[thm]{Proposition}
 \theoremstyle{definition}
 \theoremstyle{remark}
 \newtheorem{rem}[thm]{Remark}
 \newtheorem*{ex}{Example}
 \newtheorem{assumption}[thm]{Assumption}
 \numberwithin{equation}{section}
\newcommand{\R}{{\mathord{\mathbb R}}}
\newcommand{\C}{{\mathord{\mathbb C}}}
\newcommand{\Z}{{\mathord{\mathbb Z}}}
\newcommand{\E}{{\mathord{\mathbb E}}}
\newcommand{\be}{\begin{equation}}
\newcommand{\ee}{\end{equation}}
\newcommand{\bea}{\begin{eqnarray}}
\newcommand{\eea}{\end{eqnarray}}
\def\idty{{\mathchoice {\mathrm{1\mskip-4mu l}} {\mathrm{1\mskip-4mu l}} %
{\mathrm{1\mskip-4.5mu l}} {\mathrm{1\mskip-5mu l}}}}
\definecolor{Plum}{rgb}{.5,0,1}
\newcommand\mat[4]{\left(\begin{array}{cc} #1 & #2 \\ #3 & #4 \end{array}\right)}
\newcommand\eq[1]{(\ref{#1})}
\newcommand{\Tr}{\operatorname{{\rm Tr}}}
\numberwithin{equation}{section}
\begin{document}

\title[Area Law for Disordered Oscillator Systems]{An Area Law for the Bipartite Entanglement\\[5pt] of Disordered Oscillator Systems}

\author[B. Nachtergaele]{Bruno Nachtergaele$^1$}
\thanks{B.\ N.\ was supported in part by NSF grant DMS-1009502}
\address{$^1$ Department of Mathematics\\ University of California, Davis\\ Davis, CA 95616, USA}
\email{bxn@math.ucdavis.edu}

\author[R. Sims]{Robert Sims$^2$}
\thanks{R.\ S.\ was supported in part by NSF grants DMS-0757424 and DMS-1101345}
\address{$^2$ Department of Mathematics\\
University of Arizona\\
Tucson, AZ 85721, USA}
\email{rsims@math.arizona.edu}

\author[G. Stolz]{G\"unter Stolz$^3$}
\thanks{G.\ S.\ was supported in part by NSF grant DMS-1069320.}
\address{$^3$ Department of Mathematics\\
University of Alabama at Birmingham\\
Birmingham, AL 35294 USA}
\email{stolz@math.uab.edu}

\date{}

%\vspace{.3truein}
%\centerline{\bf Abstract}\mdeskip
%
\begin{abstract}
We prove an upper bound proportional to the surface area
for the bipartite entanglement of the ground state and thermal 
states of harmonic oscillator systems with
disorder, as measured by the logarithmic negativity. Our assumptions
are satisfied for some standard models that are almost surely gapless in
the thermodynamic limit.
\end{abstract}

\maketitle

%%%%%%%%%%%%%%%%%%%%%%%%%
%
%       Introduction
%
%
%%%%%%%%%%%%%%%%%%%%%%%%%%%%%

\section{Introduction} \label{sec:intro}

On the one hand, quantum computation and information processing depends in an essential way
on entangled quantum states and, on the other hand,  it is entanglement that is the most serious
limiting factor in the numerical simulation of many-body quantum systems. Both facts provide strong
motivation to study entanglement of the ground states and equilibrium states of important systems
such as oscillator lattice models. Furthermore, it was noted in \cite{Bombelli1986} that entanglement
implies a quantum contribution to black hole entropy. Consequently, the entanglement of
ground states and temperature states of extended systems has been intensely studied, with
establishing an area law bound as one of the main goals.

Vidal and Werner \cite{Vidal2002} made the case for using the {\em logarithmic negativity}
as a measure of entanglement and they showed how it can be computed in a number
of cases, including Gaussian states for bosonic systems. The logarithmic
negativity provides an upper bound for the widely used notions of entropy of entanglement
in the case of pure states and the distillable entanglement in the case of mixed states.
Following the ideas of Vidal and Werner, the logarithmic negativity has been calculated
or estimated for a number of deterministic bosonic systems, primarily lattice systems of
coupled harmonic oscillators \cite{Audenaert2002,Plenio2005,Cramer2006, Cramer2007}.
More recently, the logarithmic negativity has also been used to quantify entanglement in
relativistic quantum field theories \cite{Calabrese2012}.

In this paper we prove an upper area law bound for the logarithmic negativity for a 
class of disordered harmonic lattice models. The previous results of this type that are 
not restricted to one dimension, for either deterministic or disordered systems, all assumed 
the existence of a spectral gap above the ground state. For disordered systems, it is often 
no longer natural to suppose that there is a positive lower bound for the spectral gap uniform 
in the volume. Therefore, we will not make this assumption in this work.  In the special case of
one dimension an alternative approach may be possible following \cite{Brandao2012}.

For this paper we have opted to include sufficient background information so that we can 
give complete proofs of results that often have appeared in a more restricted setting in the
literature. E.g., in the Appendix we discuss the notion of partial transpose for operators on
a the tensor product of two separable Hilbert spaces. We also included a proof of the result 
by Vidal and Werner \cite{Vidal2002}  that the logarithmic negativity is an upper bound for
the entanglement entropy of a pure state in that setting. Similarly, in Section 
\ref{sec:diagonalization}  the calculation of the logarithmic negativity of a quasi-free state is 
also discussed in reasonable generality. The main results are stated and explained in Section \ref{sec:HOL}.

%%%%%%%%%%%%%%%%%%%%%%%%%
%
%       Set-Up + Main Results
%
%
%%%%%%%%%%%%%%%%%%%%%%%%%%%%%

\section*{Acknowledgements}

This collaboration was made possible by a Summer 2012 {\em Research in Teams} project at 
the Erwin Schr\"o\-dinger International Institute for Mathematical Physics, University of Vienna, 
support from which we gratefully acknowledge. 

\section{Harmonic Oscillator Lattices} \label{sec:HOL}

Our approach to proving entanglement area laws consists of two steps.
The first reduces estimating an appropriate measure of entanglement for the
oscillator lattice systems to specific properties of certain one-particle operators,
known in the Anderson localization literature as {\em eigenfunction correlators}.
We carry out the first step (Theorems~\ref{thm:gsal} and \ref{thm:tsal})  in a general
context which we explain in Section~\ref{subsec:setup}. The second step requires a detailed analysis 
of the localization properties near the bottom of the spectrum of a one-particle system, 
which at present has been completed only in a number of specific examples, including 
the standard Anderson models. See Section~\ref{subsec:apps} for a detailed description of these 
examples.

\subsection{Set-Up and Main Results} \label{subsec:setup} Consider a graph 
$G = (\Gamma, \mathcal{E})$ where $\Gamma$ is a countable set of vertices, often called
sites, and $\mathcal{E}$ is a set of undirected edges, i.e., pairs of vertices. 
We will assume that $G$ is connected, i.e., for any $x,y \in \Gamma$ with $x \neq y$, there exists a path $\gamma_{x,y} = 
(x=x_0, x_1, \cdots, x_n = y)$ of finitely many vertices with $(x_{j-1},x_j) \in \mathcal{E}$ for all 
$1 \leq j \leq n$. 
By $d(x,y)$ we will denote the distance between $x$ and $y$, defined as the minimum number of 
edges in a path connecting $x$ and $y$ and set $d(x,x) = 0$.
In addition, we will assume that the graph is of bounded degree,  i.e.\
\begin{equation} \label{boundeddegree}
N_{\rm max} := \sup_{x\in \Gamma} |\{ y\in \Gamma: (x,y) \in {\mathcal E}\}| < \infty.
\end{equation}
This assumption is equivalent to the existence of a constant $\mu>0$ such that
\begin{equation}
C_{\mu} = \sup_{x \in \Gamma} \sum_{y \in \Gamma} e^{- \mu d(x,y)} < \infty.
\label{Cmu}\end{equation}
If the degree is bounded by $N_{\rm max}$, it suffices to take $\mu > \log N_{\rm max}$
for \eq{Cmu} to hold. Conversely, if \eq{Cmu} holds for some $\mu>0$, one has 
$N_{\rm max} \le C_{\mu} e^{\mu}$. 

For many interesting examples, $G=(\mathbb{Z}^{\nu}, \mathcal{E})$ for
some integer $\nu \geq1$ and edge set given by nearest neighbor pairs, e.g., with respect to the $\ell^1$-metric,
but this is just a special case. We also have results for graphs with exponentially growing volume such as the Bethe lattice.

To formulate area laws, we need a notion of boundary. For this let
$\Lambda_0 \subset \Lambda \subset \Gamma$ and assume that $\Lambda_0$ is finite; $\Lambda$ may be infinite.
By $\partial \Lambda_0$, we will denote the boundary of $\Lambda_0$ which is given by
\begin{equation} \label{boundary}
\partial\Lambda_0= \{x\in\Lambda_0 \mid \mbox{there exists } y\in \Lambda \setminus \Lambda_0 \mbox{ with }
(x,y) \in \mathcal{E} \}.
\end{equation}
Although $\partial\Lambda_0$ depends on $\Lambda$, in general this dependence is of little importance
and we suppress it in the notation. 

Given a graph $G$ as above, we will consider oscillator systems defined as follows.
To each site $x \in \Gamma$, we associate a Hilbert space $L^2( \mathbb{R}, dq_x)$
where we have used $q_x$ to denote the spatial variable. It will be clear that our methods 
easily extend to the case where the single site Hilbert space is taken to be 
$L^2(\mathbb{R}^n, dq_x)$, and so we restrict our
attention to the case of one-dimensional oscillators; mainly to ease the notation. For any finite 
set $\Lambda \subset \Gamma$, a Hilbert space $\mathcal{H}_{\Lambda}$ is defined by setting
\begin{equation} \label{eq:L2spacedef}
\mathcal{H}_{\Lambda} = \bigotimes_{x \in \Lambda} L^2( \mathbb{R}, dq_x) 
= L^2( \mathbb{R}^{\Lambda}, dq)
\end{equation}
with $q = (q_x)_{x \in \Lambda}$. In each finite volume 
$\Lambda$ and for any $x \in \Lambda$, we will also use the notation $q_x$ to denote the 
position operator, i.e. the operator of multiplication by $q_x$ in $\mathcal{H}_{\Lambda}$,
and by $p_x = -i \partial / \partial q_x$ we denote the corresponding momentum operator. By 
standard results, see e.g. \cite{ReedSimon2}, these operators are, on suitable domains, self-adjoint and satisfy the commutation relations
\begin{equation} \label{eq:CCR}
[q_x, q_y] = [p_x, p_y] = 0 \quad \mbox{and} \quad [q_x, p_y] = i \delta_{x,y} \idty 
\quad \mbox{for all } x, y \in \Lambda \, .
\end{equation}

The models we consider will be defined in terms of two real-valued sequences 
$\{ h^{(q)}_{x,y} \}_{x,y \in \Gamma}$ and
$\{ h^{(p)}_{x,y} \}_{x,y \in \Gamma}$. For any finite $\Lambda \subset \Gamma$, we will denote by
$h^{(q)}_{\Lambda}$ the $| \Lambda| \times |\Lambda|$ matrix with entries 
$(h^{(q)}_{\Lambda})_{x,y} = h^{(q)}_{x,y}$ for
all $x,y \in \Lambda$ and similarly $h^{(p)}_{\Lambda}$. Throughout this work, we will assume 
the following.

\begin{assumption}\label{ass:bobs}
There exists a non-decreasing, exhaustive sequence of finite volumes $\Lambda_n \subset 
\Gamma$, for $n \geq 1$, on which the matrices $h^{(q)}_{\Lambda_n}$ and 
$h^{(p)}_{\Lambda_n}$ are real, symmetric, and positive definite.
Moreover, we further assume that there exists $C< \infty$ for which
\begin{equation} \label{eq:unifnormbounds}
\max \left[ \| h^{(p)}_{\Lambda_n} \|, \| (h^{(p)}_{\Lambda_n})^{-1} \|, \| h^{(q)}_{\Lambda_n} \| 
\right] \leq C
\end{equation}
uniformly in $n$.
\end{assumption}

Note that when $\Gamma$ itself is finite, this assumption may be applied to the constant
sequence $\Lambda_n=\Gamma$. We remark that, while requiring that the $h^{(q)}_{\Lambda_n}$ 
are positive definite and thus invertible, we do not assume a uniform bound on 
$(h^{(q)}_{\Lambda_n})^{-1}$. In our applications this will amount to not requiring a robust ground 
state gap for the oscillator systems to be introduced next.

Now, for any $\Lambda_n$ in a sequence
satisfying Assumption \ref{ass:bobs}, the formula
\begin{eqnarray} \label{eq:localham}
H_n & = & \sum_{x,y \in \Lambda_n} \left( q_x h_{x,y}^{(q)} q_y + p_x h_{x,y}^{(p)} p_y \right)  \\
& = & (q^T, p^T) \mathbb{H}_n (q, p) \nonumber
\end{eqnarray}
defines a self-adjoint operator on $\mathcal{H}_{\Lambda_n}$ which we refer to as a finite 
volume oscillator Hamiltonian. In the final line above, we view $q=(q_x)$ and $p=(p_x)$ as 
column vectors indexed by $x \in \Lambda_n$ with transposes
$q^T$ and $p^T$ regarded as corresponding row vectors, and in this case, this line
is a result of standard matrix multiplication with
\begin{equation} \label{eq:x+ham}
\mathbb{H}_n = \left( \begin{array}{cc} h^{(q)}_{\Lambda_n} & 0 \\ 0 & h^{(p)}_{\Lambda_n} 
\end{array} \right)
\end{equation}
When $\Lambda_n$ is understood to be fixed, we will just write $h^{(q)}$ and $h^{(p)}$ to 
ease notation. More general Hamiltonians could also be considered for the conditional 
statements made below. However, the only examples for which we can verify that the 
conditional statements hold are of the form described above, and so we will restrict our 
attention to this case.

It is well-known (e.g.\ \cite{NSS} for more details) that $H_n$ can be written as a system of free Bosons, i.e.,
\begin{equation} \label{eq:diagham}
H_n = \sum_{\ell = 1}^{|\Lambda_n|} \gamma_{\ell} \left( 2 b_{\ell}^* b_{\ell} + \idty \right)\,,
\end{equation}
where the operators $b_{\ell}$ satisfy canonical commutation relations, i.e.,
\begin{equation}
[b_{\ell}, b_{\ell'}] = [b_{\ell}^*, b_{\ell'}^*] = 0 \quad \mbox{and} \quad [b_{\ell}, b_{\ell'}^*] 
= \delta_{\ell , \ell'} \idty \quad \mbox{for all } \ell, \ell' \in \{ 1, 2, \cdots, |\Lambda_n|\} ,
\end{equation}
and the numbers $\gamma_{\ell}>0$ are the square roots of the eigenvalues of the positive 
definite matrix
\begin{equation} \label{effsph}
h_n = (h^{(p)})^{1/2} h^{(q)} (h^{(p)})^{1/2} \, .
\end{equation}
For this reason, we often refer to $h_n^{1/2}$ as the effective single-particle Hamiltonian 
corresponding to $H_n$, regarding it as a self-adjoint operator on $\ell^2(\Lambda_n)$.

In this work, we are interested in random models. In this case, we regard the components of
sequences defining $h^{(q)}$ and $h^{(p)}$ as random variables on a probability space 
$(\Omega, \mathbb{P})$. We will assume that there is a deterministic sequence of volumes 
$\Lambda_n$ for which Assumption \ref{ass:bobs} holds almost surely. By $\mathbb{E}(X)$, 
we will denote the expectation (average) of a random
variable $X$ on $\Omega$ with respect to $\mathbb{P}$.

Our first result concerns the ground state of the random Hamiltonian $H_n$.
{F}rom the form of (\ref{eq:diagham}), it follows that, almost surely, $H_n$ has a unique, 
normalized ground state $\Omega_n \in \mathcal{H}_{\Lambda_n}$ which is characterized 
by $b_{\ell} \Omega_n = 0$ for all $1 \leq \ell \leq |\Lambda_n|$.
Let us denote by $\rho_n$ the orthogonal projection onto $\Omega_n$.

We study the bipartite entanglement in the ground state (and later also
in the equilibrium states) with respect to a partition of the system.
Explicitly, fix a finite set  $\Lambda_0 \subset \Gamma$ and 
for any $n\geq 1$  large enough so that $\Lambda_0\subset \Lambda_n$, 
write 
$\mathcal{H}_{\Lambda_n} = \mathcal{H}_1 \otimes \mathcal{H}_2$ where
\begin{equation} \label{decomp}
\mathcal{H}_1 = \bigotimes_{x \in \Lambda_0}L^2( \mathbb{R}, dq_x) 
\quad \mbox{and} \quad \mathcal{H}_2 
= \bigotimes_{x \in \Lambda_n \setminus \Lambda_0}L^2( \mathbb{R}, dq_x)
\end{equation}
Denote by $\rho_n^1 = \Tr_{\mathcal{H}_2}\rho_n$ the reduction of the ground state 
projector to $\mathcal{H}_1$, and for
any non-negative $\rho$ with trace 1, let $S(\rho) = - \Tr \rho \ln\rho$ be the von Neumann 
entropy of $\rho$.

\begin{thm} \label{thm:gsal} 
Under the conditions stated above, in particular \eq{Cmu} and
Assumption \ref{ass:bobs}, and assuming further that there is a 
$C'< \infty$ and a $\mu' \geq \mu$ for which
\begin{equation} \label{eq:gsloccond}
\mathbb{E} \left( \left| \left\langle (h^{(p)})^{1/2} \delta_x, h_n^{-1/2} (h^{(p)})^{1/2} \delta_y 
\right\rangle \right| \right) \leq C' e^{- \mu' d(x,y)}
\end{equation}
for all $n \geq 1$ and all $x,y \in \Lambda_n$. Then there exists $C''< \infty$ for which
\begin{equation} 
\mathbb{E} \left( S(\rho_n^1) \right) \leq C'' | \partial \Lambda_0|
\label{gsal}\end{equation}
for all $n\ge 1$.
\end{thm}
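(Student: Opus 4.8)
The plan is to reduce the bound on $\E(S(\rho_n^1))$ to the quantitative decay of the ground-state position correlations, which is exactly what \eqref{eq:gsloccond} provides. Since $\Omega_n$ is a pure quasi-free (Gaussian) state, it is determined by the two covariance matrices $Q_{xy}=\langle\Omega_n,q_xq_y\Omega_n\rangle$ and $P_{xy}=\langle\Omega_n,p_xp_y\Omega_n\rangle$, the $q$--$p$ cross correlations vanishing by the symmetry of $H_n$. The diagonalization recalled in Section~\ref{sec:diagonalization} gives $2Q=(h^{(p)})^{1/2}h_n^{-1/2}(h^{(p)})^{1/2}$ and $2P=(h^{(p)})^{-1/2}h_n^{1/2}(h^{(p)})^{-1/2}$, whence the purity relation $4QP=\idty$. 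In particular the left-hand side of \eqref{eq:gsloccond} is precisely $2\,\E(|Q_{xy}|)$, so the hypothesis reads $\E(|Q_{xy}|)\le\tfrac{C'}{2}e^{-\mu'd(x,y)}$; moreover Assumption~\ref{ass:bobs} yields a uniform bound $\|P\|\le C$, which uses the control on $\|(h^{(p)})^{-1}\|$ but crucially \emph{not} on $\|(h^{(q)})^{-1}\|$, i.e.\ requires no spectral gap.

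Next I would express the reduced entropy through the symplectic spectrum. Writing $Q_0,P_0$ for the principal submatrices on $\Lambda_0$, the symplectic eigenvalues $\nu_k\ge1$ of $\rho_n^1$ are the square roots of the eigenvalues of $4Q_0P_0$, and $S(\rho_n^1)=\sum_k h(\nu_k)$ with $h(\nu)=\tfrac{\nu+1}{2}\ln\tfrac{\nu+1}{2}-\tfrac{\nu-1}{2}\ln\tfrac{\nu-1}{2}$. As $\Omega_n$ is pure, it is convenient to bound $S$ by the logarithmic negativity (the Vidal--Werner inequality proved in the Appendix), which for this Gaussian state equals $\sum_k\operatorname{arccosh}(\nu_k)$; using $\operatorname{arccosh}(\nu)\le c\sqrt{\nu-1}\le c'\sqrt{\nu^2-1}$ for $\nu\ge1$ reduces everything to estimating $\sum_k\sqrt{\nu_k^2-1}$. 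The deviation of the spectrum from $1$ is a boundary quantity: since $4QP=\idty$ for the full system but the product does not commute with restriction,
\be
4Q_0P_0=\idty_{\Lambda_0}-R,\qquad R:=4\,\Pi_0\,Q\,\Pi_2\,P\,\Pi_0,
\ee
where $\Pi_0,\Pi_2$ project onto $\Lambda_0$ and $\Lambda_n\setminus\Lambda_0$; the eigenvalues of $R$ (after symmetrization) are exactly the $1-\nu_k^2\le0$, so $\sum_k\sqrt{\nu_k^2-1}=\sum_j\sqrt{|r_j|}$ with $r_j$ the eigenvalues of $R$.

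The easy part is the first moment. Writing $R=4AB$ with $A=\Pi_0Q\Pi_2$ and $B=\Pi_2P\Pi_0$, one has $\|R\|_1\le4\|A\|_1\|P\|$, and
\be
\E\,\|A\|_1\le\sum_{x\in\Lambda_0,\,z\notin\Lambda_0}\E(|Q_{xz}|)\le\frac{C'}{2}\sum_{x\in\Lambda_0,\,z\notin\Lambda_0}e^{-\mu'd(x,z)}\le\frac{C'}{2}\,C_\mu^2\,|\partial\Lambda_0|,
\ee
the last step because every geodesic from $\Lambda_0$ to its complement exits through $\partial\Lambda_0$, so $e^{-\mu'd(x,z)}\le\sum_{w\in\partial\Lambda_0}e^{-\mu'd(x,w)}e^{-\mu'd(w,z)}$ and each factor sums to at most $C_{\mu'}\le C_\mu$ by \eqref{Cmu}. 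Thus $\E\,\Tr|R|\le C''|\partial\Lambda_0|$, i.e.\ $\E\sum_k(\nu_k^2-1)\le C''|\partial\Lambda_0|$, which already controls the contribution of the large-$\nu_k$ modes.

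The \emph{main obstacle} is that neither the entropy nor the negativity is controlled by this first moment: $h(\nu)\sim\tfrac{\nu-1}{2}|\ln(\nu-1)|$ and $\operatorname{arccosh}(\nu)\sim\sqrt{2(\nu-1)}$ as $\nu\to1$, both genuinely larger than $\nu-1$, so $\sum_j\sqrt{|r_j|}$ demands control of the full singular-value profile of $A$ (a Schatten-$\tfrac12$ quantity), not just $\|A\|_1$; the naive interpolation $\sum_j\sqrt{|r_j|}\le(\#\{\nu_k>1\})^{1/2}\|R\|_1^{1/2}$ reintroduces a volume factor and destroys the area law. To beat this I would decompose $\Lambda_0$ into shells $S_r=\{x\in\Lambda_0:d(x,\Lambda_n\setminus\Lambda_0)=r\}$, split $A=\sum_rA_r$ by rows, and combine the quasi-triangle inequality $\sum_j\sqrt{s_j(A)}\le\sum_r\sum_j\sqrt{s_j(A_r)}$ with $\sum_j\sqrt{s_j(A_r)}\le|S_r|^{1/2}\|A_r\|_1^{1/2}$ and, via Jensen, $\E\|A_r\|_1^{1/2}\le(\E\|A_r\|_1)^{1/2}\lesssim|S_r|^{1/2}e^{-cr}$ for a fixed fraction $c$ of $\mu'$. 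Since each $x\in S_r$ lies within distance $r-1$ of $\partial\Lambda_0$, we have $|S_r|\le|\partial\Lambda_0|\sup_{w}|\{x:d(x,w)\le r-1\}|$, and the resulting series $\sum_r|S_r|e^{-cr}$ converges to a multiple of $|\partial\Lambda_0|$ provided the decay rate beats the volume growth: this is automatic on $\Z^\nu$, while on graphs of exponential growth such as the Bethe lattice it requires $\mu'$ to exceed a multiple of $\log N_{\rm max}$, the quantitative point tying the estimate back to \eqref{Cmu}. Adding this shell bound to the first-moment bound for the large-$\nu_k$ modes yields $\E\sum_k\operatorname{arccosh}(\nu_k)\le C''|\partial\Lambda_0|$ and hence \eqref{gsal}.
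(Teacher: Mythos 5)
Your overall strategy---pass to the Gaussian structure, bound the entropy by the logarithmic negativity, and feed in the decay hypothesis (\ref{eq:gsloccond})---is the paper's starting point, and your first-moment estimate with the geodesic-through-the-boundary argument reproduces Lemma~\ref{lem:boundary}. The gap is in the step you yourself flag as the main obstacle. Your shell-decomposition repair of the Schatten-$\frac{1}{2}$ problem requires a decay rate $c$ (a fraction of $\mu'$) that simultaneously leaves $\sum_z e^{-(\mu'-c)d(x,z)}$ summable and beats the ball growth $\sup_w |\{x: d(x,w)\le r\}|$, which on a general bounded-degree graph can be as large as $e^{r \log N_{\rm max}}$. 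This forces $\mu'$ to exceed a fixed multiple of $\log N_{\rm max}$, whereas the theorem assumes only $\mu'\ge\mu$ for some $\mu$ satisfying (\ref{Cmu}); the paper's proof works in exactly that generality, which is what permits applications on graphs of exponential growth such as Theorem~\ref{thm:largedisorder}. So, as written, your argument proves the statement on $\Z^{\nu}$ (where any $c>0$ beats polynomial growth) but not the theorem as stated. A secondary issue: the identity $\mathcal{N}(\rho_n)=\sum_k \operatorname{arccosh}(\nu_k)$ for pure Gaussian states is asserted rather than proved; it is essentially the content of Section~\ref{sec:diagonalization} (Theorem~\ref{thm:logneg}) and cannot be taken as free in a self-contained argument.

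The missing idea is that your obstacle is an artifact of restricting the state before transposing. The paper never passes to the reduced covariance $(Q_0,P_0)$: it works with the partially transposed covariance $\tilde{M}$ of (\ref{transposed_functional}) for the \emph{full} system. Writing its symplectic eigenvalues below $1$ as $e^{-2r_j}$, formula (\ref{lognegprop}) gives $\mathcal{N}(\rho_n)=\sum_j 2r_j$, and the deviation of $L^{-1}$ from $\idty$ is \emph{linear} in the $r_j$: by concavity of the logarithm,
\begin{equation}
2\mathcal{N}(\rho_n)=\Tr\, P^+\log L^{-1} \le \Tr\, P^+\bigl(L^{-1}-\idty\bigr)=\sum_j \left(e^{4r_j}-1\right),
\end{equation}
which dominates the negativity term by term, whereas your $\sum_k(\nu_k^2-1)=\sum_j \sinh^2(2r_j)$ is only \emph{quadratic} in small $r_j$---exactly the degeneracy that defeats your first moment. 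Lemma~\ref{lem:logneg} then bounds this trace by a pure first-moment quantity: splitting $L^{-1}=A+B$ with $A=M_1^{-1/2}M_2^{-1}M_1^{-1/2}=\idty$ in the ground-state case (your purity relation $M_2=M_1^{-1}$) and $B=M_1^{-1/2}\mathbb{P}\,[M_2^{-1},\mathbb{P}]\,M_1^{-1/2}$, one gets $2\mathcal{N}(\rho_n)\le \|M_1^{-1}\|\,\|\mathbb{P}[M_2^{-1},\mathbb{P}]\|_1$, and the commutator's matrix elements vanish except across the bipartition, where they equal $-2\langle \delta_x, M_2^{-1}\delta_y\rangle$---precisely the quantity controlled by (\ref{eq:gsloccond}). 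Transposing first and never restricting eliminates the Schatten-$\frac{1}{2}$ analysis, the shells, and the extra decay requirement entirely.
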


It has been argued that a bound on the entropy of entanglement of the type \eq{gsal}
indicates that the ground state properties are computable. See the discussion in 
Section VI of  \cite{Eisert2010} for an overview and \cite{Evenbly2010}
for a more nuanced discussion of the question in the case of oscillator lattices.

Our proof of Theorem~\ref{thm:gsal} uses the following bound for the entropy of the
restriction of a pure state to one factor of a bipartite decomposition
$\mathcal{H} = \mathcal{H}_1 \otimes \mathcal{H}_2$:
\begin{equation} \label{eq:logneg}
S(\Tr_{\mathcal{H}_2}\rho)\leq\log \Vert \rho^{T_1}\Vert_1=\mathcal{N}(\rho).
\end{equation}
Here, the RHS is the log of the 1-norm of the partial transpose of the pure state $\rho$
with respect to the given decomposition. This quantity is called  the {\it logarithmic negativity}
of the density matrix $\rho$ with respect to the decomposition and is denoted by
$\mathcal{N}(\rho)$. See the Appendix for a detailed discussion of partial transposes and, 
in particular, Lemma~\ref{lem:purebd} for a proof of \eq{eq:logneg}.

For any $\beta>0$, the equilibrium state (aka thermal state) at inverse temperature $\beta$ 
of the finite system with Hamiltonian $H_n$ is given by the density matrix
\begin{equation} \label{thermalstate}
\rho_{\beta,n}=\frac{e^{-\beta H_n}}{\Tr e^{-\beta H_n}} .
\end{equation}
As discussed in \cite{Vidal2002} the logarithmic negativity is a reasonable measure of 
the bipartite entanglement not only for pure states but also for mixed states (such as
thermal states). In particular it is an {\em entanglement monotone}, see e.g. \cite{Vidal2002} for details.

\begin{thm} \label{thm:tsal} Fix $\beta >0$. Under the conditions stated above,
assume further that there is a $C'< \infty$ and a $\mu' \geq \mu$ for which
\begin{equation} \label{eq:tsloccond}
\mathbb{E} \left( \left| \left\langle (h^{(p)})^{1/2} \delta_x, h_n^{-1/2} \tanh ( \beta (h_n)^{1/2} ) (h^{(p)})^{1/2} \delta_y \right\rangle \right| \right) \leq C' e^{- \mu' d(x,y)}
\end{equation}
for all $n \geq 1$ and all $x,y \in \Lambda_n$. Then there exists $C''< \infty$ for which
\begin{equation} \label{tsal}
\mathbb{E} \left( \mathcal{N}(\rho_{\beta, n}) \right) \leq C'' | \partial \Lambda_0|
\end{equation}
for all $n\ge 1$, where $\rho_{\beta,n}$ is the density matrix defined in \eq{thermalstate} and
$\mathcal{N}(\rho_{\beta, n})$ is the logarithmic negativity with respect to the decomposition defined in (\ref{decomp}).
\end{thm}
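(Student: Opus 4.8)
The plan is to exploit that $\rho_{\beta,n}$ is a quasi-free (Gaussian) state, so that its logarithmic negativity is entirely determined by its covariance matrix via the formula established in Section~\ref{sec:diagonalization}. First I would compute the thermal two-point functions. Diagonalizing $H_n$ as in \eqref{eq:diagham} and using $\langle b_\ell^* b_{\ell'}\rangle_\beta = \delta_{\ell\ell'}(e^{2\beta\gamma_\ell}-1)^{-1}$, one finds that the cross-correlations $\langle q_x p_y + p_y q_x\rangle_\beta$ vanish by the time-reversal symmetry $p\mapsto -p$ of $H_n$, so the covariance matrix is block diagonal, $\sigma = \Gamma^q\oplus\Gamma^p$, with
\[
\Gamma^q_{xy} = \tfrac12\big[(h^{(p)})^{1/2}\coth(\beta h_n^{1/2})\,h_n^{-1/2}(h^{(p)})^{1/2}\big]_{xy}
\]
and the analogous expression, with the roles of $h^{(q)}$ and $h^{(p)}$ interchanged, for $\Gamma^p$.

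Next I would feed this into the Gaussian negativity formula. With respect to the splitting \eqref{decomp}, the partial transpose $T_1$ acts on covariance matrices by flipping the sign of the momenta in $\Lambda_0$, i.e.\ $\sigma\mapsto\widetilde\sigma = P\sigma P$ with $P = (-\idty_{\Lambda_0})\oplus\idty_{\Lambda_n\setminus\Lambda_0}$ acting in the $p$-sector. Then $\mathcal{N}(\rho_{\beta,n})$ is a sum of $\max(0,-\log 2\widetilde\nu_k)$ over the symplectic eigenvalues $\widetilde\nu_k$ of $\widetilde\sigma$, i.e.\ over the square roots of the eigenvalues of $\Gamma^q P\Gamma^p P$. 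When $\Lambda_0$ does not couple to its complement the matrices $\Gamma^q,\Gamma^p$ are block diagonal, $P$ commutes with them, and $\widetilde\nu_k = \nu_k\ge 1/2$, so $\mathcal{N} = 0$; thus the negativity is driven entirely by the entries of $\Gamma^q$ and $\Gamma^p$ connecting $\Lambda_0$ to $\Lambda_n\setminus\Lambda_0$. The crucial algebraic step is to show that the controlling object is not the covariance itself, whose $\coth(\beta h_n^{1/2})$ factors would blow up as $\beta\to 0$, but the matrix $T := (h^{(p)})^{1/2}h_n^{-1/2}\tanh(\beta h_n^{1/2})(h^{(p)})^{1/2}$ appearing in \eqref{eq:tsloccond}: the ``classical'' thermal fluctuations (the part of $\coth$ exceeding $1$) cannot contribute to a non-positive partial transpose, and only the quantum part survives. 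This is consistent with the two natural limits, since $\tanh(\beta h_n^{1/2})\to\idty$ recovers the ground-state matrix $(h^{(p)})^{1/2}h_n^{-1/2}(h^{(p)})^{1/2}$ of \eqref{eq:gsloccond} as $\beta\to\infty$, while $\tanh\to 0$ forces the entanglement to vanish as $\beta\to 0$. Using the uniform bounds of Assumption~\ref{ass:bobs} to keep the symplectic eigenvalues in a controlled range and to bound $\|(h^{(p)})^{\pm1/2}\|$, I would derive a bound of the form
\[
\mathcal{N}(\rho_{\beta,n}) \le C\!\!\sum_{x\in\Lambda_0,\ y\in\Lambda_n\setminus\Lambda_0}\!\!|T_{xy}|,
\]
with $C<\infty$ uniform in $n$ and $\beta$, where I have used $\|M\|_1\le\sum_{ij}|M_{ij}|$ to pass from the trace norm of the off-diagonal block of $T$ to its entries, and where $T_{xy}=\langle(h^{(p)})^{1/2}\delta_x, h_n^{-1/2}\tanh(\beta h_n^{1/2})(h^{(p)})^{1/2}\delta_y\rangle$.

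Taking expectations and invoking \eqref{eq:tsloccond} then gives $\mathbb{E}(\mathcal{N}(\rho_{\beta,n}))\le CC'\sum_{x\in\Lambda_0,\,y\in\Lambda_n\setminus\Lambda_0}e^{-\mu' d(x,y)}$. To convert this into the surface bound \eqref{tsal}, I would use that any geodesic from $x\in\Lambda_0$ to $y\in\Lambda_n\setminus\Lambda_0$ must pass through a site $z\in\partial\Lambda_0$, so that $d(x,y)=d(x,z)+d(z,y)$ for some such $z$, hence $e^{-\mu'd(x,y)}\le\sum_{z\in\partial\Lambda_0}e^{-\mu'd(x,z)}e^{-\mu'd(z,y)}$. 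Summing over $x$ and $y$ and applying \eqref{Cmu} twice — which is legitimate because $\mu'\ge\mu$ and therefore $C_{\mu'}\le C_\mu<\infty$ — bounds the double sum by $C_\mu^2|\partial\Lambda_0|$, yielding \eqref{tsal} with $C''=CC'C_\mu^2$.

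The main obstacle is the middle step: rigorously establishing the negativity bound in terms of the off-diagonal block of $T$. This requires both the exact identification of the $\tanh$-weighted matrix as the entanglement-controlling quantity — i.e.\ carrying out the cancellation of the $\coth$ factors from $\Gamma^q$ and $\Gamma^p$ once the sign flip $P$ is inserted — and a stable perturbation estimate showing that the symplectic eigenvalues of $\widetilde\sigma$ which dip below $1/2$ do so by an amount controlled linearly by the boundary-crossing entries, uniformly in $n$ and in $\beta$. The uniformity over $\beta$ and $n$, where no uniform lower bound on $(h^{(q)})^{-1}$ (hence no spectral gap) is available, is precisely the point at which Assumption~\ref{ass:bobs} and the a priori bound \eqref{eq:unifnormbounds} must be used carefully.
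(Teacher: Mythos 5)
Your overall architecture matches the paper's: reduce to the Gaussian negativity formula (the paper's Theorem~\ref{thm:logneg}, which expresses $\mathcal{N}(\rho_{\beta,n})$ through the symplectic spectrum of the partially transposed covariance matrix), bound the negativity by the boundary-crossing entries of $M_2^{-1}=(h^{(p)})^{1/2}h_n^{-1/2}\tanh(\beta h_n^{1/2})(h^{(p)})^{1/2}$ (your matrix $T$), take expectations using (\ref{eq:tsloccond}), and finish with the geodesic-through-the-boundary counting argument, which is exactly the paper's Lemma~\ref{lem:boundary}. Your identification of the $\tanh$-weighted matrix as the controlling object, your expectation step, and your boundary-geometry step are all correct as written.

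However, the step you yourself flag as ``the main obstacle'' is precisely the paper's key deterministic estimate (Lemma~\ref{lem:logneg}), and your proposal does not supply it; moreover, the route you sketch --- a perturbation estimate showing that the symplectic eigenvalues dip below $1/2$ by an amount controlled linearly by the boundary-crossing entries --- cannot work as stated. Since no gap is assumed, the symplectic eigenvalues $\tilde\nu_k$ of the partially transposed covariance can be arbitrarily close to $0$, and the negativity involves $\log\bigl(1/(2\tilde\nu_k)\bigr)$, which diverges even while the ``dip'' $1/2-\tilde\nu_k$ remains bounded by $1/2$; linear control of the dip therefore gives no bound on $\mathcal{N}$. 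The paper avoids eigenvalue perturbation entirely: with $L=M_1^{1/2}\mathbb{P}M_2\mathbb{P}M_1^{1/2}$ and $P^+$ the spectral projection onto $\{L\le\idty\}$, it uses concavity of the logarithm,
\begin{equation*}
2\mathcal{N}(\rho_{\beta,n})=\Tr P^+\log L^{-1}\le \Tr P^+\bigl(L^{-1}-\idty\bigr),
\end{equation*}
i.e.\ it bounds the logarithm by the \emph{inverse}, not by the eigenvalue displacement, and then splits $L^{-1}=A+B$ with $A=M_1^{-1/2}M_2^{-1}M_1^{-1/2}$ and $B=M_1^{-1/2}\mathbb{P}[M_2^{-1},\mathbb{P}]M_1^{-1/2}$. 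Your heuristic ``cancellation of the $\coth$ factors'' is made precise there by the observation that $A$ is similar to $\tanh^2(\beta h_\Lambda^{1/2})\le\idty$, so that
\begin{equation*}
\Tr P^+\bigl(L^{-1}-\idty\bigr)\le \Tr P^+B\le \Vert M_1^{-1}\Vert\,\bigl\Vert \mathbb{P}[M_2^{-1},\mathbb{P}]\bigr\Vert_1 ,
\end{equation*}
and the commutator $[M_2^{-1},\mathbb{P}]$ vanishes except on pairs $(x,y)$ lying on opposite sides of the partition, where its entries equal $-2\langle\delta_x,M_2^{-1}\delta_y\rangle$. This yields exactly your target inequality, with constant $2\Vert M_1^{-1}\Vert$ uniformly bounded in $n$ and $\beta$ by Assumption~\ref{ass:bobs}. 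Until you supply an argument of this kind (or another valid one) for the central inequality, your proof is incomplete at its decisive step.
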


The proof of these results is given in Section \ref{sec:proof}, which in turn relies on results we
derive in Section \ref{sec:diagonalization} and the Appendix.

\subsection{Applications} \label{subsec:apps}

Applications of Theorems~\ref{thm:gsal} and \ref{thm:tsal} consist in verifying the localization bounds (\ref{eq:gsloccond}) and (\ref{eq:tsloccond}) for the underlying effective single-particle Hamiltonian $h_n$ in concrete special cases of oscillator systems $H_n$. 

Among available results in single-particle localization theory, bounds of the form (\ref{eq:gsloccond}) or (\ref{eq:tsloccond}) correspond to strong forms of localization, which have only been rigorously established for the Anderson models (or models closely related to it). This leads us to consider the special case
\begin{equation} \label{eq:localham1}
H_n = \sum_{x\in \Lambda_n} \left( \frac{1}{2m} p_x^2 + \frac{g k_x}{2} q_x^2 \right) + \sum_{(x,y)\in {\mathcal E}_n} \lambda (q_x-q_y)^2
\end{equation}
of (\ref{eq:localham}). Here ${\mathcal E}_n = \{(x,y) \in {\mathcal E}: x,y\in \Lambda_n\}$. The masses $m$, coupling parameters $\lambda$  and disorder parameter $g$ are positive constants. 

\begin{assumption} \label{ass:springconst}
The $\{k_x\}_{x\in \Gamma}$ are independent, identically distributed random variables, which are absolutely continuous with bounded density $\tilde{\rho}$ supported in $[0,k_{\rm max}]$ for some $k_{\rm max}>0$.
\end{assumption}

We thus have $h^{(p)} = \frac{1}{2m}\idty$ and $h^{(q)}$ becomes the Anderson model characterized by its quadratic form
\begin{equation} \label{eq:Anderson}
\langle f, h^{(q)}g \rangle = \sum_{(x,y) \in {\mathcal E}} \lambda \overline{(f(y)-f(x))} (g(y)-g(x)) + \frac{g}{2} \sum_{x\in \Gamma} k_x \overline{f(x)} g(x),
\end{equation}
$f$, $g \in \ell^2(\Gamma)$. Their restrictions to $\Lambda_n$ satisfy (\ref{eq:unifnormbounds}) due to the boundedness of the $k_x$ and the fact that $\Gamma$ is of bounded degree, in fact,
\begin{equation} \label{eq:hqnormbound}
\|h^{(q)}_{\Lambda_n}\| \le 2\lambda N_{\rm max} + \frac{1}{2} gk_{\rm max}.
\end{equation}

Moreover, $h_n = \frac{1}{2m} h^{(q)}_{\Lambda_n}$ and the bounds (\ref{eq:gsloccond}) and (\ref{eq:tsloccond}) become equivalent to the existence of $C'<\infty$ and $\mu'\ge \mu$ such that, for all $n\ge 1$ and $x,y \in \Lambda_n$,
\begin{equation} \label{eq:gsloc1}
\E \left( |\langle \delta_x, h_n^{-1/2} \delta_y \rangle| \right) \le C' e^{-\mu' d(x,y)}
\end{equation}
and
\begin{equation} \label{eq:tsloc1}
\E \left( |\langle \delta_x, h_n^{-1/2} \tanh(\beta h_n^{1/2}) \delta_y \rangle| \right) \le C' e^{-\mu' d(x,y)},
\end{equation}
respectively.

Due to Assumption~\ref{ass:springconst} it holds almost surely that $k_x>0$ for all $x\in \Gamma$. Thus, almost surely, the matrices $h_n$ are positive definite in any finite volume $\Lambda_n$. However, as the support of the density $\tilde{\rho}$ contains $0$, there is no volume independent deterministic lower bound of the form $h_n \ge C>0$. A lower bound of this form is essentially what was used in \cite{Plenio2005, Cramer2006} to show that deterministic exponential decay bounds as in (\ref{eq:gsloc1}) and (\ref{eq:tsloc1}) hold, thus implying area laws as in Theorems~\ref{thm:gsal} and \ref{thm:tsal}.

In terms of the oscillator system Hamiltonian $H_n$ this means that an area law for the entanglement entropy of ground and thermal states was found to be a consequence of a robust ground state gap. This is seen by representing $H_n$ as a free Boson system (\ref{eq:diagham}), which shows that the ground state gap of $H_n$ is given by
\begin{equation}
2 \min_{\ell} \gamma_{\ell} = 2 \min \sigma(h_n^{1/2}) = 2 \left( \min \sigma(h_n) \right)^{1/2},
\end{equation}
see \cite{NSS} for more details.

A central goal of our work here is to show that in {\em disordered} oscillator systems it is not necessary for an area law to require a robust ground state gap, as long as averages are considered on the left hand sides of (\ref{eq:gsloc1}) and (\ref{eq:tsloc1}). Using a term which was first proposed in a related context for disordered quantum spin systems in \cite{Hastings2010} (see also \cite{HamzaSimsStolz}), we can argue that the localization properties of the single-particle operator $h_n$ lead to a {\em mobility gap}, which has consequences for the disordered many-body system $H_n$ similar to those of a robust ground state gap for a deterministic system.

Single-particle localization bounds similar to (\ref{eq:gsloc1}) and (\ref{eq:tsloc1}) have been used in \cite{NSS} to prove certain characteristics of many-body localization in disordered oscillator systems, such as dynamical localization in the form of zero-velocity Lieb-Robinson bounds, and exponential decay of ground state and thermal state correlations. Of particular interest in this context is the left hand side of (\ref{eq:gsloc1}), which, due to the absence of a robust ground state gap, presents an example of a {\em singular eigenfunction correlator}. Appendix A of \cite{NSS} provides a detailed discussion of localization bounds for singular eigenfunction correlators, based on earlier results in the theory of Anderson localization (such as recently reviewed in \cite{Stolz2011}). In particular, this leads quite directly to the first of the following applications of our results.

\begin{thm}[Lattice Systems] \label{thm:latticesystems}
Let $G= (\Z^{\nu}, {\mathcal E})$ with edge set given by nearest neighbor pairs and $\Lambda_n = [-n,n]^{\nu} \cap \Z^{\nu}$, and let $H_n$ be the disordered oscillator system (\ref{eq:localham1}) over $G$ at fixed disorder $g=1$, satisfying Assumption~\ref{ass:springconst}. Then the ground state and thermal states of $H_n$ satisfy area laws of the form (\ref{gsal}) and (\ref{tsal}), respectively. 
\end{thm}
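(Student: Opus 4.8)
The reduction needed is already in place: since $h^{(p)}=\frac{1}{2m}\idty$ and $h_n=\frac{1}{2m}h^{(q)}_{\Lambda_n}$, the hypotheses (\ref{eq:gsloccond}) and (\ref{eq:tsloccond}) of Theorems~\ref{thm:gsal} and \ref{thm:tsal} are equivalent to the two single-particle eigenfunction-correlator bounds (\ref{eq:gsloc1}) and (\ref{eq:tsloc1}) for the Anderson model (\ref{eq:Anderson}) at disorder $g=1$. The plan is therefore to establish these two bounds; once they hold with some $\mu'\ge\mu$, the area laws (\ref{gsal}) and (\ref{tsal}) follow at once. Both bounds are spectral averages of the form $\E(|\langle \delta_x, F(h_n)\delta_y\rangle|)$ with $F_{\rm gs}(\lambda)=\lambda^{-1/2}$ for the ground state and $F_{\rm th}(\lambda)=\lambda^{-1/2}\tanh(\beta\lambda^{1/2})$ for the thermal state, and the essential difference between them is that $F_{\rm th}$ is bounded while $F_{\rm gs}$ is singular at the bottom $\lambda=0$ of the spectrum, where (by Assumption~\ref{ass:springconst}) the almost sure spectrum accumulates with no deterministic gap.

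First I would dispose of the thermal bound (\ref{eq:tsloc1}), which is the easier case. The function $F_{\rm th}$ is in fact bounded on $[0,\infty)$: as $\lambda\downarrow 0$ one has $\tanh(\beta\lambda^{1/2})\sim\beta\lambda^{1/2}$, so $F_{\rm th}(\lambda)\to\beta$, while $F_{\rm th}(\lambda)\le\lambda^{-1/2}\to 0$ for large $\lambda$. Hence (\ref{eq:tsloc1}) is a non-singular eigenfunction-correlator estimate, controlled by the standard exponential decay of the (bounded) eigenfunction correlator of the Anderson model on a low-energy interval $[0,E_0]$, with the contribution from energies above $E_0$ handled trivially since there $F_{\rm th}$ is itself small and the correlator is again bounded.

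The ground-state bound (\ref{eq:gsloc1}) is the heart of the matter, because the weight $F_{\rm gs}(\lambda)=\lambda^{-1/2}$ is singular exactly at the band bottom. Writing $\langle\delta_x,h_n^{-1/2}\delta_y\rangle=\sum_\ell \lambda_\ell^{-1/2}\psi_\ell(x)\overline{\psi_\ell(y)}$ in terms of the normalized eigenpairs $(\lambda_\ell,\psi_\ell)$ of $h_n$, I would split the sum at a small threshold $E_0>0$. On $(E_0,\infty)$ the weight is bounded by $E_0^{-1/2}$ and the contribution is again controlled by the bounded eigenfunction correlator. The delicate piece is the low-energy part on $(0,E_0]$, a \emph{singular eigenfunction correlator}. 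The three facts that make it tractable are: (i) band-edge localization for the multidimensional Anderson model with absolutely continuous, bounded single-site density, which holds at fixed disorder near the spectral bottom and yields exponential decay of $\E(|\langle\delta_x,(h_n-z)^{-1}\delta_y\rangle|^s)$ for some $s\in(0,1)$, uniformly for $z=E+i\eta$ with $E\in(0,E_0]$ and $\eta>0$ (fractional moment method; see \cite{NSS,Stolz2011}); (ii) a Wegner estimate controlling the density of states, available precisely because $\tilde\rho$ is bounded; and (iii) the integrability $\int_0^{E_0}\lambda^{-1/2}\,d\lambda<\infty$ of the singularity. Combining these as in Appendix~A of \cite{NSS} converts the resolvent fractional-moment bound into $\E\!\left(\sum_\ell \lambda_\ell^{-1/2}|\psi_\ell(x)||\psi_\ell(y)|\right)\le C'e^{-\mu' d(x,y)}$ on the low-energy eigenfunctions, which completes (\ref{eq:gsloc1}).

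The main obstacle is this singular low-energy piece: one must exploit exponential decay and simultaneously absorb the weight $\lambda^{-1/2}$ right where the spectrum is densest. This is exactly the point where strong single-particle localization input (band-edge localization at fixed disorder together with the Wegner estimate) is indispensable, and it is the reason the result is confined to Anderson-type models, for which such statements are rigorously available. With (\ref{eq:gsloc1}) and (\ref{eq:tsloc1}) verified, Theorems~\ref{thm:gsal} and \ref{thm:tsal} deliver (\ref{gsal}) and (\ref{tsal}).
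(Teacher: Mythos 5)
Your reduction to (\ref{eq:gsloc1}) and (\ref{eq:tsloc1}) is exactly the paper's first step, and your treatment of the low-energy (band-edge) regime --- fractional moments, Wegner estimate, and the absorption of the $\lambda^{-1/2}$ singularity as in Appendix~A of \cite{NSS} --- is sound. The genuine gap is your treatment of the energies \emph{above} the threshold $E_0$. You claim this part is ``handled trivially'' because the weights $F_{\rm gs}$, $F_{\rm th}$ are bounded there and ``the correlator is again bounded.'' But boundedness of the eigenfunction correlator gives no decay whatsoever in $d(x,y)$: the estimate $\E\left(|\langle \delta_x, F(h_n)\chi_{(E_0,\infty)}(h_n)\delta_y\rangle|\right) \le E_0^{-1/2}$ is uniform in $x,y$ and useless for proving (\ref{eq:gsloc1}) or (\ref{eq:tsloc1}). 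To make the high-energy piece decay exponentially you would need exponential decay of the eigenfunction correlator at energies in $(E_0, E_{\rm max}]$, i.e.\ localization throughout the bulk of the spectrum --- and for the Anderson model on $\Z^{\nu}$, $\nu \ge 2$, at \emph{fixed} disorder $g=1$, this is precisely what is not known (it is available only near the spectral bottom, via Lifshitz tails, or at large disorder as in Theorem~\ref{thm:largedisorder}). So the sharp-cutoff decomposition you propose cannot be closed.

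This is exactly why the paper's proof takes a different route: it verifies the hypotheses of Proposition~A.3(c) of \cite{NSS}, namely that $\varphi_1(t)=t^{-1/2}$ and $\varphi_2(t)=t^{-1/2}\tanh(\beta t^{1/2})$ are \emph{analytic in the right half-plane} $\{z: \mbox{Re}\, z>0\}$ and satisfy $|\varphi(t)|\le C t^{\alpha}$ with $\alpha>-1$. The analyticity --- which your argument never uses --- is the ingredient that controls the bulk spectrum: the high-energy spectral component is represented as a contour integral of $\varphi(z)(z-h_n)^{-1}$ around $[\delta, E_{\rm max}]$; the part of the contour away from the real axis is handled deterministically by Combes--Thomas bounds, while the crossing of the real axis is placed at an energy $\delta$ inside the band-edge localization window, where fractional-moment bounds (interpolated against the trivial bound $|G|\le |\mathrm{Im}\,z|^{-1}$) still apply. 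Without this, or some substitute at bulk energies, your proof does not go through. A secondary omission: you should say why the decay rate $\mu'$ produced by band-edge localization can be taken $\ge \mu$. The paper handles this by noting that $\Z^{\nu}$ has polynomial volume growth, so (\ref{Cmu}) holds for \emph{every} $\mu>0$, and hence any $\mu'>0$, however small, suffices.
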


\begin{proof}
For $G= \Z^{\nu}$ the volumes $|\{y\in G: |x-y|_1 \le n\}|$ grow polynomially in $n$. Thus (\ref{Cmu}) holds for any $\mu>0$ and, due to Theorems~\ref{thm:gsal} and \ref{thm:tsal}, it suffices to verify (\ref{eq:gsloc1}) and (\ref{eq:tsloc1}) for some $\mu'>0$. Both bounds follow as special cases from Proposition~A.3(c) in \cite{NSS}, as the functions $\varphi_1(t) = t^{-1/2}$ and $\varphi_2(t) = t^{-1/2} \tanh(\beta t^{1/2})$ both have analytic extensions to the half plane $\{z: \mbox{Re}\,z>0\}$ and for $t\in (0,\infty)$ satisfy bounds of the form $|\varphi(t)|\le Ct^{\alpha}$ for some $\alpha>-1$. In fact, for $\varphi_2$ one can work with $\alpha=0$ as $\tanh(\beta t^{1/2}) \sim t^{1/2}$ near $0$.
\end{proof}

Without going into detail, we mention two natural directions in which Theorem~\ref{thm:latticesystems} can be generalized:

\vspace{.5cm}

(a) We have considered $h^{(p)} = c\idty$, but this can be generalized to larger classes of matrices $h^{(p)}$, at least for the ground state case. Using (\ref{effsph}) we see that the left hand side of (\ref{eq:gsloccond}) has the form 
\begin{equation} 
\E(|\langle \delta_x, (h^{(p)})^{1/4} (h^{(q)})^{-1/2} (h^{(p)})^{1/4} \delta_y \rangle|).
\end{equation}
One can prove exponential decay for this using the same arguments as in the proof of Theorem~\ref{thm:latticesystems} as long as one has an a-priori exponential decay bound for the matrix elements of $(h^{(p)})^{1/4}$. This can be shown if the $h^{(p)}$ are positive definite, diagonally dominant band matrices, satisfying the uniform norm bounds required in (\ref{eq:unifnormbounds}). In this case one can use the analyticity of $\varphi(x)=x^{1/4}$ in the right complex half plane to show exponential decay of matrix elements of $(h^{(p)})^{1/4}$, using arguments similar to those in, e.g., \cite{CramerEisert2006}.  For example, one can choose $h^{(p)} = c\idty + \delta T$, where $T$ is the next-neighbor hopping operator and $\delta < c/(2\nu)$.

Dealing with thermal states would require more work, as $(h^{(p)})^{1/2} h_n^{-1/2} \tanh ( \beta (h_n)^{1/2} ) (h^{(p)})^{1/2}$, appearing in $(\ref{eq:tsloccond})$, does not factorize into fractional powers of $h^{(p)}$ and $h^{(q)}$.

\vspace{.5cm}

(b) It is also natural to ask if Theorem~\ref{thm:latticesystems} extends to general graphs $G=(\Gamma, {\mathcal E})$ as long as they have polynomially bounded volume growth, i.e.\ the sets $\Lambda_n(x) = \{y: d(x,y)\le n\}$ grow polynomially in $n$ (uniform in $x$). The crucial ingredient into the proof of Proposition~A.3(c) of \cite{NSS}, which we use above, is the well-known Lifshitz tail argument leading to localization of the single-particle Hamiltonians $h_n$ near $E=0$. This requires to know a deterministic lower bound of the form $E_1-E_0 \ge C/n^2$ for the ground state gap of the discrete Graph Laplacian on $\Lambda_n(x)$. We are not aware of a general result establishing such a bound for graphs with polynomial volume growth. But whenever it is known, area laws for the ground and thermal states will follow.

\vspace{.5cm}

In general, graphs of bounded degree have exponential volume growth, with the prototypical example given by the Bethe lattice. It is known that in this case localization proofs for the single-particle Hamiltonians $h_n$ require sufficiently large disorder $g$, see e.g.\ \cite{AizenmanMolchanov} and \cite{AizenmanWarzel}. In fact, the results of \cite{AizenmanWarzel} establish that for the low-disorder Anderson model on the Bethe lattice the extended states regime may extend all the way to the spectral boundaries. However, if the disorder is sufficiently large, then we get area laws on general graphs of bounded degree:

\begin{thm}[Large Disorder] \label{thm:largedisorder}
Let $H_n$ be the disordered oscillator system (\ref{eq:localham1}) on a general graph $G=(\Gamma, {\mathcal E})$, satisfying (\ref{boundeddegree}) and Assumption~\ref{ass:springconst}. If the disorder parameter $g>0$ is sufficiently large, then the ground state and thermal states of $H_n$ satisfy area laws of the form (\ref{gsal}) and (\ref{tsal}), respectively. 
\end{thm}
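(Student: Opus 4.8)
The plan is to reduce Theorem~\ref{thm:largedisorder} to an application of Theorems~\ref{thm:gsal} and \ref{thm:tsal} exactly as was done in the proof of Theorem~\ref{thm:latticesystems}, the only difference being that the polynomial volume growth of $\Z^\nu$ is replaced by the exponential volume growth allowed under \eq{boundeddegree}. Since $G$ has bounded degree, \eq{Cmu} holds for every $\mu>\log N_{\rm max}$ but \emph{not} for arbitrarily small $\mu$; this is the essential new feature. Consequently, in order to invoke the hypotheses \eq{eq:gsloc1} and \eq{eq:tsloc1} we must produce localization bounds with a decay rate $\mu'$ satisfying $\mu'\ge\mu>\log N_{\rm max}$, i.e.\ a decay rate that can be made \emph{as large as we please}. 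This is where the large-disorder hypothesis enters: the strength of localization, and hence the achievable decay rate $\mu'$, should grow with the disorder parameter $g$, so that choosing $g$ large enough forces $\mu'>\log N_{\rm max}$.

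Concretely, first I would note that under Assumption~\ref{ass:springconst} we again have $h^{(p)}=\frac{1}{2m}\idty$ and $h_n=\frac{1}{2m}h^{(q)}_{\Lambda_n}$, with $h^{(q)}$ the Anderson Hamiltonian \eq{eq:Anderson}; the uniform bounds of Assumption~\ref{ass:bobs} hold via \eq{eq:hqnormbound}, which depends only on bounded degree and the boundedness of the $k_x$. So the task reduces verbatim to establishing \eq{eq:gsloc1} and \eq{eq:tsloc1} for some $\mu'>\log N_{\rm max}$. Second, I would invoke the fractional-moment (Aizenman--Molchanov) method for the Anderson model on a graph of bounded degree: for sufficiently large disorder one obtains exponential decay of fractional moments of the Green's function, uniformly in the volume, with a decay rate that increases with $g$. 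This is precisely the setting of \cite{AizenmanMolchanov}. Third, I would convert these Green's-function bounds into the eigenfunction-correlator bounds \eq{eq:gsloc1} and \eq{eq:tsloc1} near the spectral bottom $E=0$, using the same machinery referenced in the proof of Theorem~\ref{thm:latticesystems}, namely the analysis of singular eigenfunction correlators in Appendix~A of \cite{NSS}. The two functions $\varphi_1(t)=t^{-1/2}$ and $\varphi_2(t)=t^{-1/2}\tanh(\beta t^{1/2})$ are handled exactly as before: both are analytic in the right half plane and satisfy $|\varphi(t)|\le Ct^\alpha$ with $\alpha>-1$ (with $\alpha=0$ available for $\varphi_2$), so the singular behavior at $E=0$ is integrable against the localization bounds.

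The main obstacle is the quantitative matching of scales: it is not enough to have \emph{some} exponential decay; the decay rate $\mu'$ must exceed the threshold $\log N_{\rm max}$ dictated by the exponential volume growth in \eq{Cmu}. I would therefore track the dependence of the fractional-moment decay rate on $g$ and argue that it can be driven above any prescribed value by taking $g$ large. A secondary subtlety is that large-disorder localization for the Anderson model is typically stated for energies in the bulk or above a mobility edge, whereas here we need localization uniformly up to and including the band edge $E=0$ where the eigenfunction correlator is singular; for large disorder, however, the entire spectrum is localized, so band-edge behavior causes no difficulty beyond the integrability of $\varphi_1,\varphi_2$ already noted. Assembling these ingredients, for $g$ large enough one obtains \eq{eq:gsloc1} and \eq{eq:tsloc1} with $\mu'>\log N_{\rm max}\ge\mu$, and Theorems~\ref{thm:gsal} and \ref{thm:tsal} then yield the area laws \eq{gsal} and \eq{tsal}.
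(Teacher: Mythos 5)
Your proposal follows the same route as the paper's proof: reduce to establishing (\ref{eq:gsloc1}) and (\ref{eq:tsloc1}) with a rate $\mu' > \log N_{\rm max}$, obtain that rate from the Aizenman--Molchanov fractional-moment bound, whose decay rate ($\frac{1}{2-s}\log(g^s/C_1N_{\rm max})$ after conversion) can be driven arbitrarily high by increasing $g$, and then convert Green's-function decay into eigenfunction-correlator decay all the way down to the spectral edge $E=0$. The reduction, the role of large disorder, and your observation that at large disorder the entire spectrum is localized so that only the singularity of $\varphi_1(t)=t^{-1/2}$ at $t=0$ remains an issue, all match the paper.

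The one place your plan would break down if executed literally is the conversion step. You propose to handle $\varphi_1$ and $\varphi_2$ ``exactly as before,'' i.e., via the analyticity-plus-power-bound criteria ($|\varphi(t)|\le Ct^{\alpha}$, $\alpha>-1$) that the proof of Theorem~\ref{thm:latticesystems} feeds into Proposition~A.3(c) of \cite{NSS}. But the proof of that proposition rests on the Lifshitz-tail argument, which requires a deterministic bound $E_1-E_0\ge C/n^2$ on the ground-state gap of the graph Laplacian --- precisely the ingredient that the paper points out (remark (b) following Theorem~\ref{thm:latticesystems}) is unavailable on general bounded-degree graphs, so that route cannot be used here. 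The paper instead performs the conversion directly from the fractional-moment input: it establishes the bound (\ref{eq:fmefcor}) for eigenfunction correlators of functions $u$ with $|u|\le 1$; the thermal correlator then follows at once by taking $u(x)=\beta^{-1}x^{-1/2}\tanh(\beta x^{1/2})$, which is bounded by $1$, while the singular ground-state correlator requires decomposing $(0,E_{\rm max})$ into the intervals $I_j=\left(\frac{E_{\rm max}}{j+1},\frac{E_{\rm max}}{j}\right)$ and running the Riemann-sum argument from the proof of Proposition~A.3(b) of \cite{NSS}, where summability of $\sum_j \left(\frac{E_{\rm max}}{j+1}\right)^{-1/2}|I_j|^{1/(2-s)}$ imposes the additional restriction $s\in(2/3,1)$ on the fractional-moment exponent. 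Your appeal to ``integrability'' of $\varphi_1$ against the localization bounds is the right intuition, but the mechanism that makes it precise is this interval decomposition together with the constraint on $s$, not the criteria of Proposition~A.3(c).
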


\begin{proof}
The main difference to the proof of Theorem~\ref{thm:latticesystems} is that we now need to show (\ref{eq:gsloc1}) and (\ref{eq:tsloc1}), respectively,  for some $\mu'>\mu$, with $\mu$ the constant from (\ref{Cmu}). As explained there this means that we need to show that we can choose $\mu' > \log N_{\rm max}$. As a matter of fact, we will show that by increasing $g$ one can choose $\mu'$ {\em arbitrarily} large. This will follow by well established methods, but we will provide some detail as large disorder localization of singular eigenfunction correlators has not previously been discussed in the literature.

For any subset $\Lambda \subset \Gamma$ let $h_{\Lambda}$ be the restriction of the Anderson model (\ref{eq:Anderson}) to $\Lambda$, and let
\begin{equation}
G_{\Lambda}(x,y;z) = \langle \delta_x, (h_{\Lambda}-z)^{-1} \delta_y \rangle
\end{equation}
be its Green function. By adjusting arguments in Section~4 of \cite{Stolz2011}  to the case of general graphs considered here one gets the following fractional moment bound (which is essentially already contained in \cite{AizenmanMolchanov}):

For every $s\in (0,1)$ there exists a constant $C_1 = C_1(s, \tilde{\rho})$ such that
\begin{equation} \label{eq:fmbound}
\E \left( |G_{\Lambda}(x,y;z)|^s \right) \le \frac{C_1}{g^s} e^{-\log(g^s/C_1 N_{\rm max}) d(x,y)}
\end{equation}
for all $\Lambda \subset \Gamma$, $x,y \in \Lambda$, $g>0$ and $z\in \C \setminus \R$. In fact, if $\Lambda$ is finite, then one may choose $z\in \C$, including real values.

Next we relate fractional moments to eigenfunction correlators:

For every $s\in (0,1)$ there exists $C_2 = C_2(s,\tilde{\rho})$ such that
\begin{equation} \label{eq:fmefcor}
\E \left( \sup_{|u|\le 1} |\langle \delta_x, u(h_{\Lambda}) \chi_I(h_{\Lambda}) \delta_y \rangle| \right) \le C_2 \left( |g|^s \int_I \E(|G_{\Lambda}(x,y;E)|^s\,dE \right)^{1/(2-s)}
\end{equation}
for all finite $\Lambda \subset \Gamma$, $x,y\in \Lambda$, $g>0$ and bounded open intervals $I \subset \R$.

This can be proven following the $g$-dependence of the arguments in Section~6 of \cite{Stolz2011}, where the case $g=1$ is considered.

To conclude the proof of (\ref{eq:gsloc1}) and (\ref{eq:tsloc1}) we now apply these bounds to $\Lambda=\Lambda_n$ and, as before, write $h_n = h_{\Lambda_n}$, $n=1,2,\ldots$.

Choosing $u(x)= \beta^{-1} x^{-1/2} \tanh(\beta x^{1/2})$ (which satisfies $|u|\le 1$) and $I=(0,E_{\rm max}) := (0, 2\lambda N_{\rm max} + \frac{1}{2}g k_{\rm max})$ (which by (\ref{eq:hqnormbound}) almost surely contains the entire spectrum of $h_n$) and combining (\ref{eq:fmbound}) and (\ref{eq:fmefcor}) yields
\begin{equation} 
\E \left( |\langle \delta_x, h_n^{-1/2} \tanh(\beta h_n^{1/2}) \delta_y \rangle| \right) \le \beta C_2 (C_1 E_{\rm max})^{\frac{1}{2-s}} e^{-\frac{1}{2-s} \log(g^s/C_1 N_{\rm max}) d(x,y)}.
\end{equation}
This proves (\ref{eq:tsloc1}) for $g$ sufficiently large.

Proving (\ref{eq:gsloc1}) needs a bit more work, as this requires handling a {\it singular} eigenfunction correlator. This can be done by the Riemann sum argument previously used in the proof of Proposition~A.3(b) in \cite{NSS}. Decompose $I=(0,E_{\rm max})$ into
\begin{equation}
I_j = \left( \frac{E_{\rm max}}{j+1}, \frac{E_{\rm max}}{j} \right), \quad j=1,2, \ldots,
\end{equation}
and combine (\ref{eq:fmbound}) and (\ref{eq:fmefcor}) to get
\begin{equation} \label{eq:combinedwisdom}
\E \left( | \langle \delta_x, h_n^{-1/2} \delta_y \rangle| \right) \le C_2 C_1^{\frac{1}{2-s}} \left( \sum_{j=1}^{\infty} \left(\frac{E_{\rm max}}{j+1}\right)^{-1/2} |I_j|^{\frac{1}{2-s}} \right) e^{-\frac{1}{2-s} \log(g^s/C_1 N_{\rm max}) d(x,y)}.
\end{equation}
For $s\in (2/3,1)$ the series in (\ref{eq:combinedwisdom}) is summable, which concludes the proof of (\ref{eq:gsloc1}), again for $g$ sufficiently large.
\end{proof}

We conclude our discussion of applications by acknowledging that the types of disorder in oscillator systems which we have been able to handle is rather limited, essentially only covering Anderson-type (diagonal) randomness in $h^{(q)}$. Considering other types of disorder, such as random masses $m$ or coupling constants $\lambda$ in (\ref{eq:localham1}), is physically equally plausible, but not enough is known about the localization properties of the associated single-particle Hamiltonians.  Thus the applications provided here (as well as in \cite{NSS}) motivate further studies of single-particle random Hamiltonians, with the goal of covering other phyically relevant cases.

%%%%%%%%%%%%%%%%%%%%%%%
%
%
%	Quasi-Free Functional and States:
%
%
%%%%%%%%%%%%%%%%%%%%%%%%%

\section{Gaussian States and Their Logarithmic Negativity}
\label{sec:diagonalization}

The main goal of this section is to prove Theorem~\ref{thm:logneg}, which provides a formula
for the logarithmic negativity associated with an arbitrary finite volume $\Lambda_0$ for a class
of quasi-free states, including the ground and thermal states of the harmonic oscillator models
introduced in Section~\ref{sec:HOL}. As was the case in all previous results of this kind
\cite{Audenaert2002,Plenio2005,CramerEisert2006,Cramer2006}, we start from the ideas in \cite{Vidal2002}.

The logarithmic negativity is an upper bound for the entropy of entanglement. In the case of pure 
states ($T=0$), the latter is the von Neumann entropy of the state restricted to the observables 
localized in $\Lambda_0$. The restriction of a quasi-free state is again a quasi-free state and 
this property makes it possible to essentially reduce the calculation to diagonalizing a 
one-particle operator. Calculating the logarithmic negativity means finding the one-norm of
a partial transpose of the density matrix of the state or, equivalently, finding the norm of the
partial transpose of the state regarded as a linear functional on the algebra of observables.
The partial transpose of a quasi-free state, although in general not a state, is again a quasi-free 
(i.e.\ Gaussian) functional. This property makes is possible to find an explicit formula for the 
logarithmic negativity, see (\ref{lognegprop}). In Section~\ref{sec:proof} we prove the 
Area Law bound based on this formula, that is we prove Theorems \ref{thm:gsal} and \ref{thm:tsal}.
Since quasi-free functionals other than states, in particular, quasi-free functionals that  are not 
positive, have not been widely studied, we provide in this section the necessary elements needed
for the proof of  Theorem~\ref{thm:logneg} in reasonable detail.

The strategy of this section is as follows. Fix $\Lambda_0 \subset \Gamma$ finite. 
As is discussed in Section~\ref{sec:HOL}, for any finite $\Lambda \subset \Gamma$,
both the ground and thermal states of the oscillator systems we consider can be expressed
in terms of a density matrix $\rho$:
\begin{equation}
\omega(A) = {\rm Tr} \left[ \rho A \right] \quad \mbox{for any } A \in B(\mathcal{H}_{\Lambda}) \, .
\label{densitymatrix}\end{equation}
Here we have suppressed the dependence of the
state $\omega$ and the density matrix $\rho$ on the finite volume $\Lambda$. For $\Lambda$ with $\Lambda_0 \subset \Lambda$,
the logarithmic negativity of $\rho$ is
defined by 
\begin{equation} \label{eq:logneg3}
\mathcal{N}( \rho) = \log \left( \| \rho^{T_1} \|_1 \right) 
\end{equation}
where $\rho^{T_1}$ is the partial transpose of $\rho$ with respect to the decomposition 
as in (\ref{decomp}).  The one-norm of $\rho^{T_1}$ equals the norm of the linear functional
\begin{equation}
\omega^{T_1}(A) = {\rm Tr} \left[ \rho^{T_1} A \right] \quad \mbox{for any } 
A \in B(\mathcal{H}_{\Lambda}) \, ,
\end{equation}
which is well-defined exactly when $\Vert \rho^{T_1}\Vert_1 <\infty$ (proving the latter will be part of our argument below).
Motivated by this relationship, we will start by studying the partial transpose of quasi-free
functionals on the Weyl algebra (see Sections~\ref{sec:set-up} and \ref{sec:basex}),
defined by
\begin{equation}
\omega^{T_1}(W(f)) = e^{- \frac{1}{4} (f, \tilde{M} f)}
\end{equation}
in terms of a real, symmetric, positive definite matrix $\tilde{M}$. This functional need 
not be a state (it is not necessarily positive), but a version of Williamson's Theorem (see
Proposition~\ref{prop:williamson} below), implies that there exists a symplectic matrix 
$S$ which diagonalizes  $\tilde{M}$ and therefore
\begin{equation}
\omega^{T_1}(W(Sf)) = \prod_j e^{- \frac{1}{4} \lambda_j |f(j)|^2}
\end{equation} 
where $\lambda_j>0$ are the symplectic eigenvalues of $\tilde{M}$; again, more on this can be 
found in Section~\ref{sec:will}. By explicit construction,
we demonstrate in Section~\ref{sec:1norm} the existence of a trace class operator $\tilde{\rho}$
such that
\begin{equation}
\omega^{T_1}(W(Sf)) = {\rm Tr} \left[ \tilde{\rho} W(f) \right] \, .
\end{equation}
Since there can only be one trace class operator satisfying this relationship 
(see Lemma~\ref{lem:weakdense} below), we conclude that 
$\tilde{\rho}$ is unitarily equivalent to $\rho^{T_1}$. 
It is then straightforward to find the one-norm of $\tilde{\rho}$ using its explicit form, and
this yields the expression of the logarithmic negativity in Theorem~\ref{thm:logneg}.
%%%%%%%%%%%%%%%%%%%%%%
%
%
%          On CCR algebras and quasi-free functionals
%
%
%%%%%%%%%%%%%%%%%%

\subsection{On Weyl algebras and quasi-free functionals} \label{sec:set-up}

We begin by introducing Weyl algebras, or CCR algebras, in the abstract setting. 
Here, we are brief and refer the interested reader to \cite{BratRob} for further background and details.
Next, we describe quasi-free functionals on the Weyl algebra; these can be regarded as generalizations of the 
well-studied class of quasi-free states. The need for this generalization stems from our interest in logarithmic negativity. In particular,
the partial transpose of a density matrix associated to a state on the Weyl algebra induces a functional that is
not necessarily a state.  

Let $\mathcal{D}$ be any real-linear space equipped with a non-degenerate,
symplectic bilinear form $\sigma$, i.e.
$\sigma : \mathcal{D} \times \mathcal{D} \to \mathbb{R}$ with the property that
if $\sigma(f,g) = 0$ for all $f \in \mathcal{D}$, then $g = 0$, and
\begin{equation} \label{eq:symp}
\sigma(f,g) = - \sigma(g,f) \quad \mbox{for all } f, g \in \mathcal{D} .
\end{equation}
The Weyl operators over $\mathcal{D}$ are introduced by associating non-zero elements
$W(f)$ to each $f \in \mathcal{D}$ which satisfy
\begin{equation} \label{eq:invo}
W(f)^* = W(-f) \quad \mbox{for each } f \in \mathcal{D} \, ,
\end{equation}
and
\begin{equation} \label{eq:weylrel}
W(f) W(g) = e^{-i \sigma(f,g)/2} W(f+g) \quad \mbox{for all } f, g \in \mathcal{D} \, .
\end{equation}
As is proven e.g. in Theorem 5.2.8 \cite{BratRob}, there is a unique, up to $*$-isomorphism, $C^*$-algebra generated
by these Weyl operators with the property that $W(0) = \idty$, $W(f)$ is unitary
for all $f \in \mathcal{D}$, and $\| W(f) - \idty \| = 2$ for all $ f \in \mathcal{D} \setminus \{0 \}$.
This algebra, commonly known as the Weyl algebra (also CCR algebra) over
$\mathcal{D}$ will be denoted by $\mathcal{W}( \mathcal{D})$.

Fix a real-linear space $\mathcal{D}$ and the corresponding Weyl algebra 
$\mathcal{W} = \mathcal{W}(\mathcal{D})$.
$\omega$ is said to be a quasi-free functional on $\mathcal{W}$ if
\begin{equation} \label{quasi-free}
\omega(W(f)) = e^{i r(f) - \frac{1}{4}s(f,f)} \quad \mbox{for all } f \in \mathcal{D}
\end{equation}
where $r$ is a real-linear functional and $s$ is a symmetric, real bilinear form on $\mathcal{D}$.

It is clear that equation (\ref{quasi-free}) uniquely defines a linear functional on
a dense subalgebra of $\mathcal{W}$. Due 
to the form of (\ref{quasi-free}), such functionals are also referred to as Gaussian.
Not all these functionals are states, i.e. positive linear functionals on $\mathcal{W}$,
even if we assume they are continuous. 
In fact, it is well-known, see \cite{Ver}, that a functional of the form (\ref{quasi-free}) is a 
state if and only if
\begin{equation} \label{positivity}
\sigma(f,g)^2\leq s(f,f) s(g,g) \quad \mbox{for all } f,g \in \mathcal{D} \, .
\end{equation}
Observe that the inequality (\ref{positivity}) above is equivalent to
\begin{equation}
2\sigma(f,g)\leq s(f,f) + s(g,g)\,.
\end{equation}

%%%%%%%%%%%%%%%%%%%%%%
%
%
%          Basic Examples
%
%
%%%%%%%%%%%%%%%%%%

\subsection{Oscillator Model Examples} \label{sec:basex}

In this section, we briefly review the fact that the ground and thermal states of the harmonic oscillator
lattices introduced in Section~\ref{subsec:setup} are quasi-free, in the sense discussed above. 
We end this section with a discussion on corresponding partially transposed functionals. 

Let $\Lambda \subset \Gamma$ be finite. The role of $\mathcal{D}$ is played by the
complex Hilbert space $\mathcal{D}_{\Lambda} = \ell^2(\Lambda)$ with the
symplectic form related to the inner product by
\begin{equation}
\sigma(f,g) = \mbox{Im} \left[ \langle f, g \rangle \right] \, .
\end{equation}
The corresponding Weyl algebra $\mathcal{W}_{\Lambda}$ has a concrete realization: for each $f \in \ell^2(\Lambda)$, it is well-known that
\begin{equation} \label{eq:concreteWeyl}
W(f) = {\rm exp}\left[ i \sum_{j \in \Lambda} \left( {\rm Re}[f(j)] q_j + {\rm Im}[f(j)] p_j \right) \right]
\end{equation}
defines a unitary Weyl operator in $\mathcal{B}( \mathcal{H}_{\Lambda})$ satisfying
(\ref{eq:invo}) and (\ref{eq:weylrel}) above. Here, for each $j \in \Lambda$,
$q_j$ and $p_j$ are the position and momentum operators introduced in Section~\ref{subsec:setup}.

The following basic fact is important for us. Here $B_1(\mathcal{H}_{\Lambda})$ denotes the trace class operators on $\mathcal{H}_{\Lambda}$.

\begin{lemma} \label{lem:weakdense}
If $A \in B_1(\mathcal{H}_{\Lambda})$ and
\begin{equation} \label{eq:weakzero}
{\rm Tr}\,[ A\,W(f)] = 0 \quad \mbox{for all $f \in \ell^2(\Lambda)$},
\end{equation}
then $A=0$.
\end{lemma}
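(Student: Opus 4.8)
The plan is to show that the Weyl operators $\{W(f) : f \in \ell^2(\Lambda)\}$ span a weak-$*$ dense subspace of $B(\mathcal{H}_\Lambda)$, so that a trace-class operator annihilating all of them must vanish. Since $B_1(\mathcal{H}_\Lambda)$ is the predual of $B(\mathcal{H}_\Lambda)$ and the pairing $(A,B)\mapsto \operatorname{Tr}[AB]$ is the canonical duality, the condition (\ref{eq:weakzero}) says precisely that $A$, viewed as a linear functional on $B(\mathcal{H}_\Lambda)$, annihilates the linear span of the Weyl operators. If that span is weak-$*$ dense, then $A$ annihilates all of $B(\mathcal{H}_\Lambda)$, forcing $\operatorname{Tr}[A\,X]=0$ for every bounded $X$, and in particular $\operatorname{Tr}[A\,A^*]=\|A\|_2^2=0$, whence $A=0$.

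To make this concrete I would exploit the explicit realization (\ref{eq:concreteWeyl}): writing $f(j)$ in terms of its real and imaginary parts, $W(f)$ is the exponential of $i$ times a real linear combination of the canonical operators $q_j,p_j$. On $\mathcal{H}_\Lambda = L^2(\mathbb{R}^\Lambda,dq)$ this is nothing but a combination of a multiplication operator $e^{i\sum_j a_j q_j}$ and a translation operator $e^{i\sum_j b_j p_j}$ (which shifts the position variables by $b$), with $a_j=\operatorname{Re} f(j)$, $b_j = \operatorname{Im} f(j)$. First I would record that these two families separately are the building blocks of ordinary Fourier analysis on $\mathbb{R}^\Lambda$. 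Because $\Lambda$ is finite, $\mathcal{H}_\Lambda\cong L^2(\mathbb{R}^N)$ with $N=|\Lambda|$, and the point is that products/combinations of these exponentials generate, in a suitable sense, enough operators to determine the integral kernel of $A$.

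The cleanest route is to compute the matrix element of (\ref{eq:weakzero}) against coherent states or to compute the integral kernel directly. Let $K(x,y)$ denote the integral kernel of the trace-class (hence Hilbert–Schmidt) operator $A$. Evaluating $\operatorname{Tr}[A\,W(f)]$ reduces, after using the product formula that recombines the multiplication and translation parts of $W(f)$ via the Weyl relation (\ref{eq:weylrel}), to an integral transform of $K$ against characters $e^{i(a\cdot x + b\cdot y)}$ over all $a,b\in\mathbb{R}^N$ — essentially the full Fourier transform of $K$ in both of its arguments (up to a change of variables coming from the translation part). The hypothesis that this vanishes for all $f$, i.e.\ for all $(a,b)$, then says that the Fourier transform of $K$ is identically zero, so $K=0$ by Fourier inversion and injectivity, giving $A=0$.

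The main obstacle, and the step requiring genuine care rather than bookkeeping, is justifying the passage from the trace expression to the honest Fourier transform of the kernel: one must verify that $\operatorname{Tr}[A\,W(f)]$ really equals $\int K(x,y)\,[\text{character}]\,dx\,dy$ with the integral absolutely convergent, and keep track of the recombination of $e^{i a\cdot q}$ and $e^{i b\cdot p}$ through the Weyl/Baker–Campbell–Hausdorff relation, since these do not commute and produce a phase factor. The trace-class assumption on $A$ is what guarantees $K\in L^2(\mathbb{R}^N\times\mathbb{R}^N)$ (in fact trace class gives more), so Fourier uniqueness applies and the interchange of trace and integral is legitimate. Once that identification is in place, the conclusion is immediate from injectivity of the Fourier transform.
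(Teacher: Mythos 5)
Your proposal is correct, but it takes a genuinely different route from the paper. The paper's proof is operator-algebraic: it invokes irreducibility of the Weyl algebra $\mathcal{W}_\Lambda$ in $B(\mathcal{H}_\Lambda)$ (citing Proposition~5.2.4(3) of Bratteli--Robinson) together with von Neumann's double-commutant theorem to conclude that $\mathcal{W}_\Lambda$ is weakly dense in $B(\mathcal{H}_\Lambda)$, and then finishes with the duality formula $\|A\|_1 = \sup_{\|C\|\le 1}|\operatorname{Tr} AC|$. Your first paragraph describes exactly this density-plus-duality framing, but your actual mechanism is different: you compute $\operatorname{Tr}[A\,W(f)]$ explicitly through the integral kernel $K$ of $A$ and reduce the statement to injectivity of the Fourier transform. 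That argument does work: writing $W(f)$ as a phase times $e^{ia\cdot q}e^{ib\cdot p}$, one finds $\operatorname{Tr}[A\,W(f)]$ equals a unimodular phase times $\int e^{ia\cdot y}K(y+b,y)\,dy$, so vanishing for all $(a,b)$ forces $K(y+b,y)=0$ for a.e.\ $y$ and every $b$, hence $K=0$. The technical point you rightly flag --- that the trace really equals this absolutely convergent integral, even though kernels are only defined almost everywhere and the diagonal is a null set --- is best handled by factoring the trace-class operator as $A=BC$ with $B,C$ Hilbert--Schmidt; then $\operatorname{Tr}[A\,W(f)]=\int\!\!\int K_C(x,y)\,K_{W(f)B}(y,x)\,dy\,dx$ converges absolutely by Cauchy--Schwarz and collapses to the displayed formula, so no appeal to continuity of the kernel is needed. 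The trade-off: your route is elementary and self-contained (no von Neumann algebra machinery; it amounts to proving injectivity of the Fourier--Weyl transform directly), whereas the paper's route is shorter, leans on standard cited results, and sidesteps all kernel bookkeeping --- at the cost of a glossed subtlety, namely that weak-operator convergence $C_n \to C$ only yields $\operatorname{Tr} AC_n \to \operatorname{Tr} AC$ along \emph{bounded} nets, which is where Kaplansky-type density is implicitly used.
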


\begin{proof}
This is a consequence of irreducibility of the Weyl algebra $\mathcal{W}_{\Lambda}$ in $B(\mathcal{H}_{\Lambda})$ (e.g.\ Proposition~5.2.4(3) of \cite{BratRob}, the Fock space representation of the Weyl operators used there is equivalent to the representation (\ref{eq:concreteWeyl}) when working in the Hermite function basis of $\mathcal{H}_{\Lambda} = L^2(\mathbb{R}^{\Lambda})$)  and von Neumann's double-commutant Theorem, showing that $\mathcal{W}_{\Lambda}$ is weakly dense in $B(\mathcal{H}_{\Lambda})$. The identity (\ref{eq:weakzero}) implies that ${\rm Tr}\,AC=0$ for all $C\in \mathcal{W}_{\Lambda}$. This carries over to general $C\in B(\mathcal{H}_{\Lambda})$ due to the fact that $C_n \stackrel{w}{\to} C$ implies ${\rm Tr}\,AC_n \to {\rm Tr}\,AC$. Finally, use that 
\begin{equation}
\|A\|_1 = \sup_{C\in B(\mathcal{H}_{\Lambda}), \,\|C\|\le 1} |{\rm Tr}\,AC|
\end{equation}
to conclude that $A=0$.
\end{proof}

Given an operator $H_{\Lambda}$, as in (\ref{eq:localham}), denote by $\rho_{\Lambda}$
the orthogonal projection onto the unique, normalized ground state of $H_{\Lambda}$.
A ground state functional $\omega_{\Lambda}$ on $\mathcal{W}_{\Lambda}$ is defined by setting
\begin{equation}
\omega_{\Lambda}(W(f)) = {\rm Tr} \left[ \rho_{\Lambda} W(f) \right]  \quad \mbox{for all } f \in \ell^2( \Lambda) \, .
\end{equation}
Here, and in what follows, we will regard the set $\Lambda$ as fixed and simply write $\omega$ and $\rho$. 
It will also be convenient to identify $\ell^2(\Lambda) = \ell^2(\Lambda; \mathbb{C})$ with
$\ell^2(\Lambda; \mathbb{R}) \oplus \ell^2(\Lambda; \mathbb{R})$, i.e.,
\begin{equation} \label{real_basis}
f \in \ell^2(\Lambda; \mathbb{C}) \quad \sim \quad \tilde{f} = \left( \begin{array}{c} {\rm Re}[f] \\ {\rm Im}[f] \end{array} \right) \in \ell^2(\Lambda; \mathbb{R}) \oplus \ell^2(\Lambda; \mathbb{R}) \, .
\end{equation}
In this case, one calculates that
\begin{equation} \label{defJ}
\sigma(f,g) = (J \tilde{f}, \tilde{g}) \quad \mbox{where} \quad J = \left( \begin{array}{cc} 0 & - \idty \\ \idty & 0 \end{array} \right) \, ,
\end{equation}
and $( \cdot, \cdot)$ is the inner product on the direct sum. To ease the notation, we will just write $f = \tilde{f}$ where this 
identification is to be understood.
A well-known calculation, see e.g. \cite{Schuch2006}, shows that
\begin{equation} \label{eq:Weylexp}
\omega(W(f)) = e^{- \frac{1}{4} ( f, M f )}
\end{equation}
where $M$ is the positive definite matrix
\begin{equation} \label{Mmat}
M = \left( \begin{array}{cc} (h_{\Lambda}^{(p)})^{1/2} h_{\Lambda}^{-1/2} (h_{\Lambda}^{(p)})^{1/2} & 0 \\ 0 & (h_{\Lambda}^{(p)})^{-1/2} h_{\Lambda}^{1/2} (h_{\Lambda}^{(p)})^{-1/2} \end{array} \right)
\end{equation}
and $h_{\Lambda} = (h_{\Lambda}^{(p)})^{1/2} h_{\Lambda}^{(q)} (h_{\Lambda}^{(p)})^{1/2}$ is as in (\ref{effsph}). $M$, as above, is proportional to the real part of the ground state covariance
matrix, i.e., the $2|\Lambda| \times 2 | \Lambda|$ matrix $C$ given by
\begin{equation}
C = (c_{jk}) \quad \mbox{where} \quad c_{jk} = {\rm Tr} \left[ \rho \, x_j x_k \right]  \quad \mbox{and} \quad x = \left( \begin{array}{c} q \\ p \end{array} \right) \, .
\end{equation}
In fact, one easily checks that $2C = M - iJ$ with $J$ as in (\ref{defJ}).

Similarly, for any $\beta>0$ and $\Lambda$ finite, a thermal state functional $\omega_{\beta}$ on
$\mathcal{W}_{\Lambda}$ is given by
\begin{equation}
\omega_{\beta}(W(f)) = {\rm Tr}[ \rho_{\beta} W(f)]  = e^{- \frac{1}{4} ( f, M_{\beta} f)}
\end{equation}
where $\rho_{\beta}$ is the thermal state density matrix, see e.g. (\ref{thermalstate}), and the final equality above 
is again the result of a well-known calculation. Here
\begin{equation} \label{eq:mbeta}
M_{\beta} = \left( \begin{array}{cc} (h_{\Lambda}^{(p)})^{1/2} \coth( \beta h_{\Lambda}^{1/2}) h_{\Lambda}^{-1/2} (h_{\Lambda}^{(p)})^{1/2} & 0 \\ 0 & (h_{\Lambda}^{(p)})^{-1/2}  \coth( \beta h_{\Lambda}^{1/2}) h_{\Lambda}^{1/2} (h_{\Lambda}^{(p)})^{-1/2} \end{array} \right) \, 
\end{equation}
and it satisfies $2C_{\beta} = M_{\beta} - i J$ for the corresponding thermal state covariance matrix.
It is clear that both $\omega$ and $\omega_{\beta}$ define quasi-free functionals on $\mathcal{W}_{\Lambda}$
in the sense of (\ref{quasi-free}).

Using (\ref{positivity}), one also readily checks that the quasi-free functionals introduced above, i.e. both $\omega$ and $\omega_{\beta}$,
are states. In fact, let
\begin{equation}
R = \left( \begin{array}{cc} M_1^{1/2} & 0 \\ 0 & M_2^{1/2} \end{array} \right)
\end{equation}
where $M_1$ and $M_2$ are, respectively, the upper left and lower right entries in the matrix $M$ from (\ref{Mmat}).
It is clear that $R$ is symplectic, i.e.,
\begin{equation}
R^T J R = J \, 
\end{equation}
and therefore, $R$ leaves the symplectic form invariant, i.e. 
\begin{equation}
\sigma(f,g) = (Jf,g) = (JRf, Rg) = \sigma(Rf, Rg)
\end{equation}
In this case,
\begin{equation}
\sigma(f,g)^2 = |(JRf, Rg)|^2 \leq \| Rf \|^2 \|Rg \|^2 = (f, Mf) (g, Mg)
\end{equation}
and (\ref{positivity}) holds with $s(f,f) = (f, Mf)$. With the relation
\begin{equation}
\coth(x) = 1 + \frac{2}{e^{2x} -1} \, ,
\end{equation}
it is clear that $M \leq M_{\beta}$, and thus the above argument proves that all of these functionals are states.

In fact, the quasi-free states for finite oscillator systems are always given by a density matrix on the Hilbert space as in (\ref{densitymatrix}).

As indicated previously, we are mainly interested in the logarithmic negativity
associated to the above states. Motivated by calculations
in Appendix~\ref{sec:partial_transpose}, see e.g. (\ref{eq:weyltranspose}), we define a partially transposed ground state functional
by setting 
\begin{equation} 
\omega^{T_1}(W(f))= e^{-\frac{1}{4} (f,\tilde M f)} \quad \mbox{ with } \quad
\tilde M = \mat{\idty}{0}{0}{\mathbb{P}} M \mat{\idty}{0}{0}{\mathbb{P}}\, ,
\label{transposed_functional}\end{equation}
$M$ is as in (\ref{Mmat}), $\Lambda_0 \subset \Lambda$ fixed, and $\mathbb{P}$ is the diagonal matrix with
\begin{equation} \label{eq:diagP}
\mathbb{P}_{xx}= \left\{ \begin{array}{rl} -1 & \mbox{if } x \in \Lambda_0 \\ 1
& \mbox{if } x \not\in \Lambda_0\, . \end{array} \right.
\end{equation}
This functional is quasi-free, self-adjoint ($\omega^{T_1}(W(f)^*) = \omega^{T_1}(W(f))$ due to (\ref{eq:invo})), and normalized such that
$\omega^{T_1}(\idty)=1$, but in general, it is not positive and therefore not
a state. A partially transposed thermal state functional, $\omega_{\beta}^{T_1}$, is 
analogously defined by replacing $M$ above with $M_{\beta}$ as in (\ref{eq:mbeta}).  
 
%%%%%%%%%%%%%%%%%%%%%%
%
%
%          On diagonalizing quasi-free functionals
%
%
%%%%%%%%%%%%%%%%%%

\subsection{Diagonalizing quasi-free functionals} \label{sec:will}
In this section, we return to the general setting of Section ~\ref{sec:set-up} to discuss 
the diagonalization of quasi-free functionals. Recall that for any real-linear space $\mathcal{D}$
equipped with a non-degenerate, symplectic form $\sigma$, a quasi-free functional
$\omega$ on the Weyl algebra $\mathcal{W} = \mathcal{W}(\mathcal{D})$ has the form
\begin{equation}
\omega(W(f)) = e^{ir(f) - \frac{1}{4} s(f,f)} \quad \mbox{for all } f \in \mathcal{D} \, ,
\end{equation} 
where $r$ is a real-linear functional and $s$ is a symmetric, real bilinear form.
As is the case in our examples, we will assume $\mathcal{D}$ is finite dimensional.
In fact, without loss of generality, we will assume $\mathcal{D} = \ell^2( \Lambda)$
for some finite set $\Lambda$. By diagonalizing $\omega$, we mean finding an automorphism $\alpha$ of the 
Weyl algebra (or, equivalently, a unitary or anti-unitary  transformation on Fock space) 
for which 
\begin{equation} \label{eq:diagfun}
\omega(\alpha(W(f)))=\prod_{k=1}^n e^{-\frac{1}{2}\lambda_k |f(k)|^2}
\end{equation}
for some $\lambda_k\in \mathbb{R}$ and $n = | \Lambda|$. This will enable us to calculate explicitly unitarily
invariant quantities such as the entropy or the $p$-norm of the trace class operator
associated with $\omega$.

We start by noting that, without loss of generality, we may assume that $r=0$. This is because
for any such real-linear functional $r$, there exists $g \in \mathcal{D}$ such that $r(f)=\sigma(f,g)$.
In this case, the automorphism $\tilde{\alpha}$ defined by
\begin{equation}
\tilde{\alpha}(W(f)) = W(g)^* W(f) W(g) = W(-g)W(f) W(g)=e^{-i\sigma(f,g)}W(f),
\end{equation}
where we have used the Weyl relations (\ref{eq:invo}) and (\ref{eq:weylrel}), satisfies
\begin{equation}
\omega(\tilde{\alpha}(W(f)))=e^{-\frac{1}{4}s(f,f)} \, .
\end{equation}

For the next step, we diagonalize the form $s$. 
This is done by choosing a basis in $\mathcal{D}$ and representing $s$
in terms of a $2n \times 2n$ real, symmetric matrix $M$. A well-known version of
Williamson's Theorem (see, e.g., \cite{SCS99}) then provides the existence of a symplectic matrix $S$
which diagonalizes $M$; we state this as Proposition~\ref{prop:williamson} below. 
As is proven e.g. in Theorem 5.2.8 of \cite{BratRob}, such an $S$ induces an automorphism
on $\mathcal{W}$, in terms of which (\ref{eq:diagfun}) is then clear.

\begin{prop}\label{prop:williamson}
Let $n \geq 1$ and suppose that $M$ is a real symmetric positive definite $2n\times 2n$ matrix.
Then, there exists a symplectic $S$ such that
\begin{equation}
S^T M S = \mat{\mathcal{L}}{0}{0}{\mathcal{L}}
\end{equation}
where $\mathcal{L}$ is a diagonal matrix with entries $\lambda_k > 0$
for $k=1,\ldots, n$. The numbers $\lambda_k$ are the positive eigenvalues of
$iM^{1/2}JM^{1/2}$; they are also the positive imaginary parts of the eigenvalues of $MJ$.
If, and only if, in addition,
\begin{equation}
M+ iJ\geq 0
\label{add-cond}\end{equation}
we have $\lambda_k \geq 1$ for all $k=1,\ldots, n$. 
\end{prop}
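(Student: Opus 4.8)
The plan is to prove the existence of the symplectic diagonalizing $S$ by passing to the auxiliary operator $A = i M^{1/2} J M^{1/2}$, which is Hermitian, and reading off the symplectic eigenvalues from its spectrum. First I would observe that since $M$ is real symmetric positive definite, $M^{1/2}$ is well-defined, real symmetric, and invertible. Because $J^T = -J$ and $J$ is real, the matrix $A = i M^{1/2} J M^{1/2}$ satisfies $A^* = -i M^{1/2} J^T M^{1/2} = i M^{1/2} J M^{1/2} = A$, so $A$ is Hermitian and has real eigenvalues. Moreover $A$ is similar to $i M J$ (conjugating by $M^{1/2}$), so its eigenvalues coincide with those of $iMJ$, which gives the stated identification of the $\lambda_k$ with the positive imaginary parts of the eigenvalues of $MJ$. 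Since $J$ is invertible and $M^{1/2}$ is invertible, $A$ is invertible, so $0$ is not an eigenvalue; and since $A$ is similar to $-A$ (conjugate by $J$, using $J^{-1} A J = i M^{1/2} J^{-1} J J M^{1/2} \cdot(\ldots)$, more directly: $\bar A = -A$ as $A$ is purely imaginary times a real matrix combination, so the nonzero spectrum comes in $\pm$ pairs), the eigenvalues split into $n$ positive values $\lambda_1,\ldots,\lambda_n>0$ and their negatives $-\lambda_1,\ldots,-\lambda_n$.

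Next I would construct $S$ from an orthonormal eigenbasis of $A$. Let $\{e_k\}_{k=1}^n$ be orthonormal eigenvectors of $A$ for the eigenvalues $\lambda_k$. Because $A$ anticommutes with complex conjugation in the appropriate sense (its reality structure), or equivalently because $A$ and $-A$ are unitarily related via $J$, one shows that $\{J\text{-partners of } e_k\}$ furnish eigenvectors for $-\lambda_k$; organizing the real and imaginary parts of these eigenvectors into columns produces a real matrix $O$ with $O^T M O$ block-diagonal. The standard device is to set $S = M^{-1/2} O D$, where $O$ diagonalizes $A$ and $D$ is a suitable diagonal scaling chosen so that $S$ is symplectic ($S^T J S = J$) and simultaneously $S^T M S = \operatorname{diag}(\mathcal{L},\mathcal{L})$ with $\mathcal{L} = \operatorname{diag}(\lambda_1,\ldots,\lambda_n)$. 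Verifying $S^T J S = J$ reduces to the fact that the eigenvectors of $A$ can be normalized with respect to the indefinite pairing induced by $J$, and this normalization is exactly where the scaling $D$ enters.

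Finally, for the equivalence with the positivity condition \eqref{add-cond}, I would argue through the diagonalizing $S$. Applying $S^T(\cdot)S$ to $M + iJ$ and using $S^T M S = \operatorname{diag}(\mathcal{L},\mathcal{L})$ together with $S^T(iJ)S = iJ$ (since $S$ is symplectic), the condition $M + iJ \ge 0$ becomes $\operatorname{diag}(\mathcal{L},\mathcal{L}) + iJ \ge 0$. In the $2\times 2$ block indexed by each mode $k$, this is the positivity of $\bigl(\begin{smallmatrix} \lambda_k & -i \\ i & \lambda_k \end{smallmatrix}\bigr)$, whose eigenvalues are $\lambda_k \pm 1$; nonnegativity holds if and only if $\lambda_k \ge 1$. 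Since the $2n \times 2n$ matrix block-decomposes (after a symplectic reshuffle of coordinates) into these $2\times 2$ blocks, $M + iJ \ge 0$ holds if and only if $\lambda_k \ge 1$ for all $k$. The main obstacle I anticipate is the careful construction and normalization of $S$ in the second paragraph: ensuring the chosen real eigenbasis of $A$ can be scaled to be simultaneously $M$-orthonormal and symplectic requires tracking the reality and $J$-pairing of the eigenvectors precisely, and this bookkeeping — rather than any deep idea — is where the proof's technical weight lies.
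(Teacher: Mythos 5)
Your proposal is correct, and on the one part of the proposition that the paper proves in-house---the equivalence of $M+iJ\geq 0$ with $\lambda_k\geq 1$ for all $k$---your argument coincides with the paper's: conjugate by $S$, use $S^T(iJ)S=iJ$, and reorder into the $2\times 2$ blocks $\mat{\lambda_k}{-i}{i}{\lambda_k}$, which are nonnegative precisely when $\lambda_k\geq 1$. Where you differ is the existence statement: the paper does not prove it at all, but cites Williamson's theorem \cite{williamson1936} and the proof of Theorem~3 in \cite{SCS99}, both for the symplectic diagonalization and for the identification of the $\lambda_k$ as the positive eigenvalues of $iM^{1/2}JM^{1/2}$, whereas you sketch the standard construction; your version is self-contained, the paper's is shorter. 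Your sketch does go through, and the deferred ``bookkeeping'' is indeed routine, but the one identity worth making explicit---because it is exactly what makes the ansatz $S=M^{-1/2}OD$ symplectic---is $M^{-1/2}JM^{-1/2}=-\bigl(M^{1/2}JM^{1/2}\bigr)^{-1}$ (a consequence of $J^{-1}=-J$), combined with the fact that $O^TK^{-1}O=(O^TKO)^{-1}$ for orthogonal $O$. Concretely, writing the $\lambda_k$-eigenvectors of $A$ as $a_k+ib_k$ and letting $O$ have columns $\sqrt{2}\,a_k,\sqrt{2}\,b_k$, orthonormality of the eigenvectors gives $O^TO=\idty$ and $O^TM^{1/2}JM^{1/2}O=\mat{0}{-\mathcal{L}}{\mathcal{L}}{0}$; the identity above then yields $O^TM^{-1/2}JM^{-1/2}O=\mat{0}{-\mathcal{L}^{-1}}{\mathcal{L}^{-1}}{0}$, so that with $D=\mat{\mathcal{L}^{1/2}}{0}{0}{\mathcal{L}^{1/2}}$ the matrix $S=M^{-1/2}OD$ satisfies both $S^TMS=DO^TOD=\mat{\mathcal{L}}{0}{0}{\mathcal{L}}$ and $S^TJS=D\,O^TM^{-1/2}JM^{-1/2}O\,D=J$. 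One slip in your wording: it is not $O^TMO$ that comes out block-diagonal (nothing constrains $O^TMO$; the matrix $O$ is merely orthogonal)---what $O$ block-diagonalizes is the antisymmetric matrix $M^{1/2}JM^{1/2}$, and the diagonal object is $S^TMS$.
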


\begin{proof}
The existence of a symplectic matrix $S$ that diagonalizes $M$ in the prescribed
fashion is a special case of Wiliamson's Theorem \cite{williamson1936}, a particularly simply
proof of which is given in \cite{SCS99}. This includes the statement that the $\lambda_k$ are
positive. It follows from the proof of Theorem 3 in \cite{SCS99}
that the $\lambda_k$ are the positive eigenvalues of $iM^{1/2}JM^{1/2}$.

If the additional condition (\ref{add-cond}) is satisfied, we have
\begin{equation}
0\leq S^T(M+ iJ)S = \mat{\mathcal{L}}{0}{0}{\mathcal{L}} +  iJ
=\mat{\mathcal{L}}{-i\idty}{i\idty}{\mathcal{L}}
\end{equation}
After reordering, the RHS is block diagonal with two-dimensional blocks of the
form
\begin{equation}
\mat{\lambda_k}{- i}{ i}{\lambda_k}
\end{equation}
which is non-negative definite if and only if $\lambda_k\geq 1$.
\end{proof}

\begin{rem} \label{rem:sympev}
In the special case where $M$ is block diagonal with two $n\times n$ blocks, i.e., of the form
\begin{equation}
M=\mat{M_1}{0}{0}{M_2} \, ,
\label{blockform}\end{equation}
then the hermitian matrix $iM^{1/2}JM^{1/2}$ takes the form
\begin{equation}
iM^{1/2}JM^{1/2}= i\mat{0}{-M_1^{1/2}M_2^{1/2}}{M_2^{1/2}M_1^{1/2}}{0}
\end{equation}
the square of which is
\begin{equation}
(iM^{1/2}JM^{1/2})^2=\mat{M_1^{1/2}M_2M_1^{1/2}}{0}{0}{M_2^{1/2}M_1M_2^{1/2}}\,.
\end{equation}
Thus, in this case, the symplectic eigenvalues of $M$ can also be found e.g. as the positive square roots
of the eigenvalues of $M_1^{1/2}M_2M_1^{1/2}$. Since $M_2^{1/2}M_1M_2^{1/2}$
and $M_1M_2$, are both similar to $M_1^{1/2}M_2M_1^{1/2}$ they have the same eigenvalues, and
therefore, they could, as well, be used in determining these symplectic eigenvalues.
\end{rem}

%%%%%%%%%%%%%%%%%%%%%%
%
%
%          1 norm calc.
%
%
%%%%%%%%%%%%%%%%%%

\subsection{A formula for the logarithmic negativity} \label{sec:1norm}

Now, we can combine the information of the previous subsections to prove
the expression for the logarithmic negativity of ground and thermal states in 
Theorem~\ref{thm:logneg}. The essential observation is that $\tilde M$, 
see e.g (\ref{transposed_functional}), is real symmetric and positive whenever 
$M$ is, and therefore, the first part of Proposition \ref{prop:williamson} applies. 
This symplectically diagonalizes $\tilde M$. The second part of the above-mentioned 
proposition does {\em not} apply, however, because 
\begin{equation}
\mat{\idty}{0}{0}{\mathbb{P}}\mat{0}{-\idty}{\idty}{0} \neq
\mat{0}{-\idty}{\idty}{0} \mat{\idty}{0}{0}{\mathbb{P}},
\end{equation}
i.e., the partial transpose does not preserve the symplectic form $J$. 
We begin with a statement of the main result.

\begin{thm} \label{thm:logneg} Fix $\Lambda_0 \subset \Gamma$ finite. 
For any finite $\Lambda \subset \Gamma$ with $\Lambda_0 \subset \Lambda$ and $H_{\Lambda}$
as in (\ref{eq:localham}) satisfying Assumption 2.1, we have that the logarithmic negativity associated
to the ground state, respectively thermal state, and the decomposition in (\ref{decomp}) is given by
\begin{equation} \label{lognegprop}
\mathcal{N}( \rho) =\frac{1}{2} {\rm Tr} \left[ P^+ \log(L^{-1}) \right] 
\end{equation}
where
\begin{equation} \label{defl}
L = M_1^{1/2} \mathbb{P} M_2 \mathbb{P} M_1^{1/2} \, ,
\end{equation}
$P^+$ is the orthogonal projection onto the subspace where $L \leq \idty$, $\mathbb{P}$ is the
diagonal matrix with 
 \begin{equation} \label{eq:diagP}
\mathbb{P}_{xx}= \left\{ \begin{array}{rl} -1 & \mbox{if } x \in \Lambda_0 \\ 1
& \mbox{if } x \in \Lambda \setminus \Lambda_0 \, , \end{array} \right.
\end{equation}
and for the ground state we have
\begin{equation}
M_1 =  (h_{\Lambda}^{(p)})^{1/2} (h_{\Lambda})^{-1/2} (h_{\Lambda}^{(p)})^{1/2} \quad 
\mbox{and}
 \quad M_2 = (h_{\Lambda}^{(p)})^{-1/2} (h_{\Lambda})^{1/2} (h_{\Lambda}^{(p)})^{-1/2} \, ,
\end{equation}
while for the thermal state at inverse temperature $\beta$ we have
\begin{align}
M_1& =  (h_{\Lambda}^{(p)})^{1/2} (h_{\Lambda})^{-1/2} 
\coth(\beta  (h_{\Lambda})^{1/2})(h_{\Lambda}^{(p)})^{1/2}
\label{thermalM1}\\
M_2& = (h_{\Lambda}^{(p)})^{-1/2} (h_{\Lambda})^{1/2} \coth(\beta  (h_{\Lambda})^{1/2}) (h_{\Lambda}^{(p)})^{-1/2} \, 
\label{thermalM2}
\end{align}
with $h_{\Lambda} = (h_{\Lambda}^{(p)})^{1/2} h_{\Lambda}^{(q)} (h_{\Lambda}^{(p)})^{1/2}$.
\end{thm}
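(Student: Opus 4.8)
The plan is to compute $\|\rho^{T_1}\|_1$ by realizing the partially transposed functional $\omega^{T_1}$ defined in (\ref{transposed_functional}) as the trace against an explicit trace-class operator, then reading off its one-norm from a symplectic diagonalization of $\tilde M$. Following the strategy already sketched after (\ref{densitymatrix}), I would proceed as follows. First, apply the first part of Proposition~\ref{prop:williamson} to the real symmetric positive definite matrix $\tilde M = \mat{\idty}{0}{0}{\mathbb{P}} M \mat{\idty}{0}{0}{\mathbb{P}}$, producing a symplectic $S$ with $S^T \tilde M S = \mat{\mathcal{L}}{0}{0}{\mathcal{L}}$ and symplectic eigenvalues $\lambda_k>0$. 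Since $S$ is symplectic it induces an automorphism of $\mathcal{W}_\Lambda$, so $\omega^{T_1}(W(Sf)) = \prod_k e^{-\frac14 \lambda_k |f(k)|^2}$; this is the diagonalization in the sense of (\ref{eq:diagfun}).

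Next I would construct, mode by mode, a trace-class operator whose Weyl transform matches this product. For a single mode, a Gaussian functional $f\mapsto e^{-\frac14 \lambda |f|^2}$ is the Weyl transform of a multiple of a thermal-type operator; explicitly one checks that for $\lambda>1$ it corresponds to a positive trace-class operator with geometric eigenvalues, while the normalization $\omega^{T_1}(\idty)=1$ fixes the prefactor. Taking tensor products over $k$ yields a trace-class $\tilde\rho$ with ${\rm Tr}[\tilde\rho\, W(f)] = \omega^{T_1}(W(Sf))$ after undoing the symplectic change of variables. By Lemma~\ref{lem:weakdense} this $\tilde\rho$ is the \emph{unique} trace-class operator representing $\omega^{T_1}$, hence it is unitarily equivalent to $\rho^{T_1}$ (the automorphism from $S$ is implemented by a unitary on Fock space, so one-norms agree). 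The one-norm is then $\|\tilde\rho\|_1 = \prod_k \max(1,\lambda_k^{-1})$ up to the computation, since the single-mode operator is positive when $\lambda_k\ge 1$ and the one-norm picks up a factor $\lambda_k^{-1}$ for each $\lambda_k<1$. Taking logarithms gives $\mathcal{N}(\rho) = \sum_{k:\lambda_k<1} \log(\lambda_k^{-1}) = -\tfrac12 \sum_k \log\big(\min(1,\lambda_k^2)\big)$.

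It remains to translate this spectral sum into the trace formula (\ref{lognegprop}). By Remark~\ref{rem:sympev} applied to the block matrix $\tilde M = \mat{M_1}{0}{0}{\mathbb{P}M_2\mathbb{P}}$ (note the conjugation by $\mat{\idty}{0}{0}{\mathbb{P}}$ leaves $M_1$ fixed and sends $M_2$ to $\mathbb{P}M_2\mathbb{P}$, since $\mathbb{P}^2=\idty$), the squared symplectic eigenvalues $\lambda_k^2$ are exactly the eigenvalues of $L = M_1^{1/2}\mathbb{P}M_2\mathbb{P}M_1^{1/2}$ from (\ref{defl}). Therefore $\sum_{k:\lambda_k^2<1}\log(\lambda_k^{-2}) = {\rm Tr}[P^+\log(L^{-1})]$ with $P^+$ the spectral projection of $L$ onto $(0,1]$, and inserting the factor $\tfrac12$ yields (\ref{lognegprop}) precisely. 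The ground-state and thermal expressions for $M_1,M_2$ are then read off from (\ref{Mmat}) and (\ref{eq:mbeta}) respectively.

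I expect the main obstacle to be the explicit single-mode construction and the one-norm bookkeeping: one must verify carefully that the Gaussian functional with parameter $\lambda_k<1$ genuinely corresponds to a (non-positive) trace-class operator, identify its eigenvalues, and confirm that the product over modes converges and gives $\prod_k\max(1,\lambda_k^{-1})$. The subtlety flagged before the theorem statement—that the second part of Proposition~\ref{prop:williamson} fails because $\mat{\idty}{0}{0}{\mathbb{P}}$ does not commute with $J$—is exactly what allows $\lambda_k<1$ and hence nontrivial negativity, so the positivity-based simplification available for genuine states is unavailable here and the full trace-class analysis must be carried out.
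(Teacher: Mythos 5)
Your proposal follows essentially the same route as the paper's proof: symplectic (Williamson) diagonalization of $\tilde M$ via Proposition~\ref{prop:williamson}, the explicit single-mode Gaussian trace-class construction (the paper's Lemma~\ref{lem:gau}), unitary implementation of $S$ to undo the change of variables, and Remark~\ref{rem:sympev} to identify the $\lambda_k$ as the positive square roots of the eigenvalues of $L$. The one point where the paper is more careful is the identification of the constructed operator with $\rho^{T_1}$: uniqueness via Lemma~\ref{lem:weakdense} alone does not suffice, because one must also know that the operator-level partial transpose $\rho^{T_1}$ is trace class and satisfies ${\rm Tr}\,\rho^{T_1}W(f)={\rm Tr}\,\rho\,W(f)^{T_1}$, which is precisely what Corollary~\ref{cor:ptlemma} supplies.
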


We prepare the proof by a lemma in which we consider the Weyl algebra over the one-dimensional vector space ${\mathcal D} = \C$, denoted by ${\mathcal W}(\C)$, and generated by 
\begin{equation}
W(z) = \exp{\left(i [({\rm Re}\,z) q + ({\rm Im}\,z)p]\right)} = \exp{\left(\frac{i}{\sqrt{2}}(\overline{z} a + z a^*)\right)}, \quad z\in \C,
\end{equation}
where $a=(q+ip)/\sqrt{2}$ and $a^* = (q-ip)/{\sqrt 2}$.

\begin{lemma} \label{lem:gau} Fix $\lambda >0$. There exists a unique, self-adjoint, trace-class operator $\rho_{\lambda}$ on $L^2( \mathbb{R})$ for which
\begin{equation}
{\rm Tr} \left[ \rho_{\lambda} W(z) \right] = e^{-\frac{1}{4}\lambda|z|^2} \quad \mbox{for all $z\in \C$} \, .
\end{equation}
Moreover,
\begin{equation} \label{gau1norm}
\Vert \rho_\lambda\Vert_1 = \left\{ \begin{array}{cl} 1 & \mbox{if } \lambda \geq 1 \\
\frac{1}{\lambda} & \mbox{if } \lambda < 1 \, .\end{array} \right.
\end{equation}
\end{lemma}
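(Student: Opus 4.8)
The plan is to recognize $\rho_\lambda$ as a thermal (Gibbs) state of the single harmonic oscillator and exploit the explicit Fock-space structure. First I would recall that for the one-mode Weyl operator $W(z) = \exp(\tfrac{i}{\sqrt 2}(\overline z a + z a^*))$, the standard Weyl-ordering (Baker–Campbell–Hausdorff) identity gives
\begin{equation}
{\rm Tr}\left[e^{-s\, a^*a}\, W(z)\right] = \frac{1}{1-e^{-s}}\, \exp\!\left(-\frac{|z|^2}{4}\cdot\frac{1+e^{-s}}{1-e^{-s}}\right),
\end{equation}
which is the well-known characteristic function of a Gaussian thermal state. Matching the quadratic coefficient $\tfrac{1+e^{-s}}{1-e^{-s}} = \coth(s/2)$ to the target value $\lambda$ suggests trying an operator proportional to $e^{-s\, a^*a}$ with $\coth(s/2)=\lambda$. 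This works directly when $\lambda > 1$, since then $s>0$ and the operator is trace class; normalizing to get the prefactor equal to $1$ forces $\rho_\lambda = (1-e^{-s})\, e^{-s\,a^*a}$, a genuine density matrix with $\|\rho_\lambda\|_1 = 1$, matching \eqref{gau1norm}.

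The main obstacle is the regime $\lambda < 1$, where $\coth(s/2)=\lambda$ has no solution with $s>0$: here $\rho_\lambda$ cannot be a state, and indeed the functional $z\mapsto e^{-\lambda|z|^2/4}$ violates the positivity condition \eqref{positivity} (this is precisely the non-positivity of partially transposed functionals that motivates the whole construction). To handle this I would work in the number basis $\{|m\rangle\}_{m\ge 0}$ of $L^2(\mathbb R)$ and seek $\rho_\lambda$ diagonal in this basis, writing $\rho_\lambda = \sum_m c_m\, |m\rangle\langle m|$. Using the known matrix elements of $W(z)$ between number states (expressible via Laguerre polynomials), the condition ${\rm Tr}[\rho_\lambda W(z)] = e^{-\lambda|z|^2/4}$ becomes, after expanding both sides as power series in $|z|^2$ (the diagonal structure forces dependence on $|z|^2$ alone), a linear system determining the $c_m$ uniquely. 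One finds $c_m = \frac{2}{\lambda+1}\left(\frac{\lambda-1}{\lambda+1}\right)^m$, i.e. formally the same geometric form as above but now with the ratio $r := \frac{\lambda-1}{\lambda+1}$ negative when $\lambda<1$. Since $|r|<1$ for all $\lambda>0$, the sequence $(c_m)$ is absolutely summable, so $\rho_\lambda$ is trace class and self-adjoint in both regimes.

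For uniqueness I would invoke Lemma~\ref{lem:weakdense} (in its one-mode form): any trace-class operator annihilated by all $\{W(z)\}$ under the trace pairing must vanish, so the operator realizing the prescribed characteristic function is unique, which also justifies restricting attention to the number-diagonal ansatz since the resulting operator is automatically \emph{the} solution. Finally, to compute the one-norm I would use that $\rho_\lambda$ is self-adjoint and diagonal, so $\|\rho_\lambda\|_1 = \sum_m |c_m|$. When $\lambda\ge 1$ all $c_m\ge 0$ and the sum telescopes to ${\rm Tr}\,\rho_\lambda = 1$. When $\lambda<1$ the signs alternate, and
\begin{equation}
\|\rho_\lambda\|_1 = \frac{2}{\lambda+1}\sum_{m\ge 0}\left(\frac{1-\lambda}{1+\lambda}\right)^m = \frac{2}{\lambda+1}\cdot\frac{1}{1-\frac{1-\lambda}{1+\lambda}} = \frac{1}{\lambda},
\end{equation}
which is exactly \eqref{gau1norm}. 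The one delicate point to verify carefully is the power-series matching that pins down the $c_m$; I expect this bookkeeping with Laguerre polynomials to be the only genuinely computational step, while everything else follows from the thermal-state identity and geometric summation.
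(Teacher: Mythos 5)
Your proposal is correct and takes essentially the same route as the paper: the diagonal operator you solve for, $c_m=\frac{2}{\lambda+1}\bigl(\frac{\lambda-1}{\lambda+1}\bigr)^m$ in the number basis, is exactly the paper's $\rho_\lambda|n\rangle=(1-\alpha)\alpha^n|n\rangle$ with $\alpha=\frac{\lambda-1}{\lambda+1}$, and the paper likewise verifies the characteristic function through the diagonal Weyl matrix elements (its series formula is just the Laguerre expansion in disguise), invokes Lemma~\ref{lem:weakdense} for uniqueness, and evaluates $\|\rho_\lambda\|_1=\frac{1-\alpha}{1-|\alpha|}$ by the same geometric summation. The only differences are presentational: the paper guesses the geometric form outright and verifies it by a direct series computation (the step you deferred as ``bookkeeping''), rather than deriving the coefficients from the thermal-state identity and a power-series matching.
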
 

\begin{proof}
Denote by $\left\{| n \rangle\mid n=0,1,\ldots\right\}$ the orthonormal basis of $L^2(\mathbb{R})$
given by the eigenvectors of the standard harmonic oscillator Hamiltonian $a^*a +\frac{1}{2}$.
In this basis $a^*$ and $a$ correspond to creation and annihilation operators,
\begin{equation}
a^*| n \rangle = \sqrt{n+1}|n+1 \rangle, \mbox{for\ } n \geq 0, 
\mbox{ and } a| n\rangle = \sqrt{n}|n-1 \rangle, \mbox{for\ } n \geq 1.
\end{equation}
Using the commutation relation $[a^*,a]=\idty$, one readily verifies
\begin{equation} \label{weylexp}
\langle n | W(z) | n \rangle = e^{\frac{|z|^2}{4}} \sum_{m \geq 0} \frac{(- |z|^2)^m}{2^m (m!)^2} \frac{(n+m)!}{n!} \, ,
\end{equation}
see e.g. (XII.55) of \cite{Messiah1999} for more details. Note that the formal Taylor expansion used here can be justified by an analytic vector argument, e.g.\ Theorem~8.30 in \cite{Weidmann}. Since $\lambda>0$, the number
\begin{equation}
\alpha = \alpha(\lambda) =  \frac{\lambda -1}{\lambda +1},
\end{equation}
satisfies $-1 < \alpha < 1$ and therefore, the operator $\rho_\lambda$ defined by
\begin{equation}
\rho_\lambda |n \rangle = (1-\alpha)\alpha ^n |n\rangle \quad \mbox{for } n \geq 0,
\end{equation} 
is clearly self-adjoint and trace class. A well-known calculation (e.g., see again \cite{Messiah1999}) using the identity
\begin{equation}
\frac{1}{(1-x)^{m+1}} = \sum_{n \geq 0} \frac{(n+m)!}{m! n!} x^n
\end{equation}
gives 
\begin{eqnarray}
\Tr [ \rho_\lambda W(z) ]  & = &(1-\alpha) \sum_{n \geq 0} \alpha^n
 \langle n|W(z) | n \rangle \\
& = & (1- \alpha) e^{\frac{|z|^2}{4}} \sum_{n \geq 0} \sum_{m \geq 0} \alpha^n \frac{(-|z|^2)^m}{2^m (m!)^2} \frac{(n+m)!}{n!} \nonumber \\
& = & (1- \alpha) e^{\frac{|z|^2}{4}} \sum_{m \geq 0}  \frac{(-|z|^2)^m}{2^m (m!)}  \sum_{n \geq 0} \frac{(n+m)!}{m! n!} \alpha^n \nonumber \\
& = & e^{\frac{|z|^2}{4}} \sum_{m \geq 0}  \frac{(-|z|^2)^m}{2^m (m!)} \frac{1}{(1- \alpha)^m} \nonumber \\
& = & e^{\frac{|z|^2}{4}} e^{\frac{-|z|^2}{2(1- \alpha)}} \nonumber\\
&=&e^{- \frac{\lambda}{4}|z|^2} \, . \nonumber
\end{eqnarray}
{F}rom the explicit form of $\rho_{\lambda}$, it is also clear that
\begin{equation}
\Vert \rho_\lambda\Vert_1  = \frac{1-\alpha}{1-\vert \alpha\vert} \, ,
\end{equation}
and therefore, we obtain (\ref{gau1norm}).

Finally, the uniqueness of $\rho_{\lambda}$ follows from Lemma~\ref{lem:weakdense}.
\end{proof}

{\it Proof of Theorem~\ref{thm:logneg}:} 

We verify (\ref{lognegprop}) for the ground state $\rho$. The argument for
thermal states $\rho_{\beta}$ follows similarly.

According to $L^2(\mathbb{R}^{\Lambda}) = L^2(\mathbb{R}^{\Lambda_0}) \otimes L^2(\mathbb{R}^{\Lambda \setminus \Lambda_0})$, Weyl operators $W(f)$, $f\in \ell^2(\Lambda)$, can be decomposed as
\begin{equation}
W(f) = W(f^{(1)}) \otimes W(f^{(2)}),
\end{equation}
where $f^{(1)} = f|_{\Lambda_0}$, $f^{(2)} = f|_{\Lambda \setminus \Lambda_0}$. Thus, using (\ref{eq:weyltranspose}), the partial transpose with respect to this decomposition becomes
\begin{equation}
W(f)^{T_1} = W(f^{(1)})^T \otimes W(f^{(2)}) = W\left(\overline{f^{(1)}}\right) \otimes W(f^{(2)}) = W \left( \mat{\idty}{0}{0}{\mathbb{P}} f \right),
\end{equation}
where the last step uses the identification (\ref{real_basis}) and the definition of $\mathbb{P}$. It now follows from (\ref{eq:Weylexp}) that
\begin{equation} \label{eq:transtrace}
{\rm Tr} \,\rho\, W(f)^{T_1} = \exp{\left( - \frac{1}{4} \left(  \mat{\idty}{0}{0}{\mathbb{P}} f, M  \mat{\idty}{0}{0}{\mathbb{P}} f\right) \right)} = e^{-\frac{1}{4}(f, \tilde{M}f)},
\end{equation}
where $\tilde{M}$ is the real, symmetric, positive matrix defined in (\ref{transposed_functional}). By Proposition~\ref{prop:williamson} there exists a symplectic matrix $S$ such that
\begin{equation}
S^T \tilde{M} S = \mat{\mathcal{L}}{0}{0}{\mathcal{L}},
\end{equation}
where $\mathcal{L}$ is diagonal with entries $\lambda_j>0$, $j=1,\ldots,|\Lambda|$, the symplectic eigenvalues of $\tilde{M}$. With $f=Sg$ in (\ref{eq:transtrace}) we get
\begin{eqnarray}
{\rm Tr}\,\rho\,W(Sg)^{T_1} & = & e^{-\frac{1}{4}(g, S^T \tilde{M} Sg)} = \prod_j e^{-\frac{1}{4}\lambda_j |g_j|^2} \\
& = & \prod_j {\rm Tr}\,\rho_{\lambda_j} W(g_j) = {\rm Tr} \left( \otimes_j \rho_{\lambda_j} \right) W(g), \nonumber
\end{eqnarray}
where $\rho_{\lambda_j}$ are the trace class operators on $L^2(\mathbb{R})$ introduced in Lemma~\ref{lem:gau}.

Letting $U$ be the unitary that implements $S$ in the representation of the Weyl algebra
we are using (see \cite{Shale1962} or \cite{Bruneau2007}), i.e.,
\begin{equation}
W(Sf) = UW(f) U^*
\end{equation}
we observe that, for all $g\in \ell^2(\Lambda)$,
\begin{equation}
{\rm Tr}\,\rho\, W(Sg)^{T_1} = {\rm Tr}\, \left( \otimes_j \rho_{\lambda_j} \right) U^* W(Sg) U = {\rm Tr}\, U\left( \otimes_j \rho_{\lambda_j} \right) U^* W(Sg),
\end{equation}
or, as $S$ is invertible,
\begin{equation}
{\rm Tr}\, \rho\,W(f)^{T_1} = {\rm Tr} \,U \left( \otimes_j \rho_{\lambda_j} \right) U^*\,W(f) \quad \mbox{for all $f\in \ell^2(\Lambda)$}.
\end{equation}

Now Corollary~\ref{cor:ptlemma}  proves that $\rho^{T_1} = U (\otimes_j \rho_{\lambda_j}) U^*$. In particular, $\rho^{T_1}$ is trace class and
\begin{equation}
\|\rho^{T_1}\|_1 = \prod_j \|\rho_{\lambda_j}\|_1.
\end{equation}
Using (\ref{gau1norm}) to calculate the trace norm of $\rho_{\lambda_j}$ we get
\begin{equation} \label{eq:finalcalc}
\mathcal{N}(\rho) = \log \,\|\rho^{T_1}\|_1 = \sum_j \log \|\rho_{\lambda_j}\|_1 = \sum_{j:\lambda_j<1} \log \frac{1}{\lambda_j}.
\end{equation}
By Remark~\ref{rem:sympev} (applied to $\tilde{M}$) the $\lambda_j$ are the positive square roots of the eigenvalues of $L = M_1^{1/2} \mathbb{P} M_2 \mathbb{P} M_1^{1/2}$. This shows that the right hand side of (\ref{eq:finalcalc}) is equal to $\frac{1}{2} {\rm Tr} [P^+ \log(L^{-1})]$. \hfill \qed

%%%%%%%%%%%%%%%%%%%%%%%%%%
%
%
%	Area Law for the Log. Neg. + Proof of the Main Result
%
%
%
%%%%%%%%%%%%%%%%%%%%%%%%%%%

\section{An Area Law for the Logarthimic Negativity} \label{sec:proof}

The goal of this section is to prove Theorems~\ref{thm:gsal} and \ref{thm:tsal}.
This is done after establishing two deterministic facts. First, in Lemma~\ref{lem:logneg} we prove an
upper bound on the logarithmic negativity associated to ground and thermal states
of oscillator systems. The arguments here follow
\cite{Cramer2006} rather closely, however, we avoid making assumptions
on the spectral gap of the one-particle operators involved. Next, we prove a
simple geometric fact about the graphs we are considering in Lemma~\ref{lem:boundary}. The proofs of our main
results follow.

\begin{lemma} \label{lem:logneg} Fix $\Lambda_0 \subset \Gamma$ finite. 
For any finite $\Lambda \subset \Gamma$ with $\Lambda_0 \subset \Lambda$ and $H_{\Lambda}$
as in (\ref{eq:localham}) satisfying Assumption~\ref{ass:bobs}, we have the following bound on
the logarithmic negativity associated to the ground state, respectively thermal state, and the decomposition in (\ref{decomp}):
\begin{equation} \label{eq:lognegbd}
\mathcal{N}( \rho) \leq 2 \| M_1^{-1} \| \sum_{x \in \Lambda_0} \sum_{y \in \Lambda \setminus \Lambda_0} \left| \langle \delta_x,
M_2^{-1} \delta_y \rangle \right|,
\end{equation}
where for the ground state we have set
\begin{equation}
 M_1^{-1} = (h_{\Lambda}^{(p)})^{-1/2} (h_{\Lambda})^{1/2} (h_{\Lambda}^{(p)})^{-1/2} \quad \mbox{and} \quad M_2^{-1} =  (h_{\Lambda}^{(p)})^{1/2} (h_{\Lambda})^{-1/2} (h_{\Lambda}^{(p)})^{1/2} 
\end{equation}
while for the thermal state at inverse temperature $\beta$,
\begin{align}
M_1^{-1}& = (h_{\Lambda}^{(p)})^{-1/2} (h_{\Lambda})^{1/2} \tanh(\beta  (h_{\Lambda})^{1/2}) (h_{\Lambda}^{(p)})^{-1/2} \\
M_2^{-1}& =  (h_{\Lambda}^{(p)})^{1/2} (h_{\Lambda})^{-1/2} \tanh(\beta  (h_{\Lambda})^{1/2})(h_{\Lambda}^{(p)})^{1/2}\,
\end{align}
with $h_{\Lambda} = (h_{\Lambda}^{(p)})^{1/2} h_{\Lambda}^{(q)} (h_{\Lambda}^{(p)})^{1/2}$ in both cases.
\end{lemma}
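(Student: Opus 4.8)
The plan is to insert the explicit formula $\mathcal N(\rho)=\frac12\Tr[P^+\log(L^{-1})]$ from Theorem~\ref{thm:logneg} and then \emph{linearize} the logarithm after recasting everything in terms of $L^{-1}$, so that the boundary-crossing matrix elements of $M_2^{-1}$ appear directly and no inverse of the (possibly gapless) operator $L$ is ever needed. First I would observe that $P^+$ projects onto the eigenspaces of $L$ with eigenvalue $\le 1$, which are exactly the eigenspaces of $L^{-1}=M_1^{-1/2}\mathbb P M_2^{-1}\mathbb P M_1^{-1/2}$ with eigenvalue $\ge 1$; hence $\mathcal N(\rho)=\frac12\sum_{\nu_j>1}\log\nu_j$, the $\nu_j$ being the eigenvalues of $L^{-1}$. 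Writing $\mathbb P=\idty-2Q$ with $Q$ the diagonal projection onto $\Lambda_0$, a one-line computation gives $\mathbb P M_2^{-1}\mathbb P=M_2^{-1}-2C$, where $C=QM_2^{-1}(\idty-Q)+(\idty-Q)M_2^{-1}Q$ is precisely the part of $M_2^{-1}$ coupling $\Lambda_0$ to $\Lambda\setminus\Lambda_0$. Thus $L^{-1}=G-2E$ with $G=M_1^{-1/2}M_2^{-1}M_1^{-1/2}$ and $E=M_1^{-1/2}CM_1^{-1/2}$, both symmetric.

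The decisive structural input is the one-sided bound $G\le\idty$. In both cases $M_1=(h_\Lambda^{(p)})^{1/2}\psi(h_\Lambda)(h_\Lambda^{(p)})^{1/2}$ and $M_2=(h_\Lambda^{(p)})^{-1/2}\chi(h_\Lambda)(h_\Lambda^{(p)})^{-1/2}$ for scalar functions $\psi,\chi$ of $h_\Lambda$ with $\psi\chi\equiv 1$ for the ground state and $\psi\chi=\coth^2(\beta h_\Lambda^{1/2})\ge\idty$ for the thermal state. Consequently $M_1M_2$ is similar to $\psi(h_\Lambda)\chi(h_\Lambda)\ge\idty$, and $G$, being similar to $(M_1M_2)^{-1}$, satisfies $0<G\le\idty$. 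Hence $L^{-1}=G-2E\le\idty-2E$.

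Next I would linearize: from $\log\nu\le\nu-1$ we get $\sum_{\nu_j>1}\log\nu_j\le\sum_k(\lambda_k(L^{-1})-1)_+$, and Weyl monotonicity applied to $L^{-1}\le\idty-2E$ gives $\lambda_k(L^{-1})\le\lambda_k(\idty-2E)$ for every $k$; since $t\mapsto(t-1)_+$ is nondecreasing this yields $\sum_{\nu_j>1}\log\nu_j\le\sum_k(\lambda_k(\idty-2E)-1)_+=2\sum_{e_j<0}|e_j|\le 2\|E\|_1$, where the $e_j$ are the eigenvalues of $E$. Therefore $\mathcal N(\rho)\le\|E\|_1$. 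A trace-norm estimate then finishes the proof: $\|E\|_1\le\|M_1^{-1/2}\|^2\|C\|_1=\|M_1^{-1}\|\,\|C\|_1$, and because $C$ is supported on the off-diagonal boundary blocks, $\|C\|_1\le\sum_{x,y}|C_{xy}|=2\sum_{x\in\Lambda_0}\sum_{y\in\Lambda\setminus\Lambda_0}|\langle\delta_x,M_2^{-1}\delta_y\rangle|$, using the symmetry of $M_2^{-1}$. Combining gives exactly \eqref{eq:lognegbd}.

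The main obstacle --- and the place where the treatment must depart from the gapped argument of \cite{Cramer2006} --- is controlling the logarithm without a uniform lower bound on the spectrum of $L$. Working with $L$ directly (via $\log(1/\mu)\le\mu^{-1}-1$) would produce a factor $\lambda_{\min}(L)^{-1}$ that blows up in the gapless regime. Passing to $L^{-1}$ and combining the one-sided inequality $G\le\idty$ with $\log\nu\le\nu-1$ and Weyl monotonicity removes any such inverse spectral factor, leaving only the harmless norm $\|M_1^{-1}\|$ (bounded by Assumption~\ref{ass:bobs}) and the boundary-supported elements of $M_2^{-1}$ (which the localization hypotheses make exponentially small). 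I expect verifying $G\le\idty$ in the thermal case, where $G\neq\idty$, to be the only genuinely new computation relative to the ground-state and gapped settings.
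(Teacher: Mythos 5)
Your proof is correct and takes essentially the same route as the paper's: the same starting formula from Theorem~\ref{thm:logneg}, the same decomposition (your $G$ is the paper's $A$, and your $-2E$ is exactly the paper's $B = M_1^{-1/2}\mathbb{P}[M_2^{-1},\mathbb{P}]M_1^{-1/2}$), the same similarity argument showing $G\leq \idty$ (trivially in the ground state case, via $\tanh^2(\beta h_\Lambda^{1/2})$ in the thermal case), and the same entrywise bound on the trace norm of the boundary-coupling part to finish. The only cosmetic difference is the middle step, where you drop $P^+$ and invoke Weyl eigenvalue monotonicity for $L^{-1}\leq \idty - 2E$ together with $\log\nu\leq\nu-1$, whereas the paper keeps $P^+$ and bounds $\Tr P^+B \leq \Vert P^+M_1^{-1/2}\Vert^2\,\Vert \mathbb{P}[M_2^{-1},\mathbb{P}]\Vert_1$ by H\"older; both yield the identical final estimate (\ref{eq:lognegbd}).
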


\begin{proof}
Recall the results of Theorem~\ref{thm:logneg}; in particular, (\ref{lognegprop}) and (\ref{defl}).
Using the concavity of the logarithm, we immediately obtain the following bound:
\begin{equation}
2 \mathcal{N}(\rho)= \Tr P^+ \log L^{-1}\leq \Tr P^+ (L^{-1}-\idty)\, .
\end{equation}
Next, we note that $L^{-1}=  M_1^{-1/2}\mathbb{P}M_2^{-1}\mathbb{P}M_1^{-1/2}$ can be
written as $L^{-1}=A+B$ with
\begin{equation}
A= M_1^{-1/2}M_2^{-1}M_1^{-1/2} \quad \mbox{and} \quad B=M_1^{-1/2}\mathbb{P}[M_2^{-1},\mathbb{P}]M_1^{-1/2}.
\end{equation}
We claim $A\leq \idty$. In the ground state case, we have $M_2=M_1^{-1}$. This directly
implies $A=\idty$, and there is nothing to prove. In the thermal case, recall that $M_1$ and $M_2$
are given by (\ref{thermalM1}) and (\ref{thermalM2}), respectively. As a result, they are both positive matrices; hence, so too are
$L^{-1}$ and $A$. It is clear that
\begin{equation}
M_1^{-1}M_2^{-1} = (h_{\Lambda}^{(p)})^{-1/2} \tanh^2(\beta  (h_{\Lambda})^{1/2})  (h_{\Lambda}^{(p)})^{1/2} 
\end{equation}
and thus $A$ is similar to $\tanh^2(\beta h_\Lambda^{1/2})$, i.e.,

\begin{equation}
A= K \tanh^2(\beta h_\Lambda^{1/2}) K^{-1} \quad \mbox{with} \quad K= M_1^{1/2} (h_{\Lambda}^{(p)})^{-1/2}\,.
\end{equation}
We conclude that the spectrum of the positive operator $A$ is contained in $[0,1]$, and hence we have
$A\leq \idty$ as claimed. Using this fact we obtain
\begin{equation}
2 \mathcal{N}(\rho)\leq \Tr P^+ (A-\idty+B) \leq \Tr  P^+B\, .
\end{equation}
 To proceed, note that
\begin{align}
 \Tr P^+B &= \Tr P^+ M_1^{-1/2} \mathbb{P} [M_2^{-1},\mathbb{P}]  M_1^{-1/2} P^+  \\
 &\leq \Vert P^+ M_1^{-1/2} \Vert^2\, \Vert \mathbb{P} [M_2^{-1},\mathbb{P}] \Vert_1\nonumber\\
&\leq \Vert P^+ M_1^{-1/2} \Vert^2\, \sum_{x,y\in\Lambda}\vert \langle\delta_x,
\mathbb{P} [M_2^{-1},\mathbb{P}] \delta_y\rangle\vert. \nonumber
\end{align}
The first factor is of course bounded by $\Vert M_1^{-1}\Vert$. For the second factor we
used the fact that the 1-norm can be bounded by the sum of the absolute values of
the matrix elements in any basis. Observing that
\begin{equation}
\langle \delta_x, \mathbb{P} \left[ M_2^{-1}, \mathbb{P} \right] \delta_y \rangle = \left\{ \begin{array}{cc} 0 & \mbox{if } x, y \in \Lambda_0 \mbox{ or } x, y \in \Lambda \setminus \Lambda_0 , \\
-2 \langle \delta_x, M_2^{-1}\delta_y \rangle & \mbox{otherwise} . \end{array} \right.
\end{equation}
leads to the claimed result.
\end{proof}

As observed in \cite{Cramer2006}, if $h_\Lambda \geq c >0$, $P^+=0$ for
sufficiently small $\beta$, meaning that all entanglement vanishes at high temperatures. The previous result holds, however, at all positive temperatures.

For the sake of completeness, we prove a general fact about the graphs
we are considering.

\begin{lemma}\label{lem:boundary}
Let $\Lambda_0 \subset \Gamma$ be finite and suppose $\mu >0$ is such that
\begin{equation}
C_\mu:=\sup_{x\in\Gamma}\sum_{y\in\Gamma} e^{-\mu d(x,y)}<\infty\,.
\end{equation}
Then for any $\Lambda_0 \subset \Lambda \subset \Gamma$,
\begin{equation}
\sum_{x\in\Lambda_0}\sum_{y\in \Lambda\setminus\Lambda_0} e^{-\mu d(x,y)}
\leq C_\mu^2 \vert \partial \Lambda_0\vert
\end{equation}
where $\vert \partial \Lambda_0 \vert$ is as defined in (\ref{boundary}).
\end{lemma}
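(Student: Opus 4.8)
The plan is to bound the double sum over the bulk of $\Lambda_0$ by noting that a geodesic from an interior site to the outside must pass through the boundary, and then to apply the decay hypothesis twice. Concretely, I would argue as follows. Fix $x \in \Lambda_0$ and $y \in \Lambda \setminus \Lambda_0$. Since $x \in \Lambda_0$ and $y \notin \Lambda_0$, any path realizing $d(x,y)$ starts inside $\Lambda_0$ and ends outside it, so it must contain a first edge $(u,w)$ with $u \in \Lambda_0$ and $w \in \Lambda \setminus \Lambda_0$. By the definition \eqref{boundary} of the boundary, this vertex $u$ lies in $\partial \Lambda_0$. Because $u$ sits on a geodesic from $x$ to $y$, the triangle-type identity $d(x,y) = d(x,u) + d(u,y)$ holds along this path, and in particular $d(x,y) \geq d(x,u) + d(u,y)$.

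The key step is then to exploit this decomposition through a boundary point. Using $e^{-\mu d(x,y)} \leq e^{-\mu d(x,u)} e^{-\mu d(u,y)}$ and summing, I would write
\begin{equation}
\sum_{x\in\Lambda_0}\sum_{y\in \Lambda\setminus\Lambda_0} e^{-\mu d(x,y)}
\leq \sum_{u \in \partial \Lambda_0} \sum_{x\in\Lambda_0}\sum_{y\in \Lambda\setminus\Lambda_0} e^{-\mu d(x,u)} e^{-\mu d(u,y)},
\end{equation}
where for each pair $(x,y)$ I have selected one such boundary vertex $u$ and relaxed to a sum over all $u \in \partial \Lambda_0$. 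The right-hand side factorizes: for fixed $u$, the $x$-sum is dominated by $\sum_{x \in \Gamma} e^{-\mu d(u,x)} \leq C_\mu$ and likewise the $y$-sum is dominated by $\sum_{y \in \Gamma} e^{-\mu d(u,y)} \leq C_\mu$, using \eqref{Cmu} with $u$ as the base point. This yields the bound $C_\mu^2 |\partial \Lambda_0|$, as claimed.

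The main obstacle to state carefully is the selection of the boundary vertex $u$ and the reindexing of the sum. One must ensure that each term $e^{-\mu d(x,y)}$ on the left is accounted for by at least one term on the right; this is guaranteed because every pair $(x,y)$ gives rise to at least one boundary vertex $u$ on its geodesic, and enlarging to the full sum over $u \in \partial \Lambda_0$ can only increase the total. A minor point is that I use the sub-additivity $d(x,y) \leq d(x,u) + d(u,y)$ (the triangle inequality), which is immediate for the graph metric; the stronger equality along the geodesic is not actually needed, only that $u \in \partial \Lambda_0$ lies such that $d(x,u) + d(u,y) \leq d(x,y)$, which holds since $u$ is on a shortest path. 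Once the factorization is in place, the two geometric sums are controlled uniformly by $C_\mu$ from \eqref{Cmu}, and the proof concludes. \qed
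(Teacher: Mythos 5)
Your proof is correct and follows essentially the same route as the paper: pick a boundary vertex $u$ on a geodesic from $x$ to $y$ (so that $d(x,u)+d(u,y)=d(x,y)$), factor the exponential, over-count by summing over all $u\in\partial\Lambda_0$, and bound the two resulting sums by $C_\mu$ each. One small caution: your closing remark that only the triangle inequality $d(x,y)\le d(x,u)+d(u,y)$ is needed is backwards---the factorization $e^{-\mu d(x,y)}\le e^{-\mu d(x,u)}e^{-\mu d(u,y)}$ requires the reverse inequality $d(x,u)+d(u,y)\le d(x,y)$, which is precisely what the position of $u$ on a shortest path supplies, as your earlier (correct) argument actually uses.
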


\begin{proof}
Note that for every $x\in \Lambda_0$ and $y\in \Lambda \setminus \Lambda_0$ there is at least one $u\in \partial \Lambda_0$ such that $d(x,u)+d(u,y)=d(x,y)$. Therefore
\begin{align}
\sum_{x\in\Lambda_0}\sum_{y\in \Lambda\setminus\Lambda_0} e^{-\mu d(x,y)}
& \le \sum_{u\in \partial \Lambda_0}\sum_{x\in\Lambda_0, y\in \Lambda\setminus\Lambda_0
\atop d(x,u)+d(u,y)=d(x,y)} e^{-\mu d(x,u)}e^{-\mu d(u,y)}\\
&\leq  \sum_{u\in \partial \Lambda_0} \sum_{x\in \Gamma, y\in \Gamma} e^{-\mu d(x,u)} e^{-\mu d(u,y)} \nonumber \\
&\leq  C_\mu^2 \vert \partial \Lambda_0\vert\,. \nonumber
\end{align} 
\end{proof}

We can now complete the proofs of Theorems~\ref{thm:gsal} and \ref{thm:tsal}.

\begin{proof}[Proof of Theorem~\ref{thm:tsal}]
Using Lemma~\ref{lem:logneg} and Assumption~\ref{ass:bobs}, it is clear that
\begin{equation}
\mathbb{E} \left( \mathcal{N}( \rho) \right) \leq C \sum_{x \in \Lambda_0} \sum_{y \in \Lambda \setminus \Lambda_0} \mathbb{E} \left( \left| \langle \delta_x,
M_2^{-1} \delta_y \rangle \right| \right)
\end{equation}
The quantity on the right-hand-side above is precisely what appears in (\ref{eq:tsloccond}). 
(\ref{tsal}) now follows from Lemma~\ref{lem:boundary}.
\end{proof}

\begin{proof}[Proof of Theorem~\ref{thm:gsal}]
The proof of Theorem~\ref{thm:gsal} follows as above,
after an application of Lemma~\ref{lem:purebd}.
\end{proof}

\appendix

%%%%%%%%%%%%%%%%%%%%%%%%%
%
%       Appendix:
%	
%
%%%%%%%%%%%%%%%%%%%%%%%%%%%%%

\section{On Partial Transposes and an Entanglement Bound}
\label{sec:partial_transpose}

The goal of this section is to collect some basic facts about partial transposes
valid in the context of infinite dimensional complex Hilbert spaces. We begin with
a discussion of general conjugations and their corresponding transposes in Section~\ref{subsec:conj}.
An example of the transpose of a Weyl operator with respect to the natural conjugation ends this
section. Partial transposes are defined in Section~\ref{subsec:part}, and a number of important properties,
each used in the main text, are proven in detail. Finally, Section~\ref{subsec:ebps} contains a proof, originally
given by Vidal and Werner in \cite{Vidal2002}, that the logarithmic negativity provides an upper bound on the von-Neuman entropy of the restriction of 
a pure state.

\subsection{Conjugations and Transposes} \label{subsec:conj}

First, recall the definition of the operator transpose with respect to a given conjugation in a Hilbert space.

Let $({\mathcal H}, \langle \cdot, \cdot \rangle)$ be a separable complex Hilbert space and $C:{\mathcal H} \to {\mathcal H}$ be a {\it conjugation}, i.e.\ $\langle Cf, Cg \rangle = \langle g, f \rangle$ for all $f, g \in {\mathcal H}$ and $C^2 = I$.

This implies that $C$ is anti-linear, i.e.\ $C(\alpha f + \beta g) = \bar{\alpha} Cf + \bar{\beta} Cg$ for all $\alpha, \beta \in \C$, $f,g \in {\mathcal H}$.

An important fact is that $C$ is a conjugation in ${\mathcal H}$ if and only if there exists an orthonormal basis (ONB) $\{e_k\}$ of ${\mathcal H}$ such that
\begin{equation} \label{eq:specialONB}
C \left( \sum_k x_k e_k \right) = \sum_k \bar{x}_k e_k \quad \mbox{for all $(x_k) \in \ell_2$.}
\end{equation}

To see this, first note that for an anti-linear mapping $C$ the property (\ref{eq:specialONB}) is equivalent to the existence of an ONB such that
\begin{equation} \label{eq:specialONB2}
Ce_k = e_k \quad \mbox{for all $k$}.
\end{equation}

It's easy to see that (\ref{eq:specialONB2}) is sufficient for $C$ to be a conjugation. On the other hand, if $C$ is a conjugation, then by Zorn's lemma there is a maximal orthonormal system $\{e_k\}$ with the property (\ref{eq:specialONB2}). To show that this is an ONB consider the orthogonal complement $D$ of the subspace spanned by  $\{e_k\}$. If $D$ were non-trivial, then it would contain a normalized vector $u$. One can now construct a normalized vector $v \in D$ with $Cv=v$, contradicting the maximality of $\{e_k\}$. This is done separately for the two cases $Cu \in \,\mbox{span}\{u\}$ and $Cu \not\in \, \mbox{span}\{u\}$. In the first case $v$ can be chosen as a suitable scalar multiple of $u$, in the second case one may choose $v=(u+Cu)/\|u+Cu\|$.

While many different conjugations exist on any given Hilbert space, it follows that any two conjugations $C$ and $\tilde{C}$ on ${\mathcal H}$ are unitarily equivalent. In fact, for an ONB $\{\tilde{e}_k\}$ such that $\tilde{C} \tilde{e}_k = \tilde{e}_k$, let $U$ be the unique unitary such that
\begin{equation} \label{eq:defU}
U \tilde{e}_k = e_k
\end{equation}
for all $k$. Then it is easily seen that
\begin{equation} \label{eq:equivconj}
\tilde{C} = U^* C U.
\end{equation}

For any $A\in B({\mathcal H})$, the bounded linear operators on ${\mathcal H}$, its {\it transpose} $A^T$ with respect to the conjugation $C$ is defined by
\begin{equation}
A^T = C A^* C.
\end{equation}
We have $A^T \in B({\mathcal H})$ with $\|A^T\| = \|A\|$. Other basic properties of transposes which we will use below are $(A^*)^T = (A^T)^*$ and that $U$ is unitary if and only if $U^T$ is unitary.

With an ONB $\{e_k\}$ associated with $C$ via (\ref{eq:specialONB}) (and only with such ONBs), the transpose with respect to $C$ is characterized by
\begin{equation} \label{eq:chartranspose}
\langle e_j, A^T e_k \rangle = \langle e_k, A e_j \rangle \quad \mbox{for all $j, k$.}
\end{equation}

For two conjugations $C$ and $\tilde{C}$ which are unitarily equivalent via $U$ as in (\ref{eq:equivconj}), a calculation shows that the corresponding transposes are unitarily equivalent via $U^TU$,
\begin{equation} \label{eq:equivtrans}
A^{\tilde{T}} = (U^TU)^* A^T (U^TU).
\end{equation}

\begin{ex}
If ${\mathcal H} = L^2(X,\mu)$ for a measure space $(X,\mu)$, then the {\it natural conjugation} on ${\mathcal H}$ is given by $Cf = \bar{f}$ and the associated transpose by $A^T f = \overline{A^* \bar{f}}$. An ONB $\{e_k\}$ of $L^2(X,\mu)$ satisfies (\ref{eq:specialONB}) if and only if all $e_k$ are $\mu$-almost everywhere real-valued. In the special case ${\mathcal H} = \ell^2(\Gamma)$ for a countable set $\Gamma$, this is true for the canonical basis $e_j$, $j\in \Gamma$.

If $\Lambda$ is finite and ${\mathcal H} = {\mathcal H}_{\Lambda} = L^2(\R^{\Lambda}, dq)$ as in (\ref{eq:L2spacedef}), then for each $f\in \ell^2(\Lambda)$ the Weyl operators are defined by (\ref{eq:concreteWeyl}). The Weyl operators are unitary and their transposes with respect to natural conjugation are given by
\begin{equation} \label{eq:weyltranspose}
(W(f))^T = W(\bar{f}).
\end{equation} 
This is seen by combining the Weyl relations (\ref{eq:weylrel}) with the facts
\begin{equation} \label{eq:multoptranspose}
\phi(q)^T = \phi(q), \quad \phi(p)^T = \phi(-p),
\end{equation}
where, as usual, $\phi(q)$ and $\phi(p)$ denote multiplication operators in $q$ and $p$ space, respectively. In particular, $\phi(p) = F^{-1} \phi(q) F$ for the Fourier transform $(Ff)(y) = (2\pi)^{-1/2} \int e^{-ixy} f(x)\,dx$. Verifying the second one of the identities (\ref{eq:multoptranspose}) uses that $FC =RCF$, where $Rf(x)=f(-x)$.

\end{ex}

\subsection{Partial Transposes} \label{subsec:part}

We will also need the concept of a {\it partial transpose} of suitable classes of linear operators on tensor product spaces. For this, let ${\mathcal H}_1$ and ${\mathcal H}_2$ be separable complex Hilbert spaces and ${\mathcal H} = {\mathcal H}_1 \otimes {\mathcal H}_2$. Let $C$ be a conjugation on ${\mathcal H}_1$ and $A \mapsto A^T$ transposition with respect to $C$ on $B({\mathcal H}_1)$.

For tensor products $A\otimes B$ with $A\in B({\mathcal H}_1)$ and $B\in B({\mathcal H}_2)$, we define a partial transpose with respect to the first component of the tensor product as
\begin{equation} \label{eq:parttrans1}
(A \otimes B)^{T_1} := A^T \otimes B = CA^*C \otimes B.
\end{equation}

For finite dimensional Hilbert spaces ${\mathcal H}_1$ and ${\mathcal H}_2$ one can uniquely extend the definition of partial transpose to general $S \in B({\mathcal H})$ by imposing that the mapping $S\mapsto S^{T_1}$ is linear from $B({\mathcal H})$ to $B({\mathcal H})$. However, in the cases relevant here the Hilbert spaces have infinite dimension. In this case, an attempt to linearly extend (\ref{eq:parttrans1}) to general $S\in B({\mathcal H})$ will have to allow {\it unbounded} partial transposes $S^{T_1}$. This is due to the fact (e.g.\ \cite{AndoSano}) that in the finite-dimensional case the norm of the mapping $S\mapsto S^{T_1}$ as a linear operator on $B({\mathcal H})$ is given by $\min({\rm dim}\,{\mathcal H}_1,\, {\rm dim}\,{\mathcal H}_2)$ and thus increases with the dimensions. From this it is not hard to construct an example of a bounded operator on an infinite dimensional space with unbounded partial transpose.

We can avoid dealing with these issues here as in all our applications partial transposes will only have to be considered for operators which fall in one of two special classes of bounded operators, whose partial transposes are easily seen to be bounded. One of these classes are products $A\otimes B$, which are covered by (\ref{eq:parttrans1}), so that $\|(A\otimes B)^{T_1}\| = \|A^T \otimes B\| = \|A \otimes B\|$.

Another convenient class are the Hilbert-Schmidt operators on ${\mathcal H}$, which we will denote by $B_2({\mathcal H})$. Thus, let $S\in B_2({\mathcal H})$ and $\{e_j\}$ be the ONB of ${\mathcal H}_1$ associated with $C$ via (\ref{eq:specialONB}). Also, let $\{f_k\}$ be any ONB of ${\mathcal H}_2$. Define a linear operator $S^{T_1}$ which acts on the basis vectors $e_j \otimes f_k$ of ${\mathcal H}$ as
\begin{equation} \label{eq:parttrans2}
S^{T_1} e_j \otimes f_k = \sum_{m,\ell} \langle e_j \otimes f_{\ell}, S e_m \otimes f_k \rangle e_m \otimes f_{\ell}.
\end{equation}
Note that the meaning of partial transpose is reflected in the fact that its matrix elements are
\begin{equation} \label{eq:ptmatrix}
\langle e_m \otimes f_{\ell}, S^{T_1} e_j \otimes f_k \rangle = \langle e_j \otimes f_{\ell}, S e_m \otimes f_k \rangle,
\end{equation}
which should be compared with (\ref{eq:chartranspose}). Up to rearrangement, $S^{T_1}$ has the same matrix-elements as $S$, meaning that the Hilbert-Schmidt norm is preserved under partial transposition:
\begin{eqnarray}
\|S^{T_1}\|_2^2 & = & \sum_{j,k} \|S^{T_1} e_j \otimes f_k\|^2 = \sum_{j,k,m,\ell} |\langle e_j \otimes f_{\ell}, Se_m \times f_k \rangle|^2 \\
& = & \sum_{m,k} \|S e_m \otimes f_k \|^2 = \|S\|_2^2. \nonumber
\end{eqnarray}
In particular, the operator $S^{T_1}$ defined by (\ref{eq:parttrans2}) on a tensor product basis has a unique extension to a bounded operator on ${\mathcal H}$.

One can check that the definitions (\ref{eq:parttrans1}) and (\ref{eq:parttrans2}) are consistent, meaning that for $S=A\otimes B$, $A\in B_2({\mathcal H}_1)$, $B\in B_2({\mathcal H}_2)$, both definitions of $S^{T_1}$ coincide. Another property which holds for both classes is that $(S^{T_1})^{T_1} = S$.

For the next result we denote the trace class operators on ${\mathcal H}$ by $B_1({\mathcal H})$.

\begin{lemma} \label{lem:ptlemma}
(a) Let $S \in B_1({\mathcal H})$ such that $S^{T_1} \in B_1({\mathcal H})$. Also, let $A \in B({\mathcal H}_1)$ and $B\in B({\mathcal H}_2)$. Then
\begin{equation} \label{eq:ptshift}
{\rm Tr}\,S^{T_1}(A\otimes B) = {\rm Tr}\, S(A\otimes B)^{T_1}.
\end{equation}

(b) If $S, R \in B_1(\mathcal{H})$ are such that
\begin{equation} \label{eq:weakptcheck}
{\rm Tr}\,S\,(A\otimes B)^{T_1} = {\rm Tr}\,R\,(A\otimes B)
\end{equation}
for all $A\in B(\mathcal{H}_1)$ and $B\in B(\mathcal{H}_2)$, then $R=S^{T_1}$ and, in particular, $S^{T_1} \in B_1(\mathcal{H})$.
\end{lemma}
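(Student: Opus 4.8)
The plan is to deduce (a) from the symmetry of the partial transpose under the trace pairing on Hilbert--Schmidt operators, and to reach the general bounded $A\otimes B$ by a finite-rank truncation; for (b) I will simply test the hypothesis against rank-one operators and read off matrix elements. Writing the ONB of $\mathcal{H}$ as $e_j\otimes f_k$ (with $\{e_j\}$ the basis associated with $C$) and using the matrix-element identity (\ref{eq:ptmatrix}), a direct index computation shows that for any $S,T\in B_2(\mathcal{H})$ one has
\[
{\rm Tr}\,S^{T_1}T = {\rm Tr}\,S\,T^{T_1},
\]
the reordering of the (now quadruple) sum being justified by absolute convergence via Cauchy--Schwarz and the finiteness of $\|S\|_2\,\|T\|_2$. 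This is the exact analogue of the elementary identity ${\rm Tr}\,A^TB={\rm Tr}\,A\,B^T$ for the full transpose, performed in the first tensor factor only.

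The obstacle is that $A\otimes B$ is merely bounded, not Hilbert--Schmidt, so I cannot set $T=A\otimes B$ directly. To circumvent this, let $P_N$ and $Q_N$ be the coordinate projections onto the spans of the first $N$ basis vectors of $\mathcal{H}_1$ and $\mathcal{H}_2$, and put $A_N=P_NAP_N$, $B_N=Q_NBQ_N$, which are finite rank and hence Hilbert--Schmidt. Since $Ce_j=e_j$ one checks $P_N^T=P_N$, so that $(A_N\otimes B_N)^{T_1}=A_N^T\otimes B_N=P_NA^TP_N\otimes B_N$. Applying the Hilbert--Schmidt identity (with $S\in B_1\subset B_2$ and $T=A_N\otimes B_N$) gives ${\rm Tr}\,S^{T_1}(A_N\otimes B_N)={\rm Tr}\,S(A_N\otimes B_N)^{T_1}$ for every $N$. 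It then remains to pass to the limit. Here $A_N\otimes B_N\to A\otimes B$ and $A_N^T\otimes B_N\to A^T\otimes B$ strongly, with norms bounded by $\|A\|\,\|B\|$ uniformly in $N$. Invoking the trace-class hypotheses on \emph{both} $S^{T_1}$ and $S$, together with the standard fact that ${\rm Tr}\,KT_N\to{\rm Tr}\,KT$ whenever $K\in B_1$ and $T_N\to T$ strongly with $\sup_N\|T_N\|<\infty$ (proved by dominated convergence applied to the singular-value expansion of $K$), both sides converge to the two members of (\ref{eq:ptshift}). This limiting step, where the trace-class assumptions are genuinely used, is the analytic heart of the argument.

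For (b), apply the hypothesis (\ref{eq:weakptcheck}) to the rank-one choices $A=|e_m\rangle\langle e_j|$ and $B=|f_\ell\rangle\langle f_k|$. Using $Ce_j=e_j$ one computes $A^T=|e_j\rangle\langle e_m|$, hence $(A\otimes B)^{T_1}=|e_j\otimes f_\ell\rangle\langle e_m\otimes f_k|$, so that the two sides of (\ref{eq:weakptcheck}) evaluate respectively to $\langle e_m\otimes f_k,\,S\,e_j\otimes f_\ell\rangle$ and $\langle e_j\otimes f_k,\,R\,e_m\otimes f_\ell\rangle$. After relabelling indices, this is precisely the statement that $R$ and the Hilbert--Schmidt partial transpose $S^{T_1}$ (well defined since $S\in B_1\subset B_2$) have identical matrix elements in the ONB $\{e_j\otimes f_k\}$, as dictated by (\ref{eq:ptmatrix}); therefore $R=S^{T_1}$. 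Since $R\in B_1$ by hypothesis, this simultaneously yields $S^{T_1}\in B_1$. Apart from the limiting argument in (a), the only work is careful bookkeeping of indices under partial transposition.
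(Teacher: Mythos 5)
Your proposal is correct, and part (b) is essentially identical to the paper's argument: you test \eqref{eq:weakptcheck} on rank-one operators $A=|e_m\rangle\langle e_j|$, $B=|f_\ell\rangle\langle f_k|$, compute both traces as matrix elements, and match them against \eqref{eq:ptmatrix}, using that $S^{T_1}\in B_2({\mathcal H})$ is already well defined because $S\in B_1({\mathcal H})\subset B_2({\mathcal H})$; this is exactly what the paper does, up to index labels. For part (a), however, your route is genuinely different. The paper proceeds by a single direct computation: it expands the diagonal matrix elements of $S^{T_1}(A\otimes B)$ in the product ONB, uses \eqref{eq:ptmatrix} together with $\langle e_j, Ae_\ell\rangle=\overline{\langle e_j,(A^T)^*e_\ell\rangle}$ to recognize them as the diagonal matrix elements of $(A^T\otimes I)S(I\otimes B)$, sums over the basis, and concludes with cyclicity of the trace, ${\rm Tr}\,S^{T_1}(A\otimes B)={\rm Tr}\,(A^T\otimes I)S(I\otimes B)={\rm Tr}\,S(A^T\otimes B)$; no limiting procedure appears, and the two trace-class hypotheses are used only to ensure that both traces exist (and that cyclicity applies). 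You instead first isolate the symmetric pairing identity ${\rm Tr}\,S^{T_1}T={\rm Tr}\,S\,T^{T_1}$ for $S,T\in B_2({\mathcal H})$ --- a clean statement of independent interest, correctly justified by Cauchy--Schwarz absolute convergence --- and then reach merely bounded $A\otimes B$ by finite-rank truncation $P_NAP_N\otimes Q_NBQ_N$ (using $P_N^T=P_N$ and the consistency of \eqref{eq:parttrans1} with \eqref{eq:parttrans2} on finite-rank products) followed by a strong-limit argument based on the standard fact that ${\rm Tr}\,KT_N\to{\rm Tr}\,KT$ when $K\in B_1({\mathcal H})$, $T_N\to T$ strongly and $\sup_N\|T_N\|<\infty$. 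Your version costs more machinery (three auxiliary steps versus the paper's one computation), but it makes completely explicit where each of the two hypotheses $S\in B_1({\mathcal H})$ and $S^{T_1}\in B_1({\mathcal H})$ is genuinely needed, namely once in each of the two limits; both arguments are complete and rigorous.
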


\begin{proof}
(a) Expanding $Ae_{\ell}$ and $Bf_m$ with respect to the bases $(e_j)$ and $(f_k)$, respectively, we get
\begin{equation} \label{eq:ptcalc}
\langle e_{\ell} \otimes f_m, S^{T_1} (A \otimes B) e_{\ell} \otimes f_m \rangle =  \sum_{j, k} \langle e_j, A e_{\ell} \rangle \langle f_k, B f_m \rangle \langle e_{\ell} \otimes f_m, S^{T_1} e_j \otimes f_k \rangle.
\end{equation}
Using (\ref{eq:ptmatrix}) as well as $\langle e_j, A e_{\ell} \rangle = \overline{ \langle e_j, (A^T)^* e_{\ell} \rangle}$, we see that the right hand side of (\ref{eq:ptcalc}) is equal to
\begin{equation}
\langle ((A^T)^* e_{\ell}) \otimes f_m, S e_{\ell} \otimes (Bf_m) \rangle = \langle e_{\ell} \otimes f_m, (A^T \otimes I)S(I\otimes B) e_{\ell} \otimes f_m \rangle.
\end{equation}
This allows to conclude by using cyclicity of the trace,
\begin{equation}
{\rm Tr}\,S^{T_1}(A \otimes B) = {\rm Tr}\,(A^T \otimes I)S(I\otimes B) = {\rm Tr}\, S(A^T \otimes B).
\end{equation}

(b) Note that it is not a priori clear that $S^{T_1} \in B_1(\mathcal{H})$. However, as $S\in B_1(\mathcal{H}) \subset B_2(\mathcal{H})$, we know from the above that $S^{T_1} \in B_2(\mathcal{H})$. Thus it is a bounded operator on $\mathcal{H}$ and therefore characterized by its matrix elements $\langle e_k \otimes f_m, S^{T_1} e_j \otimes f_{\ell} \rangle$.

Choosing $A=|e_j\rangle \langle e_k|$ and $B=|f_{\ell}\rangle \langle f_m|$ we find that ${\rm Tr}\,R(A\otimes B) = \langle e_k \otimes f_m, R e_j \otimes f_{\ell} \rangle$, while ${\rm Tr}\,S(A\otimes B)^{T_1} = \langle e_k \otimes f_m, S^{T_1} e_j \otimes f_{\ell} \rangle$. We conclude from (\ref{eq:weakptcheck}) that $R$ and $S^{T_1}$ have the same matrix elements and thus coincide.
\end{proof}

For the special case of $L^2$-spaces, it suffices to verify (\ref{eq:weakptcheck}) for Weyl operators:

\begin{cor} \label{cor:ptlemma}
Let $\mathcal{H}_1 = L^2(\mathbb{R}^{\Lambda_1})$, $\mathcal{H}_2 = L^2(\mathbb{R}^{\Lambda_2})$, $\mathcal{H} = L^2(\mathbb{R}^{\Lambda_1 \cup \Lambda_2})$, and let $C \mapsto C^{T_1}$ denote the partial transpose with respect to natural conjugation in the first component $\mathcal{H}_1$ of $\mathcal{H}$. If $S, R \in B_1(\mathcal{H})$ are such that
\begin{equation} \label{eq:ptcheck2}
{\rm Tr}\,S\, W(f)^{T_1} = {\rm Tr}\,R\,W(f)
\end{equation}
for all $f \in \ell^2(\Lambda_1 \cup \Lambda_2)$, then $R= S^{T_1}$.
\end{cor}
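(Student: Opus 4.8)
The plan is to reduce Corollary~\ref{cor:ptlemma} to Lemma~\ref{lem:ptlemma}(b) by upgrading the hypothesis: (\ref{eq:ptcheck2}) is only assumed for Weyl operators, whereas Lemma~\ref{lem:ptlemma}(b) requires the identity (\ref{eq:weakptcheck}) for all products $A\otimes B$. The bridge between the two is the weak density of the Weyl operators in $B(\mathcal{H}_i)$ that was already invoked in the proof of Lemma~\ref{lem:weakdense}.

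First I would rewrite the assumption in product form. Any $f\in \ell^2(\Lambda_1\cup\Lambda_2)$ decomposes via $f^{(1)}=f|_{\Lambda_1}$, $f^{(2)}=f|_{\Lambda_2}$ with $W(f)=W(f^{(1)})\otimes W(f^{(2)})$, and by (\ref{eq:weyltranspose}) and (\ref{eq:parttrans1}) one has $W(f)^{T_1}=(W(f^{(1)})\otimes W(f^{(2)}))^{T_1}$, so (\ref{eq:ptcheck2}) reads
\[
\Tr\,S\,(W(f^{(1)})\otimes W(f^{(2)}))^{T_1} = \Tr\,R\,(W(f^{(1)})\otimes W(f^{(2)}))
\]
for all $f^{(1)}\in\ell^2(\Lambda_1)$ and $f^{(2)}\in\ell^2(\Lambda_2)$. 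I then introduce the two functionals $\phi(A,B)=\Tr\,S\,(A\otimes B)^{T_1}=\Tr\,S\,(A^T\otimes B)$ and $\psi(A,B)=\Tr\,R\,(A\otimes B)$ on $B(\mathcal{H}_1)\times B(\mathcal{H}_2)$, both well defined since $S,R\in B_1(\mathcal{H})$ while $A^T\otimes B$ and $A\otimes B$ are bounded. Each is separately linear in $A$ and in $B$ (using that transposition $A\mapsto A^T=CA^*C$ is complex-linear), and the goal becomes to show $\phi=\psi$ everywhere, knowing equality on Weyl pairs.

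The core step is the density argument, carried out one tensor slot at a time. Fixing $B=W(f^{(2)})$, I observe that $A\mapsto\psi(A,W(f^{(2)}))$ and $A\mapsto\phi(A,W(f^{(2)}))$ are weakly continuous on bounded subsets of $B(\mathcal{H}_1)$: writing $\Tr\,R\,(A\otimes B)=\Tr\,\bigl(\Tr_{\mathcal{H}_2}[(\idty\otimes B)R]\,A\bigr)$ exhibits the first as pairing $A$ against a fixed trace-class operator, and the second is handled the same way after noting that $A\mapsto A^T$ is weakly continuous (it preserves the norm and satisfies $\langle u, A^Tv\rangle=\langle Cv, ACu\rangle$). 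Since the linear span of the $W(f^{(1)})$ is weakly dense in $B(\mathcal{H}_1)$ (von Neumann's double commutant theorem together with irreducibility, exactly as in Lemma~\ref{lem:weakdense}), and since by Kaplansky's density theorem the approximating net may be taken norm-bounded, equality of $\phi$ and $\psi$ on the span of the $W(f^{(1)})$ propagates to all $A\in B(\mathcal{H}_1)$. Repeating the argument in the second slot, now with $A\in B(\mathcal{H}_1)$ arbitrary and approximating $B$ by spans of the $W(f^{(2)})$, yields $\phi(A,B)=\psi(A,B)$ for all $A\in B(\mathcal{H}_1)$ and $B\in B(\mathcal{H}_2)$. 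This is precisely (\ref{eq:weakptcheck}), so Lemma~\ref{lem:ptlemma}(b) applies and gives $R=S^{T_1}$, with $S^{T_1}\in B_1(\mathcal{H})$ as a byproduct.

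I expect the main obstacle to be the continuity-and-boundedness bookkeeping in the density step: one must confirm that transposition with respect to the conjugation $C$ is weakly continuous (so that the twisted functional $\phi(\cdot,B)$ is continuous), and that the nets approximating a given $A$ or $B$ can be kept norm-bounded, so that the trace pairing against the trace-class operators $S$ and $R$ survives the weak limit. The algebraic input---separate linearity in each slot and the product factorization of $W(f)$ and its partial transpose---is routine; it is the interchange of weak limit and trace that requires care.
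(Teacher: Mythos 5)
Your proposal is correct and takes essentially the same route as the paper: factor $W(f)=W(f^{(1)})\otimes W(f^{(2)})$, use weak density of the Weyl algebras in $B(\mathcal{H}_j)$ (irreducibility plus the double commutant theorem, as in Lemma~\ref{lem:weakdense}) to upgrade the hypothesis to all $A\otimes B$, and then invoke Lemma~\ref{lem:ptlemma}(b). Your explicit appeal to Kaplansky's density theorem to keep the approximating nets norm-bounded is a careful touch --- the paper passes weak limits through the trace pairing without comment, and uniform boundedness is indeed what justifies that step.
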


\begin{proof}
This can be reduced to Lemma~\ref{lem:ptlemma}(b) using the arguments in the proof of Lemma~\ref{lem:weakdense}. First, (\ref{eq:ptcheck2}) implies ${\rm Tr}\,S(W_1 \otimes W_2)^{T_1} = {\rm Tr}\,R(W_1\otimes W_2)$ for all $W_1 \in \mathcal{W}_{\Lambda_1}$, $W_2 \in \mathcal{W}_{\Lambda_2}$. 

As $\mathcal{W}_{\Lambda_j}$ is weakly dense in $B(\mathcal{H}_{\Lambda_j})$, $j=1,2$, for all $A\in B(\mathcal{H}_{\Lambda_1})$, $B\in B(\mathcal{H}_{\Lambda_1})$ there are $W_{j,n} \in \mathcal{W}_{\Lambda_j}$, $j=1,2$, $n=1,2,\ldots$, such that $W_{1,n} \stackrel{w}{\to} A$ and $W_{2,n} \stackrel{w}{\to} B$. Weak convergence is conserved under transposition and tensor products, i.e.\
\begin{equation}
W_{1,n} \otimes W_{2,n} \stackrel{w}{\to} A \otimes B, \quad W_{1,n}^T \otimes W_{2,n} \stackrel{w}{\to} A^T \otimes B.
\end{equation}
We conclude that (\ref{eq:weakptcheck}) holds. Thus Lemma~\ref{lem:ptlemma}(b) completes the proof.
\end{proof}

Unitary equivalence of transposes (\ref{eq:equivtrans}) with respect to different conjugations extends to partial transposes:

\begin{lemma} \label{lem:equivpt}
Let $C$ and $\tilde{C}$ be conjugations in ${\mathcal H}_1$, which are unitarily equivalent via $U$ as in (\ref{eq:equivconj}). Denote the corresponding transpose maps in $B({\mathcal H}_1)$ by $A\mapsto A^T$ and $A\mapsto A^{\tilde{T}}$. Let $S \in B_2({\mathcal H})$ or $S=A \otimes B$, $A\in B({\mathcal H}_1)$, $B\in B({\mathcal H}_2)$. Then $S^{T_1}$ and $S^{\tilde{T}_1}$ are unitarily equivalent via $V=(U^TU) \otimes I$,
\begin{equation} \label{eq:equivpt}
S^{\tilde{T}_1} = V^* S^{T_1} V.
\end{equation}
\end{lemma}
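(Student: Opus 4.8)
The plan is to treat the two admissible classes of operators separately and, in both cases, to reduce everything to the already-established unitary equivalence \eqref{eq:equivtrans} of the full transposes $A \mapsto A^T$ and $A \mapsto A^{\tilde{T}}$ on $B(\mathcal{H}_1)$, namely $A^{\tilde{T}} = W^* A^T W$ with $W := U^T U$. Note first that $V = W \otimes I$ is unitary, since $U$, and hence $U^T$, is unitary, so $W$ is a product of unitaries.

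For an elementary tensor $S = A \otimes B$ the claim is immediate. Indeed, by the definition \eqref{eq:parttrans1} of the partial transpose on products and by \eqref{eq:equivtrans},
\[
S^{\tilde{T}_1} = A^{\tilde{T}} \otimes B = (W^* A^T W) \otimes B = (W \otimes I)^* (A^T \otimes B)(W \otimes I) = V^* S^{T_1} V,
\]
using $(W \otimes I)^* = W^* \otimes I$. This case is purely formal once \eqref{eq:equivtrans} is in hand.

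For $S \in B_2(\mathcal{H})$ I would argue by density and continuity. All three maps $S \mapsto S^{T_1}$, $S \mapsto S^{\tilde{T}_1}$, and $S \mapsto V^* S^{T_1} V$ are bounded linear operators on $(B_2(\mathcal{H}), \|\cdot\|_2)$: the first two are isometries by the Hilbert--Schmidt norm identity $\|S^{T_1}\|_2 = \|S\|_2$ proved above (which holds verbatim for $\tilde{C}$ with its associated ONB), and the third composes that isometry with conjugation by the unitary $V$, which preserves $\|\cdot\|_2$. The rank-one operators $|e_j \otimes f_\ell\rangle\langle e_m \otimes f_k| = |e_j\rangle\langle e_m| \otimes |f_\ell\rangle\langle f_k|$ form an orthonormal basis of $B_2(\mathcal{H})$, so finite linear combinations of elementary tensors $A \otimes B$ with $A \in B_2(\mathcal{H}_1)$, $B \in B_2(\mathcal{H}_2)$ are dense in $B_2(\mathcal{H})$. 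On each such tensor the two definitions \eqref{eq:parttrans1} and \eqref{eq:parttrans2} of the partial transpose agree (the consistency noted in the text), so \eqref{eq:equivpt} holds there by the computation above, and it then extends to all of $B_2(\mathcal{H})$ by continuity.

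Conceptually, both partial transposes are the ampliations $\Phi \otimes \mathrm{id}$ and $\tilde{\Phi} \otimes \mathrm{id}$ of the transpose maps $\Phi, \tilde{\Phi}$ on $B_2(\mathcal{H}_1)$, and \eqref{eq:equivtrans} reads $\tilde{\Phi} = \mathrm{Ad}_W \circ \Phi$; tensoring with the identity and using $\mathrm{Ad}_W \otimes \mathrm{id} = \mathrm{Ad}_{W \otimes I} = \mathrm{Ad}_V$ yields \eqref{eq:equivpt} at once. I expect the only genuine care to be in the last step, namely verifying that the partial transpose really distributes as $\Phi \otimes \mathrm{id}$ on $B_2(\mathcal{H})$; this is precisely what the density-and-continuity argument legitimizes.
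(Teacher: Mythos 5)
Your proposal is correct, and while the elementary-tensor case coincides with the paper's (both are one-line consequences of (\ref{eq:equivtrans})), your treatment of the Hilbert--Schmidt case takes a genuinely different route. The paper proves that case by a direct matrix-element computation: it verifies
\begin{equation*}
\langle \tilde{e}_m \otimes f_{\ell}, S^{\tilde{T}_1} \tilde{e}_j \otimes f_k \rangle = \langle \tilde{e}_m \otimes f_{\ell}, V^* S^{T_1} V \tilde{e}_j \otimes f_k \rangle
\end{equation*}
for all indices by showing that both sides equal $\langle e_j \otimes f_{\ell}, (U \otimes I) S (U^* \otimes I)\, e_m \otimes f_k \rangle$, expanding $U^T e_m$ and $U^T e_j$ in the basis $\{e_r\}$ and using (\ref{eq:ptmatrix}) twice. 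You instead note that $S \mapsto S^{T_1}$, $S \mapsto S^{\tilde{T}_1}$, and $S \mapsto V^* S^{T_1} V$ are bounded (indeed isometric) linear maps on $B_2(\mathcal{H})$, that the rank-one operators $|e_j\rangle\langle e_m| \otimes |f_\ell\rangle\langle f_k|$ span a dense subspace, and that the maps agree there by the tensor-product case; continuity does the rest. Your argument is shorter and more conceptual --- it makes explicit that the partial transposes are ampliations $\Phi \otimes \mathrm{id}$ and $\tilde{\Phi} \otimes \mathrm{id}$ and that (\ref{eq:equivtrans}) simply tensors through --- but it leans on the consistency of the two definitions (\ref{eq:parttrans1}) and (\ref{eq:parttrans2}) on elementary tensors, a fact the paper asserts (``one can check'') without proof and which its own computation never needs; it also uses that the isometry identity $\|S^{T_1}\|_2 = \|S\|_2$ holds verbatim for $\tilde{C}$, which is true but worth saying, as you do. Both proofs are complete; yours trades the paper's self-contained double-sum computation for previously established structural facts plus density.
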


\begin{proof}
It is very likely that (\ref{eq:equivpt}) will hold for larger classes of operators $S$ if one goes through the effort of properly extending the definition of partial transpose, but it is most convenient here to merely check (\ref{eq:equivpt}) separately for the two cases covered by (\ref{eq:parttrans1}) and (\ref{eq:parttrans2}).

For tensor products $S = A \otimes B$, (\ref{eq:equivpt}) follows immediately from (\ref{eq:equivtrans}).

For $S\in B_2({\mathcal H})$ we need to verify that
\begin{equation} \label{eq:checkthis}
\langle \tilde{e}_m \otimes f_{\ell}, S^{\tilde{T}_1} \tilde{e}_j \otimes f_k \rangle = \langle \tilde{e}_m \otimes f_{\ell}, V^* S^{T_1} V \tilde{e}_j \otimes f_k \rangle
\end{equation}
for all $m$, $\ell$, $j$, $k$, using the ONBs introduced above. By (\ref{eq:defU}) and (\ref{eq:ptmatrix}) the left hand side of (\ref{eq:checkthis}) is equal to
\begin{equation} \label{eq:lhsofcheckthis}
\langle e_j \otimes f_{\ell}, (U \otimes I) S (U^* \otimes I) e_m \otimes f_k \rangle.
\end{equation}
For the right hand side of (\ref{eq:checkthis}) one gets
\begin{eqnarray}
\langle (U^T e_m) \otimes f_{\ell}, S^{T_1} (U^T e_j) \otimes f_k \rangle
& = & \sum_{r,s} \overline{\langle e_m, Ue_r \rangle} \langle e_j, Ue_s \rangle \langle e_r \otimes f_{\ell}, S^{T_1} e_s \otimes f_k \rangle \\
& = & \sum_{r,s} \langle e_r, U^* e_m \rangle \overline{\langle e_s, U^* e_j \rangle} \langle e_s \otimes f_{\ell}, S e_r \otimes f_k \rangle \nonumber \\
& = & \langle (U^* e_j) \otimes f_{\ell}, S(U^* e_m) \otimes f_k \rangle. \nonumber
\end{eqnarray}
The latter coincides with (\ref{eq:lhsofcheckthis}).

\end{proof}

A consequence of Lemma~\ref{lem:equivpt} which we will use later is that $S^{T_1} \in B_1({\mathcal H})$ if and only if $S^{\tilde{T}_1} \in B_1({\mathcal H})$ and that in this case
\begin{equation} \label{eq:tracestability}
\|S^{T_1}\|_1 = \| S^{\tilde{T}_1} \|_1.
\end{equation}

\subsection{Entanglement bounds for pure states} \label{subsec:ebps}
In this final section, we prove - mainly for the sake of completeness - a result
from \cite{Vidal2002}, see Lemma~\ref{lem:purebd} below. 

Let us briefly recall some basic definitions. Let $\mathcal{H}_1$ and $\mathcal{H}_2$ be 
separable complex Hilbert spaces and set $\mathcal{H} = \mathcal{H}_1 \otimes \mathcal{H}_2$. 
For any normalized vector $\Omega \in \mathcal{H}$, denote by $\rho$ the
orthogonal projection onto $\Omega$. By $\rho_1$, we will denote the restriction of
$\rho$ to $\mathcal{H}_1$, i.e.
\begin{equation}
\rho_1 = {\rm Tr}_{\mathcal{H}_2} \rho
\end{equation}
In the result we prove below, two quantities are of interest. The first, $S(\rho_1)$, is
the von Neumann entropy of this restriction:
\begin{equation}
S( \rho_1) = - {\rm Tr} \,\rho_1 \log( \rho_1) \, .
\end{equation}
The second is the logarithmic negativity of $\rho$ with respect to the
above decomposition. To define it, we chose a conjugation on $\mathcal{H}_1$ and set 
\begin{equation}
\mathcal{N}( \rho) = \log( \| \rho^{T_1} \|_1 ) \, 
\end{equation}
where $\rho^{T_1}$ is the partial transpose of $\rho$ with respect to this conjugation. By (\ref{eq:tracestability}) this definition is independent of the choice of conjugation. Note that
if $\rho^{T_1} \notin B_1( \mathcal{H})$, then the above is interpreted as infinite. The main result is
as follows:

\begin{lemma}[Vidal-Werner] \label{lem:purebd} Under the assumptions above,
\begin{equation} \label{eq:SNbd}
S(\rho_1) \leq \mathcal{N}(\rho) \, .
\end{equation}
\end{lemma}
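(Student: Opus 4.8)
The plan is to use the Schmidt decomposition of the pure state $\Omega$ to make both sides of \eqref{eq:SNbd} completely explicit. Write $\Omega = \sum_k \sqrt{p_k}\, u_k \otimes v_k$ with $\{u_k\}$ an orthonormal system in $\mathcal{H}_1$, $\{v_k\}$ an orthonormal system in $\mathcal{H}_2$, and $p_k \geq 0$ with $\sum_k p_k = 1$. First I would record the immediate consequence that $\rho_1 = \mathrm{Tr}_{\mathcal{H}_2}\rho = \sum_k p_k\, |u_k\rangle\langle u_k|$, so that the left-hand side is simply the Shannon entropy $S(\rho_1) = -\sum_k p_k \log p_k$.

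Next I would compute the partial transpose $\rho^{T_1}$ explicitly. Since $\rho = |\Omega\rangle\langle\Omega| = \sum_{j,k} \sqrt{p_j p_k}\, |u_j\rangle\langle u_k| \otimes |v_j\rangle\langle v_k|$, applying the partial transpose term by term using \eqref{eq:parttrans1} gives $\rho^{T_1} = \sum_{j,k} \sqrt{p_j p_k}\, (|u_j\rangle\langle u_k|)^{T} \otimes |v_j\rangle\langle v_k|$. For the computation to be clean one wants the conjugation $C$ on $\mathcal{H}_1$ to fix the $u_k$; by the freedom established around \eqref{eq:tracestability} the definition of $\mathcal{N}(\rho)$ is independent of the chosen conjugation, so I am free to extend $\{u_k\}$ to an ONB adapted to $C$ in the sense of \eqref{eq:specialONB2}. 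With that choice $(|u_j\rangle\langle u_k|)^T = |u_k\rangle\langle u_j|$, and hence
\begin{equation}
\rho^{T_1} = \sum_{j,k} \sqrt{p_j p_k}\, |u_k\rangle\langle u_j| \otimes |v_j\rangle\langle v_k|.
\end{equation}
I would then diagonalize this operator: grouping the terms, one checks that $\rho^{T_1}$ acts as a direct sum of a diagonal part $\sum_k p_k\, |u_k\otimes v_k\rangle\langle u_k \otimes v_k|$ and, for each pair $j<k$, a two-dimensional block on $\mathrm{span}\{u_j\otimes v_k,\, u_k\otimes v_j\}$ with the off-diagonal entries $\sqrt{p_j p_k}$. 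Each such $2\times 2$ block is $\sqrt{p_j p_k}$ times the swap, whose eigenvalues are $\pm\sqrt{p_j p_k}$.

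From the eigenvalue list I would read off $\|\rho^{T_1}\|_1 = \sum_k p_k + 2\sum_{j<k} \sqrt{p_j p_k} = \sum_k p_k + \sum_{j\neq k}\sqrt{p_j p_k} = \left(\sum_k \sqrt{p_k}\right)^2$, so that
\begin{equation}
\mathcal{N}(\rho) = \log\left(\sum_k \sqrt{p_k}\right)^2 = 2\log\sum_k \sqrt{p_k}.
\end{equation}
The inequality \eqref{eq:SNbd} then reduces to the elementary estimate $-\sum_k p_k\log p_k \leq 2\log\sum_k\sqrt{p_k}$, which follows from Jensen's inequality applied to the concave function $\log$: writing $-\sum_k p_k \log p_k = \sum_k p_k \log(1/p_k) = 2\sum_k p_k \log(p_k^{-1/2}) \leq 2\log\left(\sum_k p_k\cdot p_k^{-1/2}\right) = 2\log\sum_k\sqrt{p_k}$.

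The main obstacle I anticipate is not the algebra but the functional-analytic bookkeeping in infinite dimensions: justifying the term-by-term partial transpose and the trace-norm computation when the Schmidt rank is infinite. The cleanest route is to observe that $\rho$ is rank one, hence trace class, hence Hilbert-Schmidt, so that $\rho^{T_1}$ is well-defined via \eqref{eq:parttrans2} and the diagonalization above identifies it as an honest (possibly trace class, possibly not) self-adjoint operator with the stated eigenvalues. If $\sum_k \sqrt{p_k} = \infty$ then $\rho^{T_1}\notin B_1(\mathcal{H})$ and $\mathcal{N}(\rho) = \infty$ by convention, so the inequality holds trivially; otherwise the eigenvalue series is absolutely summable and $\rho^{T_1}\in B_1(\mathcal{H})$ with the computed trace norm, so one must only take care that the eigenvalue expansion converges in trace norm, which follows from absolute summability of the $\pm\sqrt{p_jp_k}$.
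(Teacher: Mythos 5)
Your proposal is correct and takes essentially the same route as the paper's proof: Schmidt decomposition, a conjugation chosen (via the independence statement around \eqref{eq:tracestability}) to fix the Schmidt basis of $\mathcal{H}_1$, the explicit evaluation $\|\rho^{T_1}\|_1 = \bigl(\sum_k \sqrt{p_k}\bigr)^2$, and Jensen's inequality to compare with the entropy. The only cosmetic difference is in how the trace norm is computed: you diagonalize $\rho^{T_1}$ explicitly into $2\times 2$ swap blocks with eigenvalues $\pm\sqrt{p_j p_k}$, while the paper factors $\rho^{T_1} = F(C_1 \otimes C_2)$ with $F$ unitary and $C_1 \otimes C_2$ positive and invokes unitary invariance of the trace norm; both computations are valid and yield the same value.
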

\begin{proof}
We begin by writing $\Omega$ in terms of its Schmidt decomposition, i.e.,
\begin{equation}
\Omega = \sum_{\alpha} c_{\alpha} | e_{\alpha} \otimes f_{\alpha} \rangle
\end{equation}
where each of $\{ e_{\alpha} \} \subset \mathcal{H}_1$ and $\{ f_{\beta} \} \subset \mathcal{H}_2$ are orthonormal sets
indexed by a common, countable set. Without loss of generality, we will take $c_{\alpha}>0$. By normalization,
it is clear that  $\sum_{\alpha}c_{\alpha}^2 = 1$. Both quantities of interest in (\ref{eq:SNbd}) can be calculated
in terms of these coefficients $c_{\alpha}$. 

The restriction $\rho_1$ can be computed as
\begin{equation}
\rho_1 = \sum_{\alpha} c_{\alpha}^2 | e_{\alpha} \rangle \langle e_{\alpha} |
\end{equation}
from which it is clear that the $c_{\alpha}^2$ are the eigenvalues of $\rho_1$. In this case,
\begin{eqnarray} \label{eq:entropybd}
S(\rho_1)  = - \sum_{\alpha} c_{\alpha}^2 \ln( c_{\alpha}^2) & = & 2 \sum_{\alpha} \ln \left( \frac{1}{c_{\alpha}} \right) c_{\alpha}^2 \\
& \leq & 2 \ln \left( \sum_{\alpha} \frac{1}{c_{\alpha}} c_{\alpha}^2 \right) = 2 \ln \left( \sum_{\alpha} c_{\alpha} \right) \nonumber
\end{eqnarray}
the inequality following from Jensen's inequality. Without further assumptions, it is possible that the
right hand side is infinite.

For the logarithmic negativity, we only consider the case that $\| \rho^{T_1} \|_1 < \infty$; otherwise the
bound (\ref{eq:SNbd}) is trivial. 
Extend the orthonormal set $\{ e_{\alpha} \} \subset \mathcal{H}_1$ to an orthonormal basis, select
a conjugation $\tilde{C}$ leaving this orthonormal basis invariant, and denote by $\tilde{T}_1$ the
corresponding partial transpose. As remarked above, we may use any partial transpose to calculate the logarithmic negativity. We will use $\tilde{T}_1$.

As is clear from (\ref{eq:ptmatrix}), we have that
\begin{equation}
\rho^{\tilde{T}_1} = \sum_{\alpha, \beta} c_{\alpha} c_{\beta} | e_{\beta} \otimes f_{\alpha} \rangle \langle e_{\alpha} \otimes f_{\beta} | \, .
\end{equation}
Now, introduce a linear mapping $F: \mathcal{H} \to  \mathcal{H}$ by declaring that
\begin{equation}
F \left(  | e_{\alpha} \otimes f_{\beta} \rangle \right) =  | e_{\beta} \otimes f_{\alpha} \rangle
\end{equation}
for all $\alpha, \beta$ corresponding to vectors in the Schmidt decomposition and extend by the identity otherwise.
Then $F$ is a unitary and furthermore,
\begin{equation}
\rho^{\tilde{T}_1} = \sum_{\alpha, \beta} c_{\alpha} c_{\beta} | e_{\beta} \otimes f_{\alpha} \rangle \langle e_{\alpha} \otimes f_{\beta} |  =
\sum_{\alpha, \beta} c_{\alpha} c_{\beta} F \left( | e_{\alpha} \otimes f_{\beta} \rangle \right) \langle e_{\alpha} \otimes f_{\beta} | = F( C_1 \otimes C_2)\,,
\end{equation}
where we have set
\begin{equation}
C_1 = \sum_{\alpha} c_{\alpha} | e_{\alpha} \rangle \langle e_{\alpha}| \quad \mbox{and} \quad C_2 = \sum_{\beta} c_{\beta} |f_{\beta} \rangle \langle f_{\beta} | \, .
\end{equation}
We have then that
\begin{equation}
\left\| \rho^{\tilde{T}_1} \right\|_1 = \left\| F(C_1 \otimes C_2) \right\|_1 =  \left\| C_1 \otimes C_2 \right\|_1 =  \left\| C_1 \right\|_1 \cdot \left\| C_2 \right\|_1 = \left( \sum_{\alpha}c_{\alpha} \right)^2 \, .
\end{equation}
The bound claimed in (\ref{eq:SNbd}) now follows from (\ref{eq:entropybd}).
\end{proof}

\bigskip


\begin{thebibliography}{A}

\bibitem{AizenmanMolchanov} M.\ Aizenman and S.\ Molchanov, Localization at large disorder and at extreme energies, Commun.\ Math.\ Phys.\ {\bf 157} (1993), 245--278.

\bibitem{AizenmanWarzel} M.\ Aizenman and S.\ Warzel, Localization bounds for multiparticle systems, Commun.\ Math.\ Phys.\ {\bf 290} (2009), 903--934.

\bibitem{Amico2008}
L.\ Amico, R.\ Fazio, A.\ Osterloh, and V.\ Vedral,
Entanglement in Many-Body Systems,
Rev.\ Mod.\ Phys.\ {\bf 80} (2008), 517--576.

\bibitem{AndoSano} T.\ Ando and T.\ Sano, Norm estimates of the partial transpose map on the tensor products of matrices, Positivity {\bf 12} (2008), 9--24.

\bibitem{Audenaert2002}
K.\ Audenaert, J.\ Eisert, and M.\ B.\ Plenio,
Entanglement properties of the harmonic chain,
Phys.\ Rev.\ A {\bf 66} (2002), 042327.

\bibitem{Bombelli1986}
L.\ Bombelli, R.K.\ Koul, J. Lee, and R.D.\ Sorkin,
Quantum source of entropy for black holes,
Phys.\ Rev.\ D {\bf 34} (1986) 373--383.

\bibitem{Brandao2012}
F.G.S.L.\ Brandao, M.\ Horodecki,
Exponential Decay of Correlations Implies Area Law,
arXiv:1206.2947.

\bibitem{BratRob} O.\ Bratteli and D.\ Robinson: Operator algebras and quantum statistical mechanics 2, 2nd ed., New York, NY, Springer Verlag, 1997

\bibitem{Bruneau2007}
L.\ Bruneau and J.\ Derezi\'nski,
Bogoliubov Hamiltonians and one-parameter groups of Bogoliubov transformations,
J.\ Math.\ Phys. {\bf 48} (2007), 022101.

\bibitem{Calabrese2012}
%arXiv:1206.3092
P.\ Calabrese, J.\ Cardy, and E.\ Tonni,
Entanglement negativity in quantum field theory,
Phys.\ Rev.\ Lett.\ {\bf 109} (2012), 130502.

\bibitem{CramerEisert2006} M.\ Cramer and J.\ Eisert, Correlations, spectral gap, and 
entanglement in harmonic quantum systems on generic lattices, 
New J.\ Phys.\ {\bf 8} (2006), 71.1--71.24.

\bibitem{Cramer2006}
M.\ Cramer, J.\ Eisert, M.\ B.\ Plenio, and J.\ Drei\ss ig,
Entanglement-area law for general bosonic harmonic lattice systems,
Phys.\ Rev.\ A {\bf 73} (2006), 012309.

\bibitem{Cramer2007}
M.\ Cramer, J.\ Eisert, M.\ B.\ Plenio,
Statistics Dependence of the Entanglement Entropy,
Phys.\ Rev.\ Lett.\  {\bf 98} (2007), 220603.

\bibitem{Eisert2010}
J.\ Eisert, M.\ Cramer, M.B.\ Plenio,
Area laws for the entanglement entropy,
Rev.\ Mod.\ Phys.\ {\bf 82} (2010) , 277--306.

\bibitem{Evenbly2010}
G.\ Evenbly and G.\ Vidal,
Entanglement renormalization in free bosonic systems: real-space versus 
momentum-space renormalization group transforms,
New J.\ Phys. {\bf 12} (2010) 025007.

\bibitem{HamzaSimsStolz} E.\ Hamza, R.\ Sims and G.\ Stolz, Dynamical Localization in 
Disordered Quantum Spin Systems, 
Commun.\ Math.\ Phys.\ {\bf 315} (2012), 215--239.

\bibitem{hastings2006}
M.B.\ Hastings,
An Area Law for One Dimensional Quantum Systems,
JSTAT (2007) P08024.

\bibitem{Hastings2010} M.\ Hastings, Quasi-adiabatic continuation of disordered systems: applications to correlations, Lieb-Schultz-Mattis, and Hall conductance, arXiv:1001.5280, (2010)

\bibitem{Man68} J.\ Manuceau and A.\ Verbeure, Quasi-Free States of the
CCR Algebra and Bogoliubov Transformations, Commun.\ Math.\ Phys.\ {\bf 9} (1968), 293 -- 302.

\bibitem{matsui2011}
T.\ Matsui,
Boundedness of Entanglement Entropy and Split Property of Quantum Spin Chains,
arXiv:1109.5778.

\bibitem{Messiah1999}
A.\ Messiah: Quantum Mechanics,
Dover Publications, 1999.

\bibitem{NSS}
B.\ Nachtergaele, R.\ Sims, and G.\ Stolz,
Quantum Harmonic Oscillator Systems with Disorder,
J.\ Stat.\ Phys.\ {\bf 149} (2012) , 969--1012 .

\bibitem{Ng2007}
H.\ T.\ Ng and K.\ Burnett,
Entanglement between atomic condensates in an optical lattice: Effects of interaction range.
Phys.\ Rev.\ A {\bf 75} (2007), 023601

\bibitem{Plenio2005}
M.\ B. Plenio, J.\ Eisert, J.\ Drei\ss ig, and M.\ Cramer,
Entropy, Entanglement, and Area: Analytical Results for Harmonic Lattice Systems,
Phys.\ Rev.\ Lett.\  {\bf 94} (2005), 060503.

\bibitem{ReedSimon2} M.\ Reed and B.\ Simon: Methods of Modern Mathematical Physics, Vol.\ {\bf 2}, Academic Press, San Diego, 1975

\bibitem{Schuch2006}
N.\ Schuch, J.\ I.\ Cirac, and M.\ M.\ Wolf,
Quantum states on Harmonic lattices.
Commun.\ Math.\ Phys.\ {\bf 267} (2006), 65--95.

\bibitem{Shale1962}
D.\ Shale,
Linear symmetries of free boson fields,
Trans.\ Am.\ Math.\ Soc.\ {\bf 103} (1962), 149--167.

\bibitem{SCS99}
R.\ Simon, S.\ Chaturvedi, and V.\ Srinivasan,
Congruences and canonical forms for a positive matrix: application to the 
Schweinler-Wigner extremum principle.
J.\ Math.\ Phys.\ {\bf 40} (1999), 3632--3642.

\bibitem{skrovseth2005}
S.\ O.\ Skr\o vseth,
Entanglement in bosonic systems.
Phys.\ Rev.\ A {\bf 72} (2005), 062305.

\bibitem{Stolz2011} G.\ Stolz, An introduction to the mathematics of Anderson localization. Entropy and the quantum II, 71Ð108, Contemp.\ Math., {\bf 552}, Amer.\ Math.\ Soc., Providence, RI, 2011

\bibitem{Ver} A.\ Verbeure: Many-Body Boson Systems, Half a Century Later, Springer-Verlag, Berlin-Heidelberg 2011.

\bibitem{Vidal2002}
G.\ Vidal and R.\ F.\ Werner,
Computable measure of entanglement,
Phys.\ Rev.\ A {\bf 65} (2002), 032314.

\bibitem{Weidmann} J.\ Weidmann: Linear Operators in Hilbert Spaces, Graduate Texts in
Mathematics, Vol.~{\bf 68}, Springer, New York, 1980

\bibitem{williamson1936}
J.\ Williamson, 
On the Algebraic Problem Concerning the Normal Forms of Linear Dynamical Systems,
Am.\ J.\ of Math.\ {\bf 58} (1936), 141--163. \

\end{thebibliography}
\end{document}